\documentclass[11pt,english]{article}

\usepackage[T1]{fontenc}
\usepackage[utf8]{inputenc}
\usepackage[english]{babel}
\usepackage[letterpaper,margin=1in]{geometry}

\usepackage{amsmath,amsthm,mathtools}
\usepackage{newtxtext}
\usepackage{newtxmath}
\usepackage{microtype}
\usepackage{array,booktabs}
\usepackage{float}
\usepackage{enumitem}
\usepackage{needspace}
\usepackage{setspace}
\usepackage{abstract}
\usepackage{titling}
\usepackage{titlesec}
\usepackage{etoolbox}
\usepackage[authoryear,round]{natbib}
\usepackage[unicode=true,pdfusetitle,
  bookmarks=true,bookmarksnumbered=false,bookmarksopen=false,
  breaklinks=true,colorlinks=false,pdfborder={0 0 0}]{hyperref}
\usepackage{tikz}
\usetikzlibrary{arrows,positioning,automata}

\allowdisplaybreaks
\setlist{leftmargin=2.1em,itemsep=0.2em,topsep=0.35em}
\titleformat{\section}
  {\large\bfseries}{\thesection.}{0.75em}{}
\titleformat{\subsection}
  {\normalsize\bfseries}{\thesubsection.}{0.75em}{}
\titleformat{\subsubsection}
  {\normalsize\itshape}{\thesubsubsection.}{0.75em}{}
\titlespacing*{\section}{0pt}{2.3ex plus 0.7ex minus 0.2ex}{1.1ex}
\titlespacing*{\subsection}{0pt}{1.7ex plus 0.5ex minus 0.2ex}{0.8ex}
\titlespacing*{\subsubsection}{0pt}{1.3ex plus 0.4ex minus 0.2ex}{0.5ex}

\pretitle{\begin{center}\Large\bfseries}
\posttitle{\par\end{center}\vspace{0.35em}}
\preauthor{\begin{center}\normalsize}
\postauthor{\par\end{center}}
\predate{\begin{center}\normalsize}
\postdate{\par\end{center}\vspace{0.6em}}

\AtBeginEnvironment{abstract}{\singlespacing}
\AtBeginEnvironment{verbatim}{\singlespacing\small}
\AtBeginEnvironment{thebibliography}{\singlespacing\fontsize{9}{9.8}\selectfont}

\providecommand{\assumptionname}{Assumption}

\providecommand{\claimname}{Claim}
\providecommand{\corollaryname}{Corollary}
\providecommand{\definitionname}{Definition}
\providecommand{\examplename}{Example}
\providecommand{\lemmaname}{Lemma}
\providecommand{\propositionname}{Proposition}
\providecommand{\remarkname}{Remark}
\providecommand{\theoremname}{Theorem}

\theoremstyle{plain}
\newtheorem{theorem}{\protect\theoremname}

\newtheorem{Proposition}{\protect\propositionname}
\newtheorem{proposition}[Proposition]{\protect\propositionname}

\newtheorem*{lem*}{\protect\lemmaname}

\newtheorem{corollary}{\protect\corollaryname}
\newtheorem{assumption}{\protect\assumptionname}

\theoremstyle{definition}

\theoremstyle{remark}

\newcommand{\W}{W}

\begin{document}
\begin{center}
  {\Large\bfseries The Attention Cost of Stable Matching\footnote{We thank Roy
  Allen, Francis Bloch, and Roberto Serrano for useful comments and encouragement.}}\\[0.55em]
  {\large Victor H. Aguiar and Dian Hong\footnote{Department of Economics,
  Simon Fraser University. Emails: vaguiarl@sfu.ca and
  dian\_hong@sfu.ca}}\\[0.25em]
  {\small This version: September 2026}
\end{center}
\vspace{0.35em}
\begin{abstract}
In large markets, scarce attention limits partner evaluation and creates
allocation loss, which stability magnifies. In an independent random market with
average executable degree \(d\), unmatched shares fall at rates
\(e^{-\sqrt d}\) under stability and \(e^{-d}\) under maximum matching on the
same graph. Changing consideration can make applications rejected in a
provisional active-screen computation relevant again. Exact query-neutral
implementation must retain allocation-relevant off-screen authorizations;
otherwise, missing authorization must be reacquired. Limited-attention deferred
acceptance (LA-DA) preserves valid authorizations and reengages eligible pairs.
Conditional on exact next-best information and persistent execution rights,
adaptive discovery saves a logarithmic factor in reached proposals relative to
independent exposure. In an application to speed dating, bilateral reports let us
compare stable and maximum matching on restricted graphs, separating missed
opportunities from same-graph stability loss. In Chilean school choice, we
document 9,502 applicants accepting higher-ranked or new placements through
retained rankings.

\noindent JEL classification numbers: C78, D47, D83.

\noindent Keywords: matching; stability; limited consideration; attention;
deferred acceptance.
\end{abstract}

\section{Introduction}

Most matching theory begins after participants have inspected the market and
formed complete rankings. Many actual matching markets begin earlier: each
participant can evaluate only a small number of possible partners. This paper
studies the allocation cost of that inattention. Attention capacity constrains
which bilateral opportunities can become executable, and stability must then be achieved on the
resulting thin graph. We show that the no-blocking requirement magnifies the
allocation loss from scarce attention.

Limited attention is already an economic force in macroeconomics, industrial
organization, and organizational economics. Processing limits shape responses
to information and communication \citep{sims2003implications,dessein2016rational};
consideration capacity changes marketing, price competition, and platform
access \citep{eliaz2011consideration,declippeleliazrozen2014inattention,
cusumano2024competing,pratvalletti2022attention,teh2024accessibility,
chen2026attention}. Their allocation objects are consumer choice, trade,
communication, or directed-search efficiency. Ours is an exclusive two-sided
allocation subject to no blocking. In this setting, overlooking a partner does
not merely alter one person's choice. It removes an opportunity from both sides
and can change the set of stable assignments.

A dating application shows a short feed, a labor platform ranks a limited set
of applicants, and a school-choice system asks families to investigate only
some programs. Classical deferred acceptance begins after this attention
problem has been solved: participants already know and rank every relevant
partner \citep{gale1962college,roth2008deferred}. Modern platforms instead
create the opportunities that the mechanism later clears. A stable algorithm
cannot match a pair that never became possible. And when visible opportunities
change, forgetting an earlier valid application can destroy an opportunity the
platform already paid to create.

\Needspace{6\baselineskip}
The paper's central comparison holds fixed every opportunity created by scarce
attention:
\begin{center}
\begin{tikzpicture}[
  >=stealth,
  every node/.style={align=center,font=\small,inner sep=2pt},
  attention/.style={text width=3.5cm},
  graph/.style={text width=3.4cm},
  result/.style={text width=4.2cm}
]
  \node[attention] (attention) {cumulative executable opportunities\\average degree \(d\)};
  \node[graph] (graph) [right=0.8cm of attention]
    {realized opportunity graph\\\(G_n(d)\)};
  \node[result] (stable) [right=1.0cm of graph,yshift=0.65cm]
    {stable allocation\\
     \(\varepsilon_{\mathrm{stab}}(d)=e^{-\sqrt d(1+o(1))}\)};
  \node[result] (maximum) [right=1.0cm of graph,yshift=-0.65cm]
    {maximum feasible allocation\\
     \(\varepsilon_{\max}(d)=e^{-d}(1+o(1))\)};
  \draw[->] (attention) -- (graph);
  \draw[->] (graph) -- node[above,sloped,font=\scriptsize] {no blocking} (stable);
  \draw[->] (graph) -- node[below,sloped,font=\scriptsize] {feasibility} (maximum);
\end{tikzpicture}
\end{center}
Both branches use the same participants, realized opportunities, and rankings.
The gap therefore measures the allocation cost of imposing stability, not a
difference in discovery or information.

The unit of analysis must also be precise. A profile may be delivered
without being processed; a processed profile may not produce an application;
and an authorized pair may not be selected. Our large-market results count
cumulative mutually acceptable opportunities. Our recommendation results count
pairwise queries or profile deliveries. Our memory result prices the state
needed to preserve a valid authorization without asking the user again. We do
not equate a fleeting impression with an executable edge.

Our first benchmark combines the stable-market limit of
\citet{arnosti2015short} with the maximum-matching limit of
\citet{bordenave2013matchings}. Consider a large balanced market in which each pair becomes
mutually acceptable independently and each person has \(d\) opportunities on
average. Once this graph is formed, give the platform every realized preference
and unlimited computation. Every ex post stable mechanism nevertheless leaves
approximately \(e^{-\sqrt d}\) of each side unmatched. A maximum matching on
the \emph{same graph} leaves only \(e^{-d}\) unmatched. Matching 99 percent
therefore requires about 21 opportunities per person under stability and 5
under feasibility. Because every stable matching has the same matched set,
changing the proposing side or the stable selection rule cannot remove the
loss. Theorem~\ref{thm:attention-cost-stability} is an impossibility for stable
clearing after homogeneous static attention has formed the graph, not a
criticism of deferred acceptance.

Limited attention creates a second problem during execution. Classical DA is a
one-pass procedure because rejection is final: as proposals arrive, a
receiver's held proposal only improves. A changing consideration set breaks
that monotonicity even when preferences are fixed. An edge that supported a
rejection can leave the active screen, making an earlier proposal relevant
again. Asking the proposer to rediscover and resubmit that proposal consumes
precisely the scarce attention the platform is trying to conserve.

Theorem~\ref{thm:authorization-memory} makes this argument exact. If \(k\)
off-screen authorizations can independently change the eventual DA assignment,
an exact query-neutral implementation needs \(2^k\) distinguishable states, or
at least \(k\) bits. A platform that forgets one pivotal authorization must
either return the wrong allocation in some history or ask the user again.
Persistent authorization is therefore an economic state variable, not merely
a database convenience.

Limited-attention deferred acceptance (LA-DA) preserves authorization while
respecting current eligibility. A remembered application can remain dormant;
it becomes executable again only when consideration returns. LA-DA updates
tentative holders online as the eligible graph changes, rather than treating
an earlier rejection as permanent. Theorem~\ref{thm:main} proves termination,
capacity compliance, and proposer-optimal stability and Pareto efficiency on
the final consideration graph.

Off-screen final settlement is a distinct extension, not a requirement for
online LA-DA. If the institution permits every valid retained pair to be
executed outside current consideration, the platform can clear the cumulative
ledger instead (Corollary~\ref{cor:offscreen-settlement}). This changes the
feasible graph. A cumulative-DA benchmark must respect the same execution
rights; it cannot bypass limited consideration by assigning an ineligible pair.

The first two results give a sharp division of labor:
\[
 \text{clearing cannot create a missing opportunity, but execution need not
 destroy a discovered one.}
\]
The remaining problem is to decide which opportunities to discover. In a
complete random-preference market with persistent execution rights and an exact
next-best preference oracle, revealing only the next proposal reached by deferred acceptance
reproduces the full-information stable matching with an expected
\(n\log n\) proposals. Homogeneous static exposure
requires order \(n\log^2 n\) edges to support a complete stable matching.
Theorem~\ref{thm:adaptive-attention-gap} combines classical proposal and
random-graph bounds to quantify this benchmark. The saving is in reached
proposals, conditional on access to correct next-best choices; learning those
choices is a separate information cost.

This is not a claim that the platform should solve once for a perfect global
plan. Choosing which receiver columns to process under heterogeneous attention
costs is NP-complete, with a weakly NP-complete isolated-pair restriction. This is a
budget-allocation lower bound, not hardness caused by DA: under cumulative
settlement, the sparse complete-order problem that is difficult for an
active-screen rule is order independent. The paper instead uses an adaptive
rule that looks for a concrete failure of the current stability certificate:
an omitted pair that would block the current matching.
Theorem~\ref{thm:endogenous-recommendation} proves that adding such pairs and
reclearing reaches a stable and efficient matching on the primitive graph
when the additions remain executable. Exact finite Bellman optimization and the budgeted planning boundary
are appendix results, not separate headline contributions.

We examine both losses in data. The speed-dating study of
\citet{fisman2006gender} records nearly complete bilateral ratings and approval
decisions within 21 finite markets. This allows us to hide opportunities and
replay clearing rules on a known graph. On the stricter graph of pairs who both
requested another date, proposal-directed discovery uses 1.14 executable
opportunities per proposer and matches 60.8 percent of the short side. Random
exposure with the same opportunity budget yields a 51.0 percent stable match
rate, while maximum matching on those same random edges reaches 56.2 percent.
The comparison separates poor discovery from the additional same-graph cost of
stability.

The broader graph that treats every observed date as acceptable provides the
direct finite-market counterpart to the large-market theorem. There the
proposal path matches the short side completely using 3.65 opportunities per
proposer. Random stable clearing at the same budget matches 85.9 percent,
although 95.3 percent is feasible. The replay is a structural counterfactual,
not a randomized intervention. We use realized encounter order to test its
most important behavioral restriction: approval decisions show no precise
average linear drift from the first to the last date, while ratings decline
modestly. We report both results and do not treat order invariance as observed
fact.

Chilean school choice supplies complementary evidence about persistent state.
In the 2024 national admissions process, families submitted rankings before an
initial centralized allocation and those rankings remained active through a
wait-list allocation. Among 117,409 applicants carried into that stage, 9,502
accepted either a higher-ranked school or a new placement from the original
ranking. The accepted-gain rate is similar when applicants linked in an
administrative family block are excluded. This is not a LA-DA treatment
effect or a welfare estimate over unlisted schools. It establishes that
earlier applications remain consequential after the matching state changes,
at the scale of an operating centralized market.

Our closest literatures study attention in product and platform markets,
connectivity in random stable markets, short lists, strategic search, limited
consideration, or signals before stable clearing
\citep{declippeleliazrozen2014inattention,teh2024accessibility,
kanoria2025competition,potukuchi2025unbalanced,agarwal2023stable,
kanoria2021facilitating,masatlioglu2012revealed,manzini2014stochastic,
ashlagi2020clearing,hatfield2005contracts,grenet2022preference}. Neither attention capacity nor
platform-controlled visibility is itself new. The paper's object is their
interaction with stable allocation: capacity determines a changing on-screen
graph, valid applications persist outside that graph, and reengagement must
respect eligibility without requiring a repeated application. This claim is
narrower than ``dynamic DA'' or ``recommendation before matching.''
Appendix~\ref{app:literature-design} gives the detailed comparison.

The online controller, authorization-state lower bound, optional ledger settlement,
separation theorem, and blocker-directed recommendation are checked in Lean.
Formal verification is useful because current display, retained authorization,
tentative holders, and terminal settlement are distinct states absent from
classical DA. The large-market probability theorem is proved analytically.

A short extension considers \emph{Limited-attention top trading cycles}
(LA-TTC). Retained comparisons establish each owner's best remaining object,
and valid permissions support the resulting exchanges. This preserves the
classical housing allocation when transfers wait for certification. The
extension illustrates the reach of the information and authorization design;
it does not establish a general attention-budget frontier across mechanisms.

Section~\ref{sec:attention-cost} proves the static-attention impossibility.
Section~\ref{sec:sequential-implementation} establishes the necessity and
sufficiency of persistent authorization. Section
\ref{sec:endogenous-recommendation} studies assignment-aware discovery.
Section~\ref{sec:platform-evidence} presents the speed-dating and Chilean
evidence. Section~\ref{sec:la-ttc-extension} gives the exchange extension.
The appendices contain proofs, optimization extensions, formal-
verification details, and supporting institutional evidence.

\section{Why Stable Clearing Cannot Repair Thin Attention}
\label{sec:attention-cost}

This section establishes an impossibility after attention has selected the
opportunity graph. Under homogeneous static attention, no choice of stable
clearing rule can recover the matches lost because the opportunity graph is
thin. The comparison with unrestricted matching on the same graph isolates
this stability constraint from ordinary scarcity of acceptable partners.
The stable-side formula is the balanced independent-preference case of
\citet[Theorem 11]{arnosti2015short}. We combine it with the maximum-matching
limit of \citet{bordenave2013matchings} and matched-set invariance to state a
common-graph benchmark for all stable clearing rules. This benchmark motivates
the dynamic implementation problem; it is not a new stable-market fixed point.

\subsection{Random cumulative consideration}

Fix \(d>0\). For each integer \(n\ge d\), let \(M_n\) and \(W_n\) contain
\(n\) agents. A static
homogeneous-attention policy places a pair \((m,w)\) in the cumulative
executable opportunity graph \(G_n(d)\)
independently with probability \(d/n\). Thus each agent has
asymptotic expected degree \(d\). Conditional on the graph, every
endpoint-edge incidence receives an independent uniform rank mark. Each
agent strictly ranks neighboring partners by these marks and prefers every
neighbor to remaining unmatched.

The graph can represent cumulative mutually acceptable consideration after a
recommendation horizon. It need not be interpreted as every profile briefly
displayed: an edge is present only when the pair is executable by the matching
mechanism. Section~\ref{sec:sequential-implementation} separates this
cumulative graph from the path by which its opportunities are discovered and
authorized. Thus this theorem prices the executability stage of the funnel,
not raw impressions, scrolling time, or unilateral profile delivery.

Let \(S_n(d)\) be the number of pairs in a stable matching of \(G_n(d)\).
Every stable matching has the same matched set, so this number is independent
of the stable selection. Let \(A_n(d)\) be the size of a maximum-cardinality
matching on the same realized graph.

An allocation rule is \emph{graph feasible} if it matches agents only along
edges of \(G_n(d)\). It is \emph{ex post stable} if, for every realization and
every realization of its own randomization, its output is stable with respect
to \(G_n(d)\) and the realized rankings. Write \(|\Phi_n|\) for the number of
pairs matched by a rule \(\Phi_n\). The rule may observe the entire realized
graph and all rankings, may use unlimited computation, and may randomize. The
impossibility below therefore does not arise from imperfect information or a
poor clearing algorithm after the graph has been formed.

\begin{theorem}
\label{thm:attention-cost-stability}
Let \(\Phi_n\) be any graph-feasible, ex post stable allocation rule, and let
\(\Psi_n\) be any graph-feasible allocation rule. For every fixed \(d>0\), as
\(n\to\infty\),
\begin{align*}
 \frac{|\Phi_n|}{n}
 &=\frac{S_n(d)}{n}
   \xrightarrow{p}1-\varepsilon_{\mathrm{stab}}(d),\\
 \frac{|\Psi_n|}{n}
 &\leq\frac{A_n(d)}{n}\quad\text{in every realization},
 &\frac{A_n(d)}{n}
 &\xrightarrow{p}1-\varepsilon_{\max}(d).
\end{align*}
Maximum matching attains the second bound. The limiting losses are
\begin{align}
 \varepsilon_{\mathrm{stab}}(d)&=e^{-t_d},
 &t_d^2&=d(1-e^{-t_d}),\quad t_d>0,\label{eq:stable-fixed-point}\\
 \varepsilon_{\max}(d)&=\max_{0\le s\le1}F_d(s),
 &F_d(s)&=e^{-d e^{-ds}}-1+e^{-ds}(1+ds).
 \label{eq:max-fixed-point}
\end{align}
Moreover, as \(d\to\infty\),
\begin{equation}
 \varepsilon_{\mathrm{stab}}(d)
   =e^{-\sqrt d(1+o(1))},
 \qquad
 \varepsilon_{\max}(d)=e^{-d}(1+o(1)).
 \label{eq:attention-loss-rates}
\end{equation}
Let \(d_{\mathrm{stab}}(\varepsilon)\) be the unique degree at which the
limiting stable unmatched share equals \(\varepsilon\), and let
\(d_{\max}(\varepsilon)=\inf\{d:\varepsilon_{\max}(d)\leq\varepsilon\}\).
Then
\begin{equation}
 d_{\mathrm{stab}}(\varepsilon)
   =\frac{\log^2(1/\varepsilon)}{1-\varepsilon},
 \qquad
 d_{\max}(\varepsilon)=\log(1/\varepsilon)(1+o(1)).
 \label{eq:attention-degree-requirements}
\end{equation}
where the second expression holds as \(\varepsilon\downarrow0\). In
particular, for any \(\varepsilon\in(0,1)\), if
\(d<d_{\mathrm{stab}}(\varepsilon)\), then every graph-feasible allocation
rule \(\Psi_n\) satisfies
\[
 \Pr\!\left\{\Psi_n\text{ returns a stable matching and }
        \frac{|\Psi_n|}{n}\geq 1-\varepsilon\right\}
 \longrightarrow 0.
\]
\end{theorem}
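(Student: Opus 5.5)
The proof assembles three external facts and then does elementary asymptotics. \emph{First}, matched-set invariance (the rural hospitals theorem, which holds on any bipartite graph with strict preferences) shows that every stable matching of \(G_n(d)\) has the same matched set. So for every realization and every internal randomization, an ex post stable rule satisfies \(|\Phi_n|=S_n(d)\). Likewise any graph-feasible \(\Psi_n\) outputs a matching of \(G_n(d)\), so \(|\Psi_n|\le A_n(d)\) pathwise, with equality for a maximum matching. \emph{Second}, the limit \(S_n(d)/n\xrightarrow{p}1-e^{-t_d}\) with \(t_d^2=d(1-e^{-t_d})\) is the balanced, independent-uniform-rank specialization of \citet[Theorem 11]{arnosti2015short}. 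I would check that the model matches theirs: Erd\H{o}s--R\'enyi bipartite acceptability with edge probability \(d/n\), independent uniform marks at each endpoint, and equal side sizes. I would also verify that their fixed point reduces to (\ref{eq:stable-fixed-point}) when both sides share degree \(d\); uniqueness of the positive root follows from concavity of \(t\mapsto d(1-e^{-t})\) against \(t^2\). \emph{Third}, \(A_n(d)/n\xrightarrow{p}1-\max_s F_d(s)\) is the bipartite, equal-degree Poisson-limit case of \citet{bordenave2013matchings}. The local weak limit of \(G_n(d)\) is a two-type Galton--Watson tree with Poisson\((d)\) offspring, and their variational formula specializes to \(F_d\).

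\emph{Asymptotics.} For the stable side, the fixed point gives \(t_d=\sqrt{d(1-e^{-t_d})}\). Since \(t_d\to\infty\) (otherwise the right side diverges), \(t_d=\sqrt d\,(1+o(1))\), hence \(\varepsilon_{\mathrm{stab}}=e^{-\sqrt d(1+o(1))}\).

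For the inversion, set \(\varepsilon=e^{-t}\), so \(t=\log(1/\varepsilon)\). Then \(d=t^2/(1-e^{-t})=\log^2(1/\varepsilon)/(1-\varepsilon)\), and this holds exactly. Uniqueness follows because \(t\mapsto t^2/(1-e^{-t})\) is strictly increasing on \((0,\infty)\), so \(d\mapsto t_d\) is strictly increasing and \(\varepsilon_{\mathrm{stab}}\) is strictly decreasing and continuous.

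For the maximum side I would locate the maximizer of \(F_d\). Differentiating gives
\[
F_d'(s)=d^2 e^{-ds}\bigl(e^{-de^{-ds}}-s\bigr),
\]
so critical points satisfy \(s=e^{-de^{-ds}}\). For large \(d\) the relevant root has \(s\) close to \(1\), with \(e^{-ds}\approx e^{-d}\). Substituting \(s=1-\delta\) with \(\delta=O(de^{-d})\) and expanding, \(F_d(s)=e^{-d}(1+o(1))\): the \(1-de^{-ds}+\tfrac12 d^2e^{-2ds}\) terms cancel against \(-1+e^{-ds}+ds e^{-ds}\) up to \(O(d^2e^{-2d})\). Other critical points, such as small \(s\), give \(F_d\) values that are exponentially smaller or negative. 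It follows that \(d_{\max}(\varepsilon)=\log(1/\varepsilon)(1+o(1))\), since \(\varepsilon_{\max}(d)\le\varepsilon\) iff \(d\ge\log(1/\varepsilon)+O(1)\) asymptotically. Monotonicity is not needed here because \(d_{\max}\) is defined as an infimum.

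\emph{Final claim.} If \(d<d_{\mathrm{stab}}(\varepsilon)\), then strict monotonicity gives \(\varepsilon_{\mathrm{stab}}(d)>\varepsilon\). On the event that \(\Psi_n\) returns a stable matching, matched-set invariance forces \(|\Psi_n|=S_n(d)\). Hence the probability in question is at most \(\Pr\{S_n(d)/n\ge1-\varepsilon\}\). This tends to \(0\) by convergence in probability to \(1-\varepsilon_{\mathrm{stab}}(d)<1-\varepsilon\).

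The main obstacle is the \(\varepsilon_{\max}\) expansion, specifically certifying which critical point of \(F_d\) attains the maximum and controlling the cancellation to get the exact constant \(1+o(1)\). The approach there is to bound \(F_d\) on \([0,1/2]\) crudely and then do a second-order expansion near \(s=1\). A secondary risk is parameter matching with the cited theorems: Arnosti's normalization of degrees, and the fact that Bordenave et al.\ state results for unipartite or general local limits. I would handle the latter through the bipartite two-type recursion.
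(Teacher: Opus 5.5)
Your architecture matches the paper's. Matched-set invariance reduces every ex post stable rule to $S_n(d)$ and bounds every feasible rule by $A_n(d)$. The two limits come from \citet{arnosti2015short} and \citet{bordenave2013matchings}; the paper also sketches the stable limit itself through threshold messages and a monotone boundary squeeze. The exact inversion and the containment argument for the final probability are the same as the paper's, and those parts of your proposal are sound.

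The step that fails as planned is your treatment of the maximum-matching side. Your derivative $F_d'(s)=d^2e^{-ds}(e^{-de^{-ds}}-s)$ is correct, and so is your expansion at the root $s_3=1-\delta$, $\delta\approx de^{-d}$. But the claim that small-$s$ critical points give exponentially smaller or negative values is false, so no crude bound on $[0,1/2]$ can close the argument. Already $F_d(0)=e^{-d}$. More to the point, write $q=e^{-ds}$. The critical points are then the solutions of the symmetric system $s=e^{-dq}$, $q=e^{-ds}$, and at any solution $F_d=s+q-1+dsq$, which is symmetric in $(s,q)$. Hence $s^*\mapsto e^{-ds^*}$ maps critical points to critical points with the same value. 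For large $d$, the partner of your $s_3$ is $s_1=e^{-ds_3}\approx e^{-d}$. It is a local maximum with $F_d(s_1)=F_d(s_3)=e^{-d}(1+o(1))$. Only the self-paired root of $s=e^{-ds}$, which is a local minimum, gives a negative value.

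The conclusion $\varepsilon_{\max}(d)=e^{-d}(1+o(1))$ survives, and there are two ways to repair the argument. One is to exploit this involution: evaluate one member of each pair and show that no other local maximum exists. The other is the paper's route. Set $z=de^{-ds}$ and combine stationarity, $-\log q=de^{-dq}$, with the second-order condition $z\log(d/z)\le1$. This confines every local maximum to the branches $z=de^{-d}(1+o(1))$ (your $s_3$) and $z=d-d^2e^{-d}(1+o(1))$ (your $s_1$), and you then expand at both.

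The same identity $F_d(0)=e^{-d}$ also supplies something your $d_{\max}$ step needs but does not state. It gives $\varepsilon_{\max}(d)\ge e^{-d}$ for every $d>0$, and hence $d_{\max}(\varepsilon)\ge\log(1/\varepsilon)$ exactly. A purely large-$d$ asymptotic cannot deliver this lower bound, because the infimum defining $d_{\max}$ also ranges over bounded $d$.
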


The stable formula is generated by two local threshold messages. Following a
random edge, let \(p\) be the probability that its proposer is willing to use
it after processing preferred alternatives, and let \(h\) be the corresponding
receiver probability. Poisson thinning and the uniform rank marks give
\[
 p=\frac{1-e^{-dh}}{dh},
 \qquad
 h=\frac{1-e^{-dp}}{dp}.
\]
Every solution is symmetric. Writing \(t=dp=dh\) yields
\eqref{eq:stable-fixed-point}; the root is unmatched with probability
\(e^{-t}\). Correlation decay on the marked Poisson tree transfers this local
calculation to the finite graph. Equation~\eqref{eq:max-fixed-point} is the
corresponding maximum-matching formula specialized from
\citet{bordenave2013matchings}. Appendix~\ref{app:attention-cost-proof}
supplies the complete argument.

The fixed point has a direct interpretation. The quantity \(t_d\) is the
effective number of opportunities that survive two-sided ranking pressure at
a typical endpoint. For large \(d\), almost every agent has graph neighbors,
so \(1-e^{-t_d}\) is close to one and the fixed point becomes
\(t_d^2\simeq d\). Stable clearing therefore extracts only order \(\sqrt d\)
effective depth from \(d\) raw opportunities. Maximum matching does not have
to respect these two-sided preference comparisons; at high degree, its leading
failure is the probability \(e^{-d}\) that an agent receives no opportunity at
all.

\subsection{Economic interpretation}

The final probability statement is the central impossibility. Once static
attention has generated the graph, replacing deferred acceptance with another
stable rule, changing the proposing side, or randomizing over stable outcomes
cannot cross the bound. Every stable matching has the same matched set.

Equation~\eqref{eq:attention-degree-requirements} explains the source and size
of the bound.
Maximum matching fails mainly because some agents have too few graph
opportunities. Stability adds a two-sided ranking requirement: a feasible
edge may be unusable because accepting it would create a chain of preferred
deviations elsewhere. That additional constraint changes the exponent from
\(d\) to \(\sqrt d\).

Define the finite-market stability tax by
\[
 \tau_n(d)=\frac{A_n(d)-S_n(d)}{n}.
\]
Because the two matchings are evaluated on the same realized graph,
\(\tau_n(d)\) contains no difference in attention budgets, acceptability, or
information. Theorem~\ref{thm:attention-cost-stability} implies
\[
 \tau_n(d)\xrightarrow{p}
 \varepsilon_{\mathrm{stab}}(d)-\varepsilon_{\max}(d),
 \qquad
 \tau(d)=e^{-\sqrt d(1+o(1))}.
\]
Equivalently, as the target loss \(\varepsilon\) becomes small, stable clearing
requires about \(\log(1/\varepsilon)\) times as many opportunities per person
as feasibility alone. This ratio, rather than the cardinal value of any one
match, is the paper's measure of the attention cost of stability.

\begin{table}[!htbp]
\centering
\caption{Asymptotic opportunity requirements}
\label{tab:attention-requirements}
\begin{tabular}{lrrr}
\hline\hline
Target matched share & \(\varepsilon\) & Stable degree & Feasible degree \\
\hline
90.0 percent & 0.100 & 5.89 & 2.30 \\
99.0 percent & 0.010 & 21.42 & 4.61 \\
99.9 percent & 0.001 & 47.76 & 6.91 \\
\hline\hline
\end{tabular}
\begin{minipage}{0.86\textwidth}
\footnotesize\emph{Notes:} Stable degree uses the exact expression in
\eqref{eq:attention-degree-requirements}. Feasible degree reports the
large-\(d\) approximation \(\log(1/\varepsilon)\).
\end{minipage}
\end{table}

The theorem prices an opportunity frontier. Maximum matching measures what
the realized graph can support; stable matching adds the requirement that no
pair prefer to leave together. Their difference prices stability in the same
primitive unit---cumulative executable opportunities per participant---rather
than in an imposed cardinal welfare scale.

The impossibility conditions on a graph formed by independent homogeneous
attention. Adaptive recommendation chooses which opportunities are discovered
and therefore changes the graph itself. This is the design margin studied
below: the platform must improve discovery because changing the stable
clearing rule cannot cross the frontier.

The benchmark isolates stability in a balanced one-to-one market where all
realized neighbors are acceptable, graph edges are independent, and rank marks
are independent across agents and edges. Popularity, correlated compatibility,
unequal market sides, and endogenous exit change the numerical frontier. The
finite-market comparison remains exact: stable and maximum matching use the
same realized opportunities. Section~\ref{sec:platform-evidence} performs that
comparison using observed rankings and two definitions of acceptability.

The order of limits is sequential: first \(n\to\infty\) at fixed \(d\), and
then \(d\to\infty\).

\section{LA-DA: Online Clearing with Persistent Authorization}
\label{sec:sequential-implementation}

Theorem~\ref{thm:attention-cost-stability} takes the executable graph as given.
This section asks how to maintain an assignment when consideration changes.
An authorization can remain valid without being currently executable.
LA-DA preserves it for reengagement when the pair becomes eligible again.

\subsection{Why rejection is no longer final}

Ordinary deferred acceptance relies on rejection finality. A receiver's holder
can only improve as proposals arrive, so a proposal rejected today can never
become relevant tomorrow. Non-monotone consideration breaks that logic even
when preferences never change.

Consider two proposers and two receivers with
\[
\begin{array}{c@{\qquad}c@{\qquad}c@{\qquad}c}
 m_1: w_1\succ w_2,&m_2:w_1\succ w_2,&
 w_1:m_2\succ m_1,&w_2:m_1\succ m_2.
\end{array}
\]
When both proposers apply to \(w_1\), she holds \(m_2\) and rejects \(m_1\);
\(m_1\) then applies to \(w_2\). Now suppose \(w_1\) leaves \(m_2\)'s
active display, a later recommendation returns \(w_1\) to \(m_1\)'s active set,
and \(m_2\) applies to \(w_2\). Because \((m_1,w_1)\) remains authorized, the
tentative stable assignment at that holding boundary is \((m_1,w_1)\) and
\((m_2,w_2)\). The earlier rejected application has become pivotal. Forgetting
it instead leaves \(m_2\) unmatched in that active state unless \(m_1\) is
asked to authorize \(w_1\) again. The point is not changing tastes. It is that
the set against which a rejection was final has changed.

\subsection{Environment and mechanism}

Let \(M\) and \(W\) be finite sets of proposers and receivers. A matching
\(\mu\) is a one-to-one partial assignment, written on both sides, with
\(\mu(m)=w\) if and only if \(\mu(w)=m\); write \(\mu(i)=\emptyset\) when
agent \(i\) is unmatched. Every agent has strict preferences over real
partners and \(\emptyset\). We write \(w\succ_m w'\) and \(m\succ_w m'\) for
these orders. A pair is \emph{mutually acceptable} when both endpoints prefer
one another to \(\emptyset\). For any mutually acceptable graph \(G\),
\(\operatorname{DA}_M(G)\) denotes proposer-proposing deferred acceptance on
the preference lists restricted to \(G\).

An edge becomes \emph{authorized} after the relevant recommendation and
application decisions. Let \(C_m^t\) denote the profiles currently displayed
to proposer \(m\), and let \(K_w^t\) be the platform's ledger of proposals to
receiver \(w\). Current consideration constrains which pairs can be assigned,
not merely which profiles appear on an interface. It is capacity constrained;
the ledger is cumulative within a run. At a holding boundary the active graph is
\[
 E_t=E(C^t,K^t)=\{(m,w):w\in C_m^t,\ m\in K_w^t,\text{ and the pair is
 mutually acceptable}\}.
\]

It is useful to distinguish three graphs. The \emph{primitive graph} \(A\)
contains every pair that could become a mutually acceptable opportunity. The
\emph{current graph} \(E_t\) contains the retained proposals that are eligible
at a particular holding boundary. The \emph{terminal retained-
authorization graph}
\begin{equation}
 G_T=\{(m,w):m\in K_w^T\text{ and }(m,w)\text{ is mutually acceptable}\}
 \label{eq:terminal-executable-graph}
\end{equation}
contains all valid authorizations accumulated by the end of the run, including
dormant pairs. Online LA-DA clears \(E_t\), not this larger ledger graph.
Clearing \(G_T\) is a separate extension requiring permission to execute
off-screen pairs. In the baseline, preferences and authorization validity are
fixed during a run; current eligibility changes with consideration. Revocation
or expiry requires a separate validity update and is not the same as dormancy.

A continuation is \emph{query neutral} when it requires no endpoint to process
a profile or submit or refresh an application. We assume that the platform
already holds the preference comparisons needed to clear the eligible graph,
from recorded rankings or valid prior elicitation. Reengagement reuses a valid
authorization only once the pair is eligible again. A binary application alone
need not reveal all required comparisons; eliciting one is an additional query.
Off-screen execution is not implied by retaining either comparisons or consent.

\begin{theorem}
\label{thm:authorization-memory}
For every \(k\), there is a finite market with \(k\) disjoint potential pairs
and \(2^k\) histories that share the same current on-screen graph and the same
future reactivation path, but differ in which off-screen pairs retain valid
authorization. Any deterministic, zero-error, query-neutral implementation of
proposer DA on the valid authorization graph must distinguish all \(2^k\)
histories. Equivalently, it requires at least \(k\) bits of authorization
state. If it pools two histories that differ on a pair, then under their common
continuation it must either return the wrong DA allocation in one history or
obtain that authorization again through an additional user-facing query.
\end{theorem}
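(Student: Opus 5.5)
The construction takes \(k\) disjoint copies of the two-proposer, two-receiver gadget from the example in the previous subsection. A fooling-set argument then shows that each copy contributes one bit the implementation cannot afford to forget.

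\emph{Gadget.} For each \(i\le k\), introduce proposers \(m_1^i,m_2^i\) and receivers \(w_1^i,w_2^i\) with the preferences of that example:
\[
m_j^i: w_1^i\succ w_2^i,\qquad w_1^i: m_2^i\succ m_1^i,\qquad w_2^i: m_1^i\succ m_2^i .
\]
All other pairs are unacceptable. The potential pair singled out in copy \(i\) is \(e_i=(m_1^i,w_1^i)\). These \(k\) pairs are disjoint.

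\emph{Histories.} For a bit string \(b\in\{0,1\}^k\), define history \(h_b\) in three stages.
\begin{enumerate}
\item In every copy, \(m_2^i\) applies to \(w_1^i\) and to \(w_2^i\), and \(m_1^i\) applies to \(w_2^i\).
\item In copy \(i\), \(m_1^i\) applies to \(w_1^i\) exactly when \(b_i=1\). Otherwise \(w_1^i\) is never displayed to \(m_1^i\) before it leaves the screen.
\item Afterwards the display is arranged so that \(w_1^i\) is off \(m_1^i\)'s active screen in every history. The current on-screen graph \(E_t\) and all recorded comparisons on it are then identical across the \(2^k\) histories.
\end{enumerate}
I must check that displayed-but-unapplied pairs leave no trace in \(E_t\). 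This holds because \(E_t\) requires \(m\in K_w^t\) together with current display, and the pair is off-screen in every history. The common continuation \(c\) removes \(w_1^i\) from \(m_2^i\)'s display and returns \(w_1^i\) to \(m_1^i\)'s display, with no new application. By query neutrality nobody is asked anything.

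Under \(c\), the valid eligible graph in copy \(i\) contains \(e_i\) iff \(b_i=1\). Proposer DA on the copy gives two cases:
\begin{itemize}
\item If \(b_i=1\), the outcome is \(\{(m_1^i,w_1^i),(m_2^i,w_2^i)\}\).
\item If \(b_i=0\), \(m_1^i\) holds \(w_2^i\) and \(m_2^i\) is unmatched, because \(w_2^i\) prefers \(m_1^i\).
\end{itemize}
These are the computations in the example and should be routine. Since copies are disjoint, DA decomposes copy by copy. The required output under \(c\) is therefore an injective function of \(b\).

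\emph{Lower bound.} Suppose a deterministic implementation maps \(h_b\) and \(h_{b'}\), with \(b\neq b'\), to the same internal state at the end of the history. It receives identical inputs along \(c\) and no new query answers. Hence it produces the same output, which is wrong for at least one of \(b,b'\), say on the coordinate where they differ. The only escape is to issue a query that re-elicits \(m_1^i\)'s authorization of \(w_1^i\), which violates query neutrality. Therefore the state map is injective on the \(2^k\) histories, and at least \(\log_2 2^k=k\) bits are needed.

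The main obstacle is making ``same current on-screen graph'' precise. The platform's state could in principle record incidental traces, such as display logs, that differ across histories. The claim concerns what the implementation must store, so I would phrase the theorem's hypothesis as the observable interface: current display, current eligible graph, and future inputs. The fooling argument then applies to whatever state the implementation keeps. A second point to verify is that withholding \(w_1^i\) from \(m_1^i\) in \(h_b\) with \(b_i=0\) is consistent with capacity-constrained display. Giving each \(C_m^t\) capacity one and sequencing displays appropriately should suffice.
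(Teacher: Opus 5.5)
Your argument is correct, and its engine is the same as the paper's. You use \(k\) agent-disjoint components, each carrying one independently pivotal authorization bit. A deterministic, query-free continuation from a pooled state must then return one output for two histories whose DA targets differ, so the state map is injective. Your resolution of the ``incidental traces'' worry is also the paper's: the injectivity is on whatever state is stored.

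Where you depart is the gadget. The paper uses one isolated mutually acceptable pair \((m_i,w_i)\) per component, with every cross-pair unacceptable. The market then literally has \(k\) potential pairs, and a history is just the vector \(a\in\{0,1\}^k\) of retained authorizations. Pivotality is immediate, since proposer DA matches component \(i\) exactly when \(a_i=1\).

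Your copy of the \(2\times2\) motivating example buys a sharper economic point. The bit that must be stored can be an application that was already \emph{rejected}, so pruning records by classical rejection finality is unsafe. The cost is bookkeeping the paper avoids:
\begin{itemize}
\item The market has \(4k\) acceptable pairs, of which only the \(e_i\) are the \(k\) distinguished disjoint ones.
\item Your histories also differ in recorded comparisons, such as \(w_1^i\)'s rejection of \(m_1^i\), not only in authorization. The paper's histories differ only in \(a\), so there the \(k\) distinguishing bits are unambiguously authorization bits.
\item Your capacity remark conflicts with your own computation. With \(k_{m_1^i}=1\), returning \(w_1^i\) to \(m_1^i\)'s screen evicts \(w_2^i\). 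The \(b_i=0\) outcome is then \(\{(m_2^i,w_2^i)\}\) (or empty), not \(\{(m_1^i,w_2^i)\}\); the latter needs \(k_{m_1^i}\ge2\).
\end{itemize}

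None of this breaks the proof. Once \(w_1^i\) is off \(m_2^i\)'s screen, \(m_1^i\) is \(w_1^i\)'s only eligible proposer and she is his top choice. So \((m_1^i,w_1^i)\) is matched if and only if \(b_i=1\), under any display that shows \(w_1^i\) to \(m_1^i\). That is exactly the paper's isolated-pair gadget reappearing inside yours.
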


\begin{proof}
Create \(k\) disjoint components. In component \(i\), proposer \(m_i\) and
receiver \(w_i\) rank one another above remaining unmatched and every cross-pair
is unacceptable. Before the common continuation, all pairs are off-screen and
history \(a\in\{0,1\}^k\) retains authorization for \((m_i,w_i)\) exactly when
\(a_i=1\). The continuation makes every pair eligible. Proposer DA then matches
component \(i\) exactly when \(a_i=1\).

An exact query-neutral implementation must therefore recover \(a\) from its
stored state. Its state encoding is injective on \(\{0,1\}^k\), so it has at
least \(2^k\) values. If two distinct vectors share a state, deterministic
execution produces the same output after the common continuation even though
proposer DA differs on at least one component. A new query can distinguish the
histories, but persistence would have avoided that repeated attention cost.
\end{proof}

The lower bound is attained in this family by one membership bit in the
retained-proposal record for each pair. More generally, a record can be safely
discarded once no admissible continuation can make it pivotal. Appendix
\ref{app:formal-verification} gives the continuation-equivalence formulation
and the corresponding Lean cardinality theorem.

We call the online mechanism \emph{Limited-attention deferred acceptance}
(LA-DA). It updates tentative assignments as consideration changes; it does
not promise that an irrevocable match will remain stable against arbitrary
future events. A serial pass processes a permutation \(\rho\) of receiver
columns. Each receiver is recommended to every proposer before the proposal
and holder update. The controller:
\begin{enumerate}
  \item recommends an eligible receiver and updates capacity-constrained current
  consideration;
  \item allows each proposer to apply to his best profitable active option;
  \item retains each valid application in \(K\) even when its profile later
  leaves the screen;
  \item recomputes tentative receiver holders by proposer deferred acceptance
  on \(E(C,K)\), reactivating retained applications when their profiles return;
  \item stops after no profitable considered proposal remains unrecorded
  and holders are current, returning that consideration-feasible assignment.
\end{enumerate}
Recommendation changes current opportunities, \(K\) preserves valid prior
actions, and reengagement restores them without resubmission when eligibility
returns. The complete pseudocode is in Appendix~\ref{app:formal-verification}.

For primitive input
\(\theta=(\succ,k,C^0,\rho)\), let \(\succ\) be the strict preference profile,
let \(1\leq k_m\leq |W|\), let \(|C_m^0|\leq k_m\), and let \(\rho\) be a
permutation of \(W\). Call such an input \emph{admissible}. The interleaved
procedure generates \(C^T(\theta)\) and \(K^T(\theta)\); neither is an
additional primitive. Identify \(C^T\) with its pair graph and set
\(F_T=A\cap C^T\), the mutually acceptable final consideration graph.

\begin{theorem}
\label{thm:main}
For every finite strict-preference market and every admissible primitive input
\(\theta\), online LA-DA terminates, respects consideration capacity, and returns
\begin{equation}
 \mu^T(\theta)=\operatorname{DA}_{M}\bigl(E_T(\theta)\bigr)
             =\operatorname{DA}_{M}\bigl(F_T(\theta)\bigr).
 \label{eq:la-da-online-allocation}
\end{equation}
Its assignment is feasible in final consideration, proposer-optimal stable
on \(F_T(\theta)\), and Pareto efficient among matchings feasible on that graph.
\end{theorem}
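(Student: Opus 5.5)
The plan is to run a potential-function argument for termination, then read off the two equalities in \eqref{eq:la-da-online-allocation} from the stopping rule, and finally inherit stability and efficiency from classical facts about deferred acceptance on the fixed graph \(F_T(\theta)\).

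\emph{Termination and capacity.} Capacity compliance is an invariant of step~1. Every consideration update keeps \(|C_m^t|\le k_m\), which I would verify inductively from \(|C_m^0|\le k_m\); if the display is full, it replaces a profile rather than adding one. For termination, I would use the pair \((\text{position in }\rho,\ |K|)\), ordered lexicographically. The ledger \(K\) only grows, and it is bounded by \(|M|\,|W|\) because applications are never deleted. Step~2 adds a new record only when a proposal is profitable and not yet recorded. Step~4 is a single finite DA computation on a finite graph. A serial pass processes each column of \(\rho\) once, and within a column the number of new records is bounded. Hence only finitely many holding boundaries occur, and the procedure reaches the stopping condition of step~5.

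\emph{Identification \(E_T=F_T\) for the DA outcome.} At termination the holders are current, so \(\mu^T=\operatorname{DA}_M(E_T)\) holds by step~4. It remains to show \(\operatorname{DA}_M(E_T)=\operatorname{DA}_M(F_T)\), where \(E_T\subseteq F_T\). The edges of \(F_T\setminus E_T\) are pairs \((m,w)\) with \(w\in C_m^T\) and mutually acceptable but \(m\notin K_w^T\). The stopping rule says no profitable considered proposal remains unrecorded. So for each such pair, \(m\) weakly prefers his current assignment \(\mu^T(m)\) to \(w\); otherwise proposing to \(w\) would be a profitable considered proposal.

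I would then invoke the standard lemma that adding to a graph edges \((m,w)\) with \(\mu(m)\succ_m w\), where \(\mu=\operatorname{DA}_M\) of the smaller graph, leaves proposer DA unchanged. The argument is that DA run on the larger graph never makes these proposals. Proposer-proposing DA can be run in any order. Each \(m\) proposes down his list and stops at \(\mu(m)\), which ranks above every added edge. By the McVitie--Wilson order independence the outcome coincides. This yields \eqref{eq:la-da-online-allocation}.

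The main obstacle is making precise the notion of a ``profitable'' proposal in step~2. It must be defined relative to the current tentative assignment after the reactivations of step~4, so that the stopping condition really gives \(\mu^T(m)\succeq_m w\) for every unrecorded \(w\in C_m^T\). I would first state the stopping condition in exactly this form and, if needed, strengthen step~5 so that it is checked after the final holder update.

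\emph{Stability and efficiency.} Feasibility in final consideration is immediate, since \(\mu^T\subseteq E_T\subseteq F_T\). By Gale--Shapley, \(\operatorname{DA}_M(F_T)\) is stable and proposer-optimal among matchings that are stable on \(F_T\). Pareto efficiency among \(F_T\)-feasible matchings I would prove with the classical argument for the no-rejection-loss component: a strict Pareto improvement \(\nu\) for both sides would make some agent strictly better off. If a proposer \(m\) is strictly better off under \(\nu\), then \(\nu(m)\) rejected \(m\) in DA for someone she prefers, which contradicts \(\nu(m)\) being weakly better off unless that preferred proposer is also rematched to a worse partner. Chaining this gives a cycle that contradicts stability, as in Roth's result for two-sided improvements. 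If the intended notion is proposer-side efficiency only, I would cite the weak Pareto optimality of proposer DA instead, and I would state explicitly which notion is being used.
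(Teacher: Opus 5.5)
Your termination argument has a gap. The lexicographic potential \((\text{position in }\rho,\,|K|)\) controls only a single serial pass, and nothing guarantees that the stopping condition holds when that pass ends. Each column triggers only one batch of new applications. A proposer can therefore be displaced at the last holder update of a pass and still hold a profitable considered unrecorded option. The displacement can come from a rejection chain started by the final column's proposals, or from a capacity update that evicted his partner from \(C_m\).

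The controller must then start another pass or run the terminal-closure loop that you yourself propose to add. The position in \(\rho\) then resets, so your potential is not monotone, and ``only finitely many holding boundaries occur'' does not follow. What is missing is a progress statement for these outer loops:
\begin{itemize}
  \item a new pass starts only if the preceding pass recorded a new pair;
  \item every closure round is triggered by, and records, at least one profitable considered unrecorded pair.
\end{itemize}
Because \(K\) only grows and \(|K|\le|M||W|\), there are finitely many passes and closure rounds, and each is finite. You already state both ingredients: \(K\) is monotone and bounded, and records are made only for profitable unrecorded pairs. The paper's proof simply ties every continuation of the controller to a new record.

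The rest is correct. Your identification \(\operatorname{DA}_M(E_T)=\operatorname{DA}_M(F_T)\) takes a different route from the paper's. The paper first notes that edges of \(F_T\setminus E_T\) are not preferred by their proposers, so \(\mu^T\) is stable on \(F_T\). It then lets \(\nu=\operatorname{DA}_M(F_T)\), and proposer optimality gives that each proposer weakly prefers \(\nu\) to \(\mu^T\). Exhaustion forces every edge of \(\nu\) into \(E_T\). Proposer optimality on \(E_T\) plus strictness then gives \(\nu=\mu^T\).

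Your replay argument with McVitie--Wilson order independence also works once you record two facts:
\begin{itemize}
  \item \(\mu^T(m)\neq w\) for \((m,w)\notin E_T\), so your weak preference is actually strict;
  \item a proposer unmatched in \(\mu^T\) has no added edge, by mutual acceptability, so exhausting his \(E_T\) list exhausts his \(F_T\) list.
\end{itemize}
The paper's sandwich delivers stability on \(F_T\) along the way. Yours shows the stronger operational fact that the added edges are never even proposed.

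Your efficiency paragraph is more tangled than needed, and its ``unless'' clause is off. If a proposer \(m\) strictly gains under \(\nu\), then \(\nu(m)\) rejected him in DA on \(F_T\). Since holds only improve, she strictly prefers \(\mu^T(\nu(m))\) to \(m\), which already contradicts the Pareto improvement; no chain is needed. Equivalently, as in the paper, any strictly gaining agent's \(\nu\)-partner also changes partner, hence strictly gains, and that pair blocks \(\mu^T\) on \(F_T\). The statement means two-sided efficiency. Your proposer-side fallback would not hold in general anyway, since proposer DA is only weakly Pareto optimal for proposers.
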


\paragraph{Proof idea.}
A further serial pass requires progress in the preceding pass; each closure
batch records a new pair. The finite controller therefore terminates.
Normalized holders give DA on \(E_T\).
Terminal exhaustion makes every omitted edge in \(F_T\) weakly worse for its
proposer than his assignment; restoring those edges leaves proposer DA
unchanged. Appendix~\ref{app:formal-verification} gives the full argument.

\paragraph{Optional off-screen settlement.}
\begin{corollary}
\label{cor:offscreen-settlement}
Suppose the institution additionally permits every valid retained pair to be
executed off screen and the required comparisons are recorded. Following
online LA-DA by query-neutral cumulative settlement returns
\begin{equation}
 \widehat\mu^T(\theta)=\operatorname{DA}_{M}\bigl(G_T(\theta)\bigr),
 \label{eq:panda-allocation}
\end{equation}
the proposer-optimal stable and Pareto-efficient assignment on \(G_T(\theta)\).
\end{corollary}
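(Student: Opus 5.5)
The plan is to reduce the corollary to two facts. First, Theorem~\ref{thm:main} gives termination and the terminal ledger \(K^T(\theta)\). Second, the classical properties of proposer-proposing deferred acceptance hold on any fixed mutually acceptable graph.

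\emph{Step 1: well-definedness.} Online LA-DA terminates by Theorem~\ref{thm:main}, so \(K^T(\theta)\) is a finite ledger generated by the run. The graph \(G_T(\theta)\) in \eqref{eq:terminal-executable-graph} is then a well-defined finite subgraph of the mutually acceptable pairs. By hypothesis, every pair of \(G_T(\theta)\) may be executed off screen, and the comparisons needed to rank the \(G_T\)-neighbours of each agent are recorded. Hence the preference lists restricted to \(G_T(\theta)\) are available to the platform without any new query.

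\emph{Step 2: query neutrality.} Settlement runs \(\operatorname{DA}_M(G_T(\theta))\) as an internal computation on the recorded lists. Each proposal in this computation uses an edge already in \(K^T\), so it reuses a valid authorization. Each receiver comparison uses a recorded ranking. Therefore no endpoint processes a profile or submits or refreshes an application. The output is exactly \eqref{eq:panda-allocation}. Feasibility follows because every matched pair lies in \(G_T(\theta)\), which the institution permits to be executed.

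\emph{Step 3: optimality.} Gale–Shapley theory applied to the finite market restricted to \(G_T(\theta)\) gives three properties of \(\operatorname{DA}_M(G_T)\). It is stable with respect to \(G_T\). It is the proposer-optimal stable matching there, by the standard lemma that no proposer is ever rejected by a stable-achievable partner. It is weakly Pareto efficient for proposers among \(G_T\)-feasible matchings.

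For full Pareto efficiency among all agents, use the fact that every stable matching is Pareto efficient in the one-to-one strict-preference model. The argument runs as follows. Suppose a \(G_T\)-feasible \(\nu\) weakly improves everyone and strictly improves some agent \(i\). Then \(\nu(i)\ne\mu(i)\), and \(\nu(i)=j\) is a real partner with \(j\succ_i\mu(i)\). Since \(j\) weakly prefers \(\nu(j)=i\) to \(\mu(j)\), and \(\nu(j)\ne\mu(j)\) under strict preferences, \(j\) strictly prefers \(i\). Then \((i,j)\in G_T\) blocks \(\mu\), a contradiction.

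I would phrase this step by invoking the same argument already used for \(F_T\) in Theorem~\ref{thm:main}, with \(F_T\) replaced by \(G_T\).

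\emph{Main obstacle.} The delicate point is Step 2. We must justify that the recorded comparisons suffice for DA on the larger graph \(G_T\), which includes dormant pairs, and not only on \(E_T\). This is exactly what the corollary's hypothesis supplies.

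A second issue is the meaning of "following online LA-DA". Settlement should depend only on \(K^T\), not on the online holders. Since DA on a fixed graph is path independent, the online tentative assignment \(\mu^T\) plays no role in the output. I would therefore define settlement as a fresh DA run on \(G_T\). Alternatively, one could warm-start it from \(\mu^T\), but that would require showing that \(\mu^T\) is consistent with \(G_T\)-proposals. I would avoid that route.
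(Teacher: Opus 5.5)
Your proposal is correct and follows the paper's route: the paper likewise obtains the corollary by running ordinary proposer DA on the terminal ledger graph \(G_T\) under the added execution rights, with proposer-optimal stability taken from classical Gale--Shapley theory and Pareto efficiency from the same blocking-pair argument used for \(F_T\) in Theorem~\ref{thm:main}. Running settlement as a fresh DA on \(G_T\) rather than warm-starting from \(\mu^T\) also matches the paper, which treats the settlement as a separate output that depends only on \(G_T\) (Corollary~\ref{cor:terminal-graph-separation}).
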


This extension changes execution rights, not the validity of the online proof.
The two assignments may differ. Cumulative DA is a valid alternative on the
same eligible graph; clearing the full ledger is not a feasible comparator
when it assigns an ineligible pair. Neither result ranks match counts across
different graphs. The capacity bound is proposer-side, with receiver
comparisons supplied under the stated information conditions.

\subsection{Attention--execution separation under a fixed query technology}

We now state the modular role of LA-DA precisely. Let \(\Omega\) be a finite
set of hidden market states with a specified prior, let the admissible binary
query set be finite with deterministic answers and nonnegative additive costs, and let
\(\mu^\star(\omega)\) be the target allocation in state \(\omega\), such as
full-information proposer DA. Assume the query language identifies the target:
any two states giving the same answers to every admissible query have the same
\(\mu^\star\). Querying the entire finite set is therefore a feasible zero-error
policy. A deterministic discovery policy adaptively chooses queries and
terminates at an implementation state \(s\). An execution
rule \(X\) maps that state into an allocation. The execution rule is
\emph{query neutral}, as above, when it is a deterministic function of the
recorded transcript and requires no additional user-facing query.

Let \(\Pi(\mu^\star)\) be the class of decision trees that identify
\(\mu^\star(\omega)\) for every \(\omega\in\Omega\) using the platform's
admissible queries, and define
\begin{equation}
 V_{\mathrm{att}}^*=\min_{\pi\in\Pi(\mu^\star)}
   \mathbb E\bigl[Q_\pi(\omega)\bigr],
\label{eq:target-attention-value}
\end{equation}
where \(Q_\pi\) is total query cost. Deterministic repeated queries supply no
new information. Removing them leaves finitely many trees, and target
identifiability makes the feasible class nonempty, so the minimum exists.
A terminal implementation state
\(s\) is \emph{target certified} when
\(X(s)=\mu^\star(\omega')\) for every hidden state \(\omega'\) consistent with
the transcript leading to \(s\).

\begin{theorem}
\label{thm:attention-execution-separation}
Fix a finite target-identification problem as above.
\begin{enumerate}[label=\textnormal{(\roman*)}]
  \item Every deterministic zero-error joint discovery--execution controller with a
  query-neutral execution layer has expected query cost at least
  \(V_{\mathrm{att}}^*\).

  \item If a discovery policy \(\pi^*\) attains \(V_{\mathrm{att}}^*\) and all of its
  terminal states are target certified for execution rule \(X\), then the
  composition \(X\circ\pi^*\) is zero error and has expected query cost
  exactly \(V_{\mathrm{att}}^*\).

  \item Online LA-DA instantiates \(X\) on its terminal eligible graph. If
  \[
    \operatorname{DA}_M(E_T(\omega))=\mu^\star(\omega),
  \]
  then attention-optimal discovery followed by online LA-DA is optimal among
  zero-error joint controllers using the same query language. Under the extra
  execution rights of Corollary~\ref{cor:offscreen-settlement}, the same claim
  holds for ledger settlement with \(G_T\) replacing \(E_T\).
\end{enumerate}
\end{theorem}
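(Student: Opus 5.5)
The plan is to treat a joint controller as a decision tree over the admissible queries, with a deterministic query-neutral map at each leaf. Part (i) then reduces to showing that any zero-error controller's discovery tree lies in \(\Pi(\mu^\star)\). Part (ii) is a direct computation, and part (iii) follows by checking the hypotheses of (ii) for LA-DA.

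For (i), fix a deterministic zero-error controller. Because the execution layer is query neutral, its output at a terminal state \(s\) is a deterministic function \(X(s)\) of the recorded transcript. Every query the controller issues, including any inside its execution layer, is a node of its tree, so the whole controller is a decision tree \(\pi\) over admissible queries with total cost \(Q_\pi\). Zero error means \(X(s)=\mu^\star(\omega)\) for every \(\omega\) reaching leaf \(s\). Hence all states sharing a leaf transcript share the same target, and the tree identifies \(\mu^\star\); that is, \(\pi\in\Pi(\mu^\star)\). Therefore \(\mathbb E[Q_\pi]\ge V_{\mathrm{att}}^*\) by the definition in \eqref{eq:target-attention-value}.

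The main obstacle is this embedding. One must argue that a query-neutral execution layer cannot extract information beyond the transcript, since it is deterministic in the transcript. One must also argue that controller states carrying extra memory do not matter, because memory is itself a function of the transcript under deterministic execution. I would make this explicit by induction on the depth of the controller's run: at each step, both the next query and the stored state are functions of the prior answers.

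For (ii), each hidden state \(\omega\) reaches a leaf \(s\) with a transcript consistent with \(\omega\). Target certification gives \(X(s)=\mu^\star(\omega)\), so \(X\circ\pi^*\) has zero error. Execution adds no queries, so the cost equals \(\mathbb E[Q_{\pi^*}]=V_{\mathrm{att}}^*\).

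For (iii), set \(X(s)=\operatorname{DA}_M(E_T)\) computed from the transcript. By Theorem~\ref{thm:main}, online LA-DA returns exactly this allocation. It uses only retained authorizations and recorded comparisons, reactivating retained applications without resubmission, so it is query neutral. The hypothesis \(\operatorname{DA}_M(E_T(\omega))=\mu^\star(\omega)\) is read for all \(\omega'\) consistent with the transcript, because \(E_T\) is determined by the transcript. This gives target certification. Part (ii) then gives zero error at cost \(V_{\mathrm{att}}^*\), and part (i) shows that no zero-error joint controller does better. The ledger version is identical, with Corollary~\ref{cor:offscreen-settlement} replacing Theorem~\ref{thm:main} and \(G_T\) replacing \(E_T\).
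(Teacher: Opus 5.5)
Your proposal is correct and follows essentially the same route as the paper. Both arguments collapse the query-neutral execution layer into the leaves of the discovery tree, so every path cost is unchanged and zero error places the resulting tree in \(\Pi(\mu^\star)\), giving (i); target certification then yields (ii), and Theorem~\ref{thm:main} (or Corollary~\ref{cor:offscreen-settlement} for \(G_T\)) together with the displayed identity supplies the certificate for (iii), where your explicit points that controller memory and \(E_T\) are functions of the transcript spell out what the paper leaves implicit.
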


\begin{proof}
Take any joint controller and replace each terminal implementation state by
the allocation returned by its execution rule. This produces a pure decision
tree for \(\mu^\star\) with exactly the same internal query nodes on every hidden
state. Query neutrality therefore preserves pathwise and expected query cost,
so \eqref{eq:target-attention-value} gives part (i). Target certification gives
correctness of the composition in part (ii), while replacing terminal states
does not change its cost. For part (iii), Theorem~\ref{thm:main} identifies the
online output with \(\operatorname{DA}_M(E_T)\); the displayed certificate
identifies it with \(\mu^\star(\omega)\). Corollary
\ref{cor:offscreen-settlement} gives the analogous conclusion on \(G_T\)
under the additional execution rights.
\end{proof}

\begin{corollary}
\label{cor:terminal-graph-separation}
Fix preferences. Two online LA-DA runs with the same final eligible graph
\(E_T\), or the same final consideration graph \(F_T\), return the same
assignment. Under Corollary~\ref{cor:offscreen-settlement}, equality of
\(G_T\) instead suffices for the same extended settlement. These conclusions
do not require the same recommendation order or intermediate history.
\end{corollary}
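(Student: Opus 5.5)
The plan is to read the corollary off Theorem~\ref{thm:main} and Corollary~\ref{cor:offscreen-settlement}. The only further ingredient is that proposer deferred acceptance is a deterministic function of the preference profile restricted to a graph.

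\emph{Step 1: DA depends only on the restricted graph.} Fix the preference profile \(\succ\). For any mutually acceptable graph \(G\), \(\operatorname{DA}_M(G)\) is the proposer-optimal stable matching of the market whose lists are \(\succ\) restricted to \(G\). This matching is unique. It therefore does not depend on proposal order, recommendation order, or any intermediate history, only on the pair \((\succ,G)\). Formally, the map \(G\mapsto\operatorname{DA}_M(G)\) is well defined once \(\succ\) is fixed. To make this precise, I will cite order independence of proposer DA, namely that the outcome of Gale--Shapley is the same under every order of proposals \citep{gale1962college}. This is the one classical fact I am importing rather than proving.

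\emph{Step 2: apply Theorem~\ref{thm:main}.} Take two admissible inputs \(\theta,\theta'\) with the same \(\succ\). They may differ in capacities, initial consideration \(C^0\), and the permutation \(\rho\). Equation~\eqref{eq:la-da-online-allocation} gives
\[
\mu^T(\theta)=\operatorname{DA}_M(E_T(\theta))=\operatorname{DA}_M(F_T(\theta)),
\]
and the same holds for \(\theta'\). If \(E_T(\theta)=E_T(\theta')\), Step 1 yields \(\mu^T(\theta)=\mu^T(\theta')\) through the first equality. If instead \(F_T(\theta)=F_T(\theta')\), the same conclusion follows through the second equality. This is why either hypothesis suffices, even though \(E_T\) and \(F_T\) can differ as graphs.

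\emph{Step 3: the extended settlement.} Under the execution rights of Corollary~\ref{cor:offscreen-settlement}, equation~\eqref{eq:panda-allocation} gives \(\widehat\mu^T(\theta)=\operatorname{DA}_M(G_T(\theta))\). So \(G_T(\theta)=G_T(\theta')\) implies equal settlements, again by Step 1.

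\emph{Step 4: history independence.} The final sentence of the corollary follows because neither \(\rho\) nor the intermediate consideration path enters the right-hand sides of these equations except through the terminal graph.

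There is no real obstacle beyond checking one point about Step 1. The DA outcome on a graph must be computed from the fixed profile \(\succ\), not from comparisons elicited along the run. Under the paper's standing assumption that the required comparisons are recorded, these coincide, so the conclusion holds.
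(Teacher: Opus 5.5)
Your proof is correct and follows the paper's own argument: the paper gives no separate written proof, and the corollary is read off from \(\mu^T=\operatorname{DA}_M(E_T)=\operatorname{DA}_M(F_T)\) in Theorem~\ref{thm:main} and \(\widehat\mu^T=\operatorname{DA}_M(G_T)\) in Corollary~\ref{cor:offscreen-settlement}, since with preferences fixed proposer DA depends only on the terminal graph (the \(G_T\) case is also the machine-checked \texttt{panda\_pathIndependent}). Your appeal to Gale--Shapley order independence is harmless but not strictly needed, because \(\operatorname{DA}_M(G)\) is already defined as a function of \((\succ,G)\).
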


Discovery must build an executable graph that identifies the target. Online
LA-DA clears the current graph without forgetting dormant authorizations;
the extension clears the ledger only with the additional execution rights.
Full stability on \(A\) requires the coverage and persistent-executability
conditions in Theorem~\ref{thm:endogenous-recommendation}.

Recommendation changes which opportunities are created; reengagement avoids
paying again for a prior action when eligibility returns. Neither turns a
dormant opportunity into a currently feasible match. Appendix
\ref{app:execution-rights-frontier} relates the attention frontier to these
distinct execution domains.

\section{Assignment-Aware Discovery}
\label{sec:endogenous-recommendation}

Random recommendation treats every absent edge symmetrically. Once a current
matching is available, this is wasteful: most omitted pairs cannot block that
matching because at least one endpoint would reject the deviation. The
platform should direct attention toward the missing part of the stability
certificate.

The section derives the discovery layer in two steps. A next-best path
establishes the logarithmic value of adaptation. A blocker-directed rule then
turns the same certificate logic into a finite operational policy. Both
benchmarks require reached opportunities to remain executable. They concern
the persistent-execution extension, not arbitrary nonmonotone eligibility.
Budgeted
global planning and the exact finite Bellman benchmark are reported in
Appendices \ref{app:budgeted-discovery} and
\ref{app:endogenous-recommendation-proofs}. A parameterized live controller
remains a replication-repository extension.

\subsection{The logarithmic value of an endogenous path}

Begin with the cleanest environment in which static and endogenous attention
can be compared. There are \(n\) proposers and \(n\) receivers, every pair is
acceptable, and all \(2n\) preference lists are independent uniform
permutations. A \emph{next-best recommendation policy} activates an unmatched
proposer and reveals only his favorite receiver who has not yet rejected him.
The resulting proposal and holding decision are retained before the next
activation, and reached pairs remain executable even off screen.
The policy has access to an exact next-best oracle and to receiver
rankings. Its cost below counts reached proposals, conditional on that
information service; it does not include the reports required to construct
the oracle.

\begin{theorem}
\label{thm:adaptive-attention-gap}
In the complete random-preference market with the information access above,
next-best recommendation followed by the optional ledger settlement in
Corollary~\ref{cor:offscreen-settlement} produces the
full-market proposer-optimal stable matching. If \(Q_n\)
is the number of recommended proposals and \(H_n=\sum_{j=1}^n1/j\), then
\begin{equation}
 \mathbb E[Q_n]\le nH_n
 \qquad\text{and}\qquad
 \mathbb E[Q_n]\sim n\log n.
 \label{eq:adaptive-proposal-cost}
\end{equation}
By contrast, suppose every pair is made executable independently before
clearing with probability \(p_n=c\log^2(n)/n\), where \(0<c<1\). With
probability approaching one, every stable matching on that graph leaves at
least
\begin{equation}
 \Delta_n=\frac{n^{1-\sqrt c}}{\log^2 n}
 \label{eq:static-stable-deficit}
\end{equation}
participants unmatched on each side. Thus homogeneous nonadaptive exposure
cannot produce a complete stable matching with high probability below order
\(n\log^2 n\). Conversely, if \(p_n=c\log^2(n)/n\) with \(c>9/4\), the
nonadaptive graph has a complete stable matching with probability approaching
one. The nonadaptive threshold therefore has order \(n\log^2 n\), whereas the
endogenous proposal path uses expected order \(n\log n\).
\end{theorem}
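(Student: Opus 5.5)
The proof has three parts: correctness of the adaptive path, its proposal count, and the two static-exposure bounds.

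\emph{Correctness.} The next-best policy is exactly a sequential, McVitie--Wilson style run of proposer-proposing deferred acceptance on the complete market. Each activation reveals the proposal that full-information DA would make next, and the receiver's holding decision uses her true ranking. By order independence of DA, the final tentative assignment equals the full-market proposer-optimal stable matching \(\mu^M\). Every reached pair is retained and stays executable off screen. So the ledger graph \(G_T\) contains every proposal made along the path, including rejected ones. Proposer DA on \(G_T\) reproduces the same proposal sequence. Any proposal that full-market DA would make lies in \(G_T\), and the run stops before any proposer reaches an unrevealed receiver. Corollary~\ref{cor:offscreen-settlement} then gives \(\operatorname{DA}_M(G_T)=\mu^M\).

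\emph{Proposal count.} I would use the standard amnesiac-proposer coupling (Knuth; Wilson). Let each proposal go to a uniformly random receiver, with repeats allowed. This stochastically dominates the true count. The process ends once every receiver has been proposed to, so it is a coupon-collector process, giving \(\mathbb E[Q_n]\le nH_n\). For the matching lower bound \(\mathbb E[Q_n]\ge n\log n(1-o(1))\), I would cite the classical result that the number of proposals in random DA is asymptotically \(n\log n\), via Knuth's analysis or Pittel's.

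\emph{Static lower bound.} Let \(p_n=c\log^2 n/n\) with \(c<1\), so \(d=c\log^2 n\). The plan is to make the Theorem~\ref{thm:attention-cost-stability} mechanism work at growing \(d\). The fixed point gives \(t_d\approx\sqrt d=\sqrt c\log n\), so the predicted unmatched count is about \(ne^{-\sqrt c\log n}=n^{1-\sqrt c}\). The \(\log^2 n\) slack in \(\Delta_n\) absorbs error terms.

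The fixed-\(d\) local-weak-convergence argument does not extend directly. Instead I would use a direct DA proposal-counting argument in the spirit of \citet{arnosti2015short} and Pittel's analysis of unbalanced markets. Run proposer DA with deferred revelation of edges: each proposer scans his uniformly ordered list, and each candidate partner is present with probability \(p_n\). A receiver who has received \(j\) proposals accepts a new one with probability \(1/(j+1)\). Let \(R\) be the total number of proposals. Then the expected number of unmatched receivers is about \(\sum_w e^{-\cdot}\)-type terms. One shows that \(R\) concentrates around \(n\,t\) with \(t^2\approx d\), via a supermartingale bound on proposals made until each proposer exhausts his degree. Concentration of the unmatched count (Azuma or second moment) then gives at least \(\Delta_n\) unmatched with high probability. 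Matched-set invariance extends the conclusion to every stable matching.

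\emph{Static upper bound.} For \(c>9/4\), I would show that with high probability no proposer exhausts his list in proposer DA. The expected proposals per proposer are about \(\log n\cdot(1+o(1))\) in the effective sense \(t\approx \sqrt d\). The number of proposals a fixed proposer makes before acceptance has a roughly geometric tail. A union bound over \(n\) proposers needs the probability that a given proposer is rejected by all \(\approx d\) neighbors to be \(o(1/n)\). Heuristically this probability is \(e^{-t_d}\), which suggests \(\sqrt c>1\). I would aim for the constant \(9/4\), i.e. \(\sqrt c>3/2\), to pay for concentration losses in the receiver-competition estimate.

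\emph{Main obstacle.} The hardest step is the static bounds at growing degree \(d=\Theta(\log^2 n)\). The local fixed point must be replaced by quantitative concentration of the DA dynamics with explicit polylog error control. I would first try to adapt Pittel-style proposal-count tail bounds to the thinned random graph.
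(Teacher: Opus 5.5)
Your treatment of the adaptive half matches the paper's. Next-best sufficiency reproduces the full-list DA trace, and the ledger $G_T$ contains each proposer's reached prefix, so $\operatorname{DA}_M(G_T)$ replays that trace. Wilson's amnesiac coupon-collector coupling then gives $\mathbb E[Q_n]\le nH_n$, and Knuth supplies sharpness.

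The gap is in the static-exposure half. The paper does not derive \eqref{eq:static-stable-deficit} or the $c>9/4$ converse; it takes them directly from Theorems 5.3 and 5.2 of \citet{pittel2025constrained}. You instead try to re-derive them, and what you write is a heuristic rather than a proof. The claim that the proposal count concentrates around $nt$ with $t^2\approx d$ is asserted, not shown. The expected number of unmatched agents is left as ``$e^{-\cdot}$-type terms.'' The constant $9/4$ is chosen to match the statement rather than produced by any estimate. Your own union-bound heuristic points to $c>1$. Nothing in the sketch shows that a fixed proposer's exhaustion probability is $o(1/n)$ at degree $c\log^2 n$, and that requires polynomial-scale tail control well beyond the mean-field value $e^{-t_d}$.

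The concentration step you name would also fail on part of the range. Expose agents one at a time. Changing one agent's incident edges and marks changes the stable matched count by at most two, by the vacancy-chain argument applied to removal and re-insertion. Azuma therefore controls fluctuations only at scale $\sqrt n$, while the target count is $n^{1-\sqrt c}$. For $c\ge 1/4$ that fluctuation scale is at least the mean, so Azuma cannot deliver at least $\Delta_n$ unmatched agents with high probability; edge exposure is worse.

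A second-moment route would need covariance or correlation-decay bounds at growing degree. That is precisely what the fixed-$d$ local-limit argument behind Theorem~\ref{thm:attention-cost-stability} does not provide, as you note yourself. To close the proof you must either carry out that growing-degree analysis in full or, as the paper does, invoke Pittel's constrained random-market theorems, which state exactly these two bounds.
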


Appendix~\ref{app:endogenous-recommendation-proofs} gives the proof from
next-best sufficiency and the cited random-market bounds.

The comparison concerns complete stable assignment under two discovery rules.
Given correct next-best responses, adaptation saves a logarithmic factor in
reached proposals relative to independent static exposure. It does not bound
the cost of learning those responses or compare with every informed static
plan. In the frictionless benchmark the proposal path is ordinary asynchronous
DA. The optional settlement clears its recorded authorizations without additional
queries; the serial discovery stage need not be run again.

Exact ex ante planning is already combinatorial before matching interactions
are introduced. With heterogeneous delivery costs, choosing which receiver
columns to process under a budget is weakly NP-complete even in isolated
bilateral markets. This is Knapsack hardness, not a claim that a mandatory
complete receiver order is hard. Indeed, cumulative-ledger settlement removes
the order dependence in the sparse construction that arises under an
active-screen stopping rule. Appendix~\ref{app:budgeted-discovery} gives both
results and the tractable isolated-market boundary.

The policy below takes a different route. It conditions each recommendation
on the current assignment and targets a missing part of its stability
certificate. This bypasses ex ante subset optimization and guarantees a stable
endpoint, but it does not claim to minimize attention. Appendix
\ref{app:endogenous-recommendation-proofs} gives the exact finite adaptive
benchmark.

\subsection{A certificate-directed fallback}

Let \(A\subseteq M\times W\) be the full primitive graph of mutually acceptable
pairs, write \(A_m=\{w:(m,w)\in A\}\), and define \(G_m\) analogously for any
subgraph \(G\subseteq A\). Let \(G_0\subseteq A\) be the initially remembered
executable graph.
Receiver rankings are known. Proposer rankings may be directly reported or
certified from processed comparisons.

Appendix~\ref{app:endogenous-recommendation-proofs} defines the exact
potential-core policy and proves that it solves the finite hidden-compatibility
benchmark. The following policy replaces global Bellman search with a directly
observable progress measure: every query targets a demonstrated failure of the
current full-graph stability certificate.

Given current matching \(\mu=\operatorname{DA}_M(G)\), proposer \(m\)'s
willing-challenger menu
is
\begin{equation}
 R_m(\mu,G)=\{\mu(m)\}\cup
 \{w\in A_m\setminus G_m:m\succ_w\mu(w)\}.
 \label{eq:willing-challenger-menu}
\end{equation}
It contains the current assignment and exactly those omitted receivers who
would accept \(m\) in place of their current assignment.

At iteration \(t\), the policy:
\begin{enumerate}
  \item clears \(G_t\), obtaining
  \(\mu_t=\operatorname{DA}_M(G_t)\);
  \item certifies each proposer's favorite \(x_{mt}\) in
  \(R_m(\mu_t,G_t)\);
  \item recommends and retains \((m,x_{mt})\) whenever
  \(x_{mt}\neq\mu_t(m)\); and
  \item stops when every certified favorite is the current assignment.
\end{enumerate}
Proposer DA implements each reclearing. The optional extension permits the
growing remembered graph to remain executable; online LA-DA alone does not
make an off-screen pair eligible.

\begin{theorem}
\label{thm:endogenous-recommendation}
Suppose every favorite in \eqref{eq:willing-challenger-menu} is certified
correctly and every added edge remains executable in platform memory. Then:
\begin{enumerate}[label=\textnormal{(\roman*)}]
  \item every recommended pair blocks the current matching on the primitive
  graph;
  \item every nonterminal iteration adds at least one previously absent edge;
  \item the policy stops after at most \(|A\setminus G_0|\) nonterminal
  iterations; and
  \item its terminal matching is stable on \(A\) and Pareto efficient under
  the primitive preferences.
\end{enumerate}
\end{theorem}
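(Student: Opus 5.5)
The plan is to prove the four parts in order, using only the definition of \(R_m\) in \eqref{eq:willing-challenger-menu} and standard properties of proposer DA on a fixed graph. We also use the persistence assumption that \(G_t\subseteq G_{t+1}\subseteq A\).

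\textbf{Part (i).} Suppose \(x_{mt}\neq\mu_t(m)\). By definition of \(R_m\), \(x_{mt}\in A_m\setminus G_{t,m}\) and \(m\succ_{x_{mt}}\mu_t(x_{mt})\). Because \(x_{mt}\) is \(m\)'s certified favorite in a menu that contains \(\mu_t(m)\), we also have \(x_{mt}\succ_m\mu_t(m)\); strictness and \(x_{mt}\neq\mu_t(m)\) make this strict. So \((m,x_{mt})\) is a mutually acceptable pair in \(A\) whose two endpoints both strictly prefer each other to their \(\mu_t\)-partners. That is, it blocks \(\mu_t\) on \(A\).

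\textbf{Parts (ii) and (iii).} Part (ii) is immediate from (i). A nonterminal iteration has some \(m\) with \(x_{mt}\neq\mu_t(m)\), and that recommended pair lies in \(A\setminus G_t\) by construction. It is retained, so \(G_{t+1}\supsetneq G_t\); persistence of added edges is exactly where the executability hypothesis enters. Since \(|G_t\setminus G_0|\) strictly increases and is bounded by \(|A\setminus G_0|\), part (iii) follows.

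\textbf{Part (iv), stability.} At termination every proposer's favorite in \(R_m(\mu_T,G_T)\) is \(\mu_T(m)\), and \(\mu_T=\operatorname{DA}_M(G_T)\) is stable on \(G_T\). Take any pair \((m,w)\in A\) and split into two cases.
\begin{itemize}
\item If \((m,w)\in G_T\), it does not block because \(\mu_T\) is stable on \(G_T\).
\item If \((m,w)\in A\setminus G_T\) and \(m\succ_w\mu_T(w)\), then \(w\in R_m(\mu_T,G_T)\). Termination then gives \(\mu_T(m)\succeq_m w\), so \(w\not\succ_m\mu_T(m)\) and the pair is not blocking.
\end{itemize}
Individual rationality holds because \(G_T\subseteq A\) and DA matches only mutually acceptable pairs. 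Hence \(\mu_T\) is stable on \(A\).

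\textbf{Part (iv), Pareto efficiency.} This is the step I expect to need the most care, since the claim must be read as Pareto efficiency among all matchings feasible on \(A\) under the two-sided primitive preferences. I would argue it directly from stability rather than through proposer-optimality. Suppose \(\nu\) on \(A\) weakly improves every agent and strictly improves some agent \(i\). Then \(\nu(i)=j\neq\mu_T(i)\), with \(j\) a real partner because \(\mu_T\) is individually rational. Agent \(j\) weakly prefers \(i\) to \(\mu_T(j)\), and \(i\neq\mu_T(j)\), so by strict preferences \(j\) strictly prefers \(i\). Thus \((i,j)\in A\) blocks \(\mu_T\), contradicting stability on \(A\). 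I would also remark that \(\mu_T\) is not claimed to be proposer-optimal on \(A\), since only stability and Pareto efficiency are asserted.
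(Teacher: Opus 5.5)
Your proof is correct and follows essentially the same route as the paper: certification and menu membership make each recommended edge a primitive blocking pair that is absent from \(G_t\), which gives strict graph growth and the iteration bound. Termination together with stability of \(\operatorname{DA}_M(G_T)\) on \(G_T\) then rules out blocking pairs on \(A\), and any Pareto improvement would contain a blocking pair, exactly as in the paper's argument (your handling of the unmatched case and your remark that \(\mu_T\) need not be proposer-optimal on \(A\) are both accurate).
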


The proof, including the graph-growth termination argument, is in
Appendix~\ref{app:endogenous-recommendation-proofs}.

The policy refines recommendation in three ways. First, it conditions on the
allocation, not only on pair scores. Second, it screens omitted candidates by
receiver willingness before spending proposer attention. Third, it stops on
an economic certificate---the absence of a primitive blocking pair---rather
than after an arbitrary number of profile deliveries.

The bound counts successful graph additions, not all comparisons or failed
queries needed to certify a favorite. It establishes finite attainment of full
stability, not an attention-optimal search rate.
The policy needs each reached proposer's favorite on the current
willing-challenger menu, not a complete ranking. Prediction can rank unexplored
pairs, but certification of this menu-specific choice determines when the
platform may stop. Appendix~\ref{sec:top-choice-identification} shows how
retained approvals and low-dimensional profile characteristics can identify
the choice without recovering the taste vector or unreached rankings.

\section{Applications: Attention and Persistent State}
\label{sec:platform-evidence}

The theory separates two losses that are usually bundled together. Discovery
can fail to create enough valuable bilateral opportunities, and a dynamic
clearing process can discard opportunities that participants have already
authorized. No single public dataset measures both margins well. We therefore
use two complementary settings. Nearly complete speed-dating reports let us
hold a finite market fixed, restrict the opportunities available to the
clearing rule, and compare stable and unconstrained assignments on exactly the
same graph. Chilean school-choice records let us observe whether applications
submitted before an initial DA allocation remain economically consequential
after the assignment state changes. The first exercise is a structural replay;
the second is descriptive evidence from an operating matching institution.

The execution rights differ from binding current consideration. The main
speed-dating replays keep each selected graph executable through clearing;
they measure discovery and the same-graph stability cost, not the performance
of online LA-DA under disappearing eligibility. The Chilean records document
retained applications and later reengagement, not unrestricted off-screen consent.

\subsection{Speed-dating markets and replay identification}

The speed-dating study of \citet{fisman2006gender} records 8,378 directed
reports from 21 experimentally organized events. At the end of each short date,
both participants rated the partner and independently decided whether they
wanted another date. Event sizes range from 5 to 22 participants on the
proposing side. We treat each event as a finite matching market, men as
proposers, and women as receivers. The analysis never moves a participant
across events. Requiring usable ratings in both directions leaves 4,011 pairs
among 277 proposers and 274 receivers; 686 pairs have mutual second-date
consent. Appendix~\ref{app:application-supplement} reports the sample flow and
the exclusion of an event with a cap on affirmative decisions.

For event \(e\), let \(\bar A_e\) contain the pairs with reports in both
directions, let \(R_{ij}\) be evaluator \(i\)'s rating of partner \(j\), and let
\(Y_{ij}\) indicate that \(i\) requests another date with \(j\). Like scores
produce strict rankings; attractiveness and a fixed identifier break ties.
We use two definitions of an executable opportunity. Our preferred,
decision-based graph retains \((i,j)\) only if \(Y_{ij}=Y_{ji}=1\). The broader
graph places every observed date above the outside option. The latter is the
direct finite-market counterpart to Theorem~\ref{thm:attention-cost-stability};
the former respects the participants' bilateral continuation decisions.

A replay policy may reveal, hide, or reorder pairs in \(\bar A_e\), but it
cannot create an unobserved pair. Interpreting the resulting allocations
requires the reports to be portable across counterfactual exposure paths.

\begin{assumption}
\label{ass:empirical-policy-invariance}
For every pair in \(\bar A_e\), the ordinal ranking implied by \(R_{ij}\) and
the acceptability decision \(Y_{ij}\) are invariant to the counterfactual
recommendation order and opportunity capacity. Participants have no relevant
partners outside event \(e\), and the policy's randomization, execution rights,
and transition rule are known.
\end{assumption}

The restriction is substantive. It rules out fatigue, sequence learning,
strategic reporting, and cross-event search as responses to the
counterfactual. Conditional on it, however, the unusually complete bilateral
reports identify more than a single allocation.

\begin{proposition}
\label{prop:complete-profile-replay}
Under Assumption~\ref{ass:empirical-policy-invariance}, the terminal
opportunity graph, online or extended LA-DA assignment, maximum-cardinality
match rate, and the specified rule's match and full-graph blocking-pair rates are identified for any
adaptive policy supported on \(\bar A_e\). For a stochastic policy, their
distribution is identified by integrating over the policy's known random seed.
\end{proposition}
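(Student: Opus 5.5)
The argument is a determinism-and-measurability argument: every identified object is a deterministic function of reported primitives plus the policy's known randomness. Fix event \(e\) and a policy supported on \(\bar A_e\). By Assumption~\ref{ass:empirical-policy-invariance}, the ordinal rankings from \(R_{ij}\) (ties broken deterministically by attractiveness and identifier) and the acceptability indicators \(Y_{ij}\) on \(\bar A_e\) are the same in every counterfactual. They are observed for every pair in \(\bar A_e\), so the full primitive profile relevant to any supported policy is observed. This profile consists of the preference lists restricted to \(\bar A_e\) and the acceptable graph \(A_e\), under either the decision-based or the broad definition. The first step is to record this: the relevant state of the market is the observed profile, and no unobserved pair can ever enter.

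Next, treat a deterministic adaptive policy as a map from transcripts to the next reveal, hide, or reorder action. Its transition rule and execution rights are known, and every query it can pose has an answer determined by the observed profile, since answers depend only on \(R\) and \(Y\) on \(\bar A_e\). By induction on the (finite) number of steps, the whole history is a known function of the observed data. This covers consideration sets \(C^t\), ledgers \(K^t\), and active graphs \(E_t\), and hence the terminal graphs \(E_T\), \(F_T\), and \(G_T\). Termination follows from the finiteness of \(\bar A_e\), or from Theorem~\ref{thm:main} for LA-DA.

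Then each target is a deterministic functional of these identified objects. By Theorem~\ref{thm:main} the online LA-DA assignment is \(\operatorname{DA}_M(E_T)\); by Corollary~\ref{cor:offscreen-settlement} the extended one is \(\operatorname{DA}_M(G_T)\). Both are computable from the identified graph and the observed rankings, and Corollary~\ref{cor:terminal-graph-separation} confirms that nothing else about the path matters. The maximum-cardinality match rate is a function of the relevant graph alone, either \(A_e\) or the policy's terminal graph. The specified rule's match rate is a count. Its full-graph blocking-pair rate is obtained by checking, for each \((m,w)\in A_e\), whether both endpoints prefer each other to their assignment, which uses only observed rankings.

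For stochastic policies, condition on the seed. Each seed yields a deterministic policy, so the above argument applies and every statistic is a known measurable function of the seed and the data. The seed distribution is known, so pushing it forward identifies the joint distribution. The main obstacle is the inductive step: one must ensure that no policy action requires information off \(\bar A_e\) or depends on behavior beyond invariant reports. The support restriction together with the invariance clause, including the absence of outside partners, is exactly what rules this out, so I would make that step explicit.
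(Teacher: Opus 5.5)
Your proposal is correct and is essentially the paper's proof. Fix the event and the policy seed, let the invariance assumption supply the recorded ranking and acceptability response of each pair the policy selects, recurse through histories to the terminal state and the stated outcomes, and integrate over the known seed distribution; your appeal to Theorem~\ref{thm:main} and the corollaries only spells out the final step. One small slip: finiteness of \(\bar A_e\) does not by itself make an arbitrary adaptive policy terminate, since a policy could hide and re-reveal pairs indefinitely, so termination should be taken as part of the policy's specification, as the paper implicitly does when it speaks of the terminal state.
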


The result follows recursively: after fixing a policy seed, each history
selects a pair whose ranking and acceptability response are already recorded.
The data therefore determine the next history and, eventually, every terminal
outcome. Appendix~\ref{app:application-supplement} gives the formal proof. This
is identification of a model-based policy replay, not identification of the
behavioral response to deploying that policy.

\subsection{Does encounter order change reports?}

The realized rotation lets us probe the assumption on the dimension observed
in the experiment. For directed report \((i,j,e)\), define \(P_{ije}\in[0,1]\)
as normalized encounter position, from the first to the last date. We estimate
separately for the second-date decision and the like score
\begin{equation}
 Z_{ije}=\alpha_i+\gamma_j+\rho P_{ije}+\varepsilon_{ije},
 \label{eq:order-invariance-diagnostic}
\end{equation}
where \(\alpha_i\) and \(\gamma_j\) are evaluator and partner fixed effects.
The coefficient \(\rho\) is the fitted first-to-last change within the realized
schedule. Standard errors are clustered across the 21 event-level markets.
The estimation sample contains 8,128 reports with complete outcomes and valid
order information.

\begin{table}[!htbp]
\centering
\caption{Average sensitivity of reports to encounter order}
\label{tab:order-invariance}
\small
\begin{tabular}{@{}llrrr@{}}
\hline\hline
Sample & Outcome & Mean & First-to-last change & 95\% interval \\
\hline
All & Second-date request & 0.43 & -1.40 pp & [-5.66,+2.87] pp \\
All & Like score (1--10) & 6.13 & -0.15 & [-0.28,-0.03] \\
Proposers & Second-date request & 0.49 & +1.16 pp & [-5.38,+7.69] pp \\
Proposers & Like score (1--10) & 6.28 & -0.12 & [-0.30,+0.06] \\
Receivers & Second-date request & 0.37 & -3.96 pp & [-9.88,+1.96] pp \\
Receivers & Like score (1--10) & 5.99 & -0.19 & [-0.37,-0.00] \\
\hline\hline
\end{tabular}
\begin{minipage}{0.94\textwidth}
\footnotesize\emph{Notes:} Each row regresses the directed report on normalized encounter position, with evaluator and partner fixed effects. The coefficient is the fitted change from the first to the last date. Intervals use wave-clustered standard errors and the $t_{20}$ critical value. The decision mean is a fraction; its change is reported in percentage points. These diagnostics test average sequence drift in the realized schedule. They cannot establish pair-level invariance under an arbitrary counterfactual order.
\end{minipage}
\end{table}

The approval decision that defines the mutual-consent graph changes by
\(-1.4\) percentage points from the first to the last encounter, with a 95
percent interval of \([-5.7,2.9]\). The corresponding like score falls by
0.15 on a ten-point scale. The proposer-side estimates are smaller and the
receiver-side estimates are more negative, but the approval intervals are
wide in both subsamples. A decline in report levels does not by itself identify
a change in latent partner rankings.

This diagnostic is informative but not decisive. The design does not observe
the same pair in two different positions, so it cannot test pair-level
invariance under every counterfactual order. We therefore retain
Assumption~\ref{ass:empirical-policy-invariance}. Appendix
Table~\ref{tab:rank-detrending-sensitivity} removes the fitted position
component while holding consent, edge budgets, and sampled graphs fixed.
The mutual-consent stability gap remains 4.62 percentage points, compared with
5.27 under the original rankings. All resulting rank reversals occur between
equal like scores. This supports the same-graph comparison under alternative
tie breaking; it does not validate order-invariant consent.

\subsection{From exposure to executable opportunity}

The distinction between a delivered profile and an executable edge matters for
interpreting the theory. The degree \(d\) in
Theorem~\ref{thm:attention-cost-stability} counts mutually acceptable
opportunities, whereas a platform typically pays first to deliver profiles.
For the persistent-execution replay below, suppose each pair is delivered at
most once and remains executable if accepted. Let \(b_m\) be proposer
\(m\)'s number of delivered profiles, let \(H_{mq}\) denote the history before delivery \(q\), and let
\(w_{mq}\) be the profile delivered then. The expected number of executable
opportunities created for \(m\) is
\begin{equation}
 d_m(\pi)=\mathbb E_{\pi}\!\left[\sum_{q=1}^{b_m}
 \Pr_{\pi}\!\left(Y_{mw_{mq}}=Y_{w_{mq}m}=1\mid H_{mq},w_{mq}\right)\right].
 \label{eq:delivery-to-opportunity}
\end{equation}
Raw delivery capacity \(b_m\), bilateral conversion, and the placement of the
resulting edges are therefore distinct design margins. Increasing delivery can
raise graph thickness; better targeting can create more executable edges from
the same number of deliveries; assignment-aware targeting can place those
edges where stable clearing needs them.

The two graph definitions make the conversion wedge visible without assigning
it a causal interpretation. At full observed exposure, the all-partners graph
has 12.55 edges per proposer on average, while requiring mutual second-date
consent leaves 2.19. Thus most observed meetings do not become bilateral
opportunities under the decision-based definition. The replay below conditions
on those realized decisions. It evaluates how a fixed collection of executable
opportunities is placed and cleared, but it does not estimate whether a new
recommendation policy would itself change attention, consent, or reporting.

\subsection{A common opportunity budget}

The first experiment asks how much the placement of a fixed number of
executable opportunities matters. For each graph definition, write
\(A_e\subseteq\bar A_e\) for the primitive executable graph. We compare three
allocations within each event.

First, we run proposer DA on \(A_e\) and retain the distinct pairs reached by
its proposal path. Let \(Q_e\) denote the number of these pairs. This is an ex
post, assignment-aware benchmark: it reveals only the opportunities used by
the full-graph DA computation, not a deployable recommendation score. Second,
we sample exactly \(Q_e\) edges uniformly without replacement from \(A_e\) and
run DA on that static graph. Third, we compute a maximum-cardinality matching
on the same sampled edges. We repeat the random draw 1,000 times within each
event, average first within event, and give equal weight to the 21 events.

The event, rather than the directed report or simulated graph draw, is the unit
of analysis. Each event supplies one market-level outcome after averaging over
the 1,000 seeded draws. Reported standard errors describe variation across the
21 markets; simulation draws reduce Monte Carlo error but do not create new
independent markets. Equal weighting prevents the largest events from
determining the estimand and matches the design's interpretation as a replay
across experimentally organized markets.

For a subgraph \(G\subseteq A_e\), let \(s_e(G)\) be the share of the short
side matched by DA and let \(a_e(G)\) be the share matched by a
maximum-cardinality matching. If \(G_{er}(Q_e)\) is random draw \(r\) with
exactly \(Q_e\) edges, the two primary estimands are
\begin{align}
 V^{\mathrm{path}}
 &=\frac{1}{|\mathcal E|}\sum_{e\in\mathcal E}
   \left[s_e(A_e)-\mathbb E_r s_e\bigl(G_{er}(Q_e)\bigr)\right],
 \label{eq:path-value-estimand}\\
 T^{\mathrm{stab}}
 &=\frac{1}{|\mathcal E|}\sum_{e\in\mathcal E}\mathbb E_r
   \left[a_e\bigl(G_{er}(Q_e)\bigr)
        -s_e\bigl(G_{er}(Q_e)\bigr)\right].
 \label{eq:stability-tax-estimand}
\end{align}
The first is the value of directing a fixed opportunity budget along the DA
path rather than scattering it randomly. The second is the same-graph cost of
stability. The identity
\begin{equation}
 \begin{aligned}
 V^{\mathrm{path}}
 &=\frac{1}{|\mathcal E|}\sum_{e\in\mathcal E}
   \left[a_e(A_e)-\mathbb E_r a_e\bigl(G_{er}(Q_e)\bigr)\right]
   +T^{\mathrm{stab}}\\
 &\quad-\frac{1}{|\mathcal E|}\sum_{e\in\mathcal E}
     \left[a_e(A_e)-s_e(A_e)\right].
 \end{aligned}
 \label{eq:empirical-loss-decomposition}
\end{equation}
expresses the path advantage as the feasibility loss from restricting
opportunities, plus the restricted-graph stability tax, minus the full-graph
stability tax. Each of these three components is nonnegative; their signed
sum need not be. This is match-count accounting, not cardinal welfare.

\begin{table}[!htbp]
\centering
\caption{Assignment-aware and random opportunity discovery in the speed-dating markets}
\label{tab:attention-application}
\small
\begin{tabular}{@{}lccccc@{}}
\hline\hline
Opportunity & Edges per & Full-info. & Random & Random & Stability \\
definition & proposer & DA path & DA & maximum & tax \\
\hline
Mutual consent & 1.14 (0.08) & 60.8 (2.4) & 51.0 (1.7) & 56.2 (1.9) & 5.3 (0.7) \\
All partners & 3.65 (0.29) & 100.0 (0.0) & 85.9 (1.6) & 95.3 (1.3) & 9.4 (0.8) \\
\hline\hline
\end{tabular}
\begin{minipage}{0.96\textwidth}
\footnotesize\emph{Notes:} The common budget counts executable pairs, not raw deliveries. The full-information DA path uses observed rankings and mutual acceptability to select proposals; it is an ex post benchmark. For each of 21 events, its proposal count sets the budget for both random rules, which receive identical uniformly sampled edges from the same primitive graph. Rates divide by the short side. Entries are equal-event means; standard errors in parentheses treat events as independent markets. All partners ranks every rated date above the outside option. Mutual consent retains pairs with two yes decisions. The replay holds rankings and decisions fixed; it is not a field treatment effect.
\end{minipage}
\end{table}

The preferred mutual-consent specification gives the economically stricter
comparison. The DA proposal path uses 1.14 executable opportunities per
proposer and matches 60.8 percent of the short side. Random DA with the same
number of opportunities matches 51.0 percent, whereas an unconstrained maximum
matching on those exact random edges reaches 56.2 percent. The 9.8-point path
advantage equals a 17.6-point feasibility loss plus a 5.3-point restricted-graph
stability tax, less a 13.1-point full-graph tax. It is positive in 18 of 21
events, negative in one, and zero in two. Holding each proposer's opportunity
count fixed as well reduces the average advantage to 3.4 points (Appendix
\ref{app:application-supplement}): part of the original gain comes from
allocating opportunities across proposers.

The broader all-partners specification isolates the assignment problem when
every recorded meeting is acceptable. The path matches the short side in every
event using 3.65 opportunities per proposer. Random DA at the same budget
matches 85.9 percent, although 95.3 percent is feasible on those same edges.
Of the 14.1-point path advantage, 9.4 points arise because stable clearing
requires a thicker graph and 4.7 points remain after allowing unconstrained
matching. Because every stable matching on a fixed graph has the same matched
set, changing the stable selection rule cannot remove the 9.4-point component.

\subsection{The empirical thickness frontier}

The equal-budget experiment evaluates one economically meaningful point: the
number of opportunities reached by full-graph DA. We next vary the revealed
share of each event's primitive graph. At every share, the stable and maximum
assignments receive exactly the same randomly sampled executable pairs. This
traces how quickly feasibility and stability convert graph thickness into
matches.

\begin{table}[!htbp]
\centering
\caption{Stable and feasible matching along the opportunity frontier}
\label{tab:attention-frontier}
\small
\begin{tabular}{@{}lrrrr@{}}
\hline\hline
Primitive graph & Edges per & Stable match & Maximum match & Stability \\
share & proposer & rate & rate & tax \\
\hline
\multicolumn{5}{@{}l}{\emph{A. All rated partners}} \\
10\% & 1.25 & 59.2 & 64.2 & 5.0 \\
30\% & 3.76 & 85.8 & 95.1 & 9.3 \\
50\% & 6.27 & 93.7 & 99.4 & 5.7 \\
80\% & 10.04 & 98.4 & 100.0 & 1.6 \\
100\% & 12.55 & 100.0 & 100.0 & 0.0 \\
\addlinespace[0.35em]
\multicolumn{5}{@{}l}{\emph{B. Mutual second-date consent}} \\
10\% & 0.21 & 17.1 & 17.3 & 0.2 \\
30\% & 0.66 & 37.8 & 40.1 & 2.3 \\
50\% & 1.10 & 47.1 & 52.4 & 5.3 \\
80\% & 1.75 & 56.6 & 66.2 & 9.6 \\
100\% & 2.19 & 60.8 & 73.8 & 13.1 \\
\hline\hline
\end{tabular}
\begin{minipage}{0.91\textwidth}
\footnotesize\emph{Notes:} Each row samples the stated share of executable edges uniformly without replacement within each event. Stable match is proposer deferred acceptance; Maximum match is a maximum-cardinality matching on exactly the same sampled graph. The stability tax is their percentage-point difference. Entries are unweighted means across 21 events and 1,000 seeded graph draws per non-full event cell. The complete eight-point frontier and wave-level results are in the replication package.
\end{minipage}
\end{table}

Panel A shows the nonmonotone same-graph stability gap predicted by the theory.
When all observed partners are acceptable, revealing 10 percent of the graph
creates 1.25 edges per proposer and leaves little room for either rule. At 30
percent, maximum matching reaches 95.1 percent while stable matching reaches
85.8 percent. The gap then closes as the graph becomes thick: at 80 percent the
rates are 100.0 and 98.4 percent, and both rules match the short side on the
complete graph. Stability is most costly between the extremes, where enough
opportunities exist to support many assignments but not enough to accommodate
the additional blocking constraints cheaply.

Panel B separates bilateral consent from exposure. The complete
mutual-consent graph contains only 2.19 executable edges per proposer on
average. Even after every consenting pair is included, maximum matching reaches
73.8 percent of the short side and stable matching reaches 60.8 percent. The
observed market can therefore lose matches at three distinct stages: some
meetings do not produce bilateral consent, a restricted policy may fail to
surface the right consenting pairs, and stability may prevent the clearing rule
from attaining the feasible matching size on the graph that remains.

The frontier preserves the realized preference heterogeneity of the 21 events
rather than imposing the independent random rankings of
Theorem~\ref{thm:attention-cost-stability}. The small event-level markets are
not used to estimate that theorem's asymptotic exponent. They show instead that
its central same-graph comparison is empirically consequential under observed
human rankings and bilateral decisions.

\subsection{Persistent application state in Chile}

Chile's national School Admission System supplies a complementary field
setting in which submitted rankings are carried from an initial centralized
allocation into a wait-list allocation \citep{mineduc2025sae}. The public 2024
regular-stage files contain 473,482 applicants and 1,638,432
applicant--school--course records, representing 1,613,351 applicant--school
pairs, including ranked applications, priorities,
school-specific lotteries and vacancies, initial assignments and responses,
and wait-list offers and responses. Families do not submit a new ranking
between the two reported allocations. An applicant can accept an initial
placement while keeping previously submitted, better-ranked options active.

\begin{table}[!htbp]
\centering
\caption{Consideration and retained reengagement in Chilean school choice}
\label{tab:chile-sae-attention}
\small
\begin{tabular}{@{}lrrrrr@{}}
\hline\hline
\multicolumn{6}{l}{\emph{Panel A: Submitted and local opportunity sets}} \\
\hline
Applicants & \multicolumn{5}{r}{473,482} \\
Mean voluntarily ranked schools & \multicolumn{5}{r}{3.04} \\
Share ranking at most three schools & \multicolumn{5}{r}{71.2\%} \\
Applicants in geocoded menu sample & \multicolumn{5}{r}{428,488} \\
Median nearby schools offering seats & \multicolumn{5}{r}{29} \\
Median share of local menu ranked & \multicolumn{5}{r}{6.8\%} \\
\hline
\multicolumn{6}{l}{\emph{Panel B: Outcomes from retained applications}} \\
Initial status & Retained better & Applicants & Favorable & Accepted & Accepted \\
 & ranked options & & offers & gains & gain rate \\
\hline
Accepted and waited & 0 & 599 & 0 & 0 & 0.0\% \\
Accepted and waited & 1 & 39,952 & 2,475 & 2,475 & 6.2\% \\
Accepted and waited & 2 & 19,518 & 1,775 & 1,775 & 9.1\% \\
Accepted and waited & 3-4 & 17,245 & 2,050 & 2,050 & 11.9\% \\
Accepted and waited & 5+ & 5,651 & 998 & 998 & 17.7\% \\
Initially unassigned & All listed & 34,444 & 3,958 & 2,204 & 6.4\% \\
\hline
Total retained & -- & 117,409 & 11,256 & 9,502 & 8.1\% \\
\hline\hline
\end{tabular}
\begin{minipage}{0.96\textwidth}
\footnotesize\emph{Notes:} Public 2024 regular-stage SAE data. Listed schools exclude options automatically added to preserve enrollment continuity. The local menu contains grade-compatible schools reporting initial vacancies within 5 km of the applicant's noisy geocode, after applying school gender restrictions. Listed schools is a mean; local menu and coverage are medians. Priority is the statutory priority classification. Preference positions rank course options; school counts deduplicate school identifiers. For accepted applicants, retained options are distinct other schools with a voluntarily ranked course above the actually assigned course; all other voluntarily ranked schools count when that course was an automatic continuity option. An accepted gain is a change of school to a strictly higher-ranked course option, a change from an automatic continuity course to a voluntary course at a different school, or a newly accepted ranked placement for an initially unassigned applicant. Local opportunity is not preference, and the wait-list transitions are administrative accounting rather than a causal estimate of LA-DA.
\end{minipage}
\end{table}

Panel A of Table~\ref{tab:chile-sae-attention} documents sparse submitted
opportunity sets. Excluding schools added automatically to preserve enrollment
continuity, applicants rank 3.04 schools on average and 71.2 percent rank at
most three. For applicants with usable noisy coordinates, we construct a local
menu from schools that report initial vacancies and satisfy grade and gender
restrictions. The median applicant has 29 such schools within 5 km but ranks
6.8 percent of that menu. A nearby unlisted school need not have been noticed,
acceptable, or preferred, so this comparison measures institutional
opportunity rather than latent welfare.

Panel B documents the economic use of persistent state. Among 82,965 applicants
who accepted an initial placement and elected to wait for a better submitted
choice, 7,298 later had an accepted move to a different school at a
higher-ranked course option, or from an automatic continuity course to a
voluntarily ranked course at another school. The accepted-gain
rate rises from 6.2 percent with one retained better-ranked option to 17.7
percent with five or more. Among 34,444 initially unassigned applicants who
remained in the process, 3,958 received a later offer and 2,204 accepted it. In
total, 9,502 of 117,409 retained applicants had an accepted improvement or new
placement implemented from their original ranking.

Two checks clarify the source and robustness of these transitions. The
Ministry's separate wait-list capacity file changes reported vacancies for
only 34 of 79,168 course records: 22 expansions add 22 vacancies and 12
contractions remove 150. These revisions are neither large nor plausibly
exogenous enough to define a natural experiment; responses and released seats
also change the assignment state. Separately, excluding 12,173 applicants in
administrative family blocks leaves 8,644 accepted gains among 105,236 retained
applicants, an 8.2 percent rate compared with 8.1 percent in the full sample.

These records do not identify the allocation that a mechanism without retained
rankings would have produced, and list length is endogenous. They establish a
narrower institutional fact: changing assignment conditions make earlier
applications consequential at scale, and the system implements those choices
from persistent authorization rather than requiring a new application round.
Appendix~\ref{app:external-evidence} gives the file construction and additional
evidence on the stages of the attention funnel.

\section{Exchange under Limited Attention}
\label{sec:la-ttc-extension}

Retaining valid decisions is also useful when the allocation objective is
exchange rather than bilateral stability. Consider a housing market with a
finite set \(N\) of owners, each endowed with object \(i\), and fixed strict
preferences over objects. A known graph \(G\subseteq N\times N\) specifies
eligible assignments and includes every self-edge. Keeping the endowment
is always possible. The reference rule is top trading cycles (TTC)
\citep{shapley1974cores,roth1977domination}.

\emph{Limited-attention top trading cycles} (LA-TTC) separates the information
needed to point from the authority needed to trade. For each remaining owner,
compare eligible remaining objects two at a time, retaining the preferred
one and recording the answer. Reuse valid recorded comparisons whenever that
choice recurs. The owner points to the original owner of the best object.
Identify the directed cycles, record their assignments, and remove their
owners and objects from the computation. Repeat until no owner remains.
These removals are computational: actual transfers wait until the necessary
evidence and permissions are established.

Assume that each required comparison eventually receives a truthful response
while its two alternatives are considered. Eligibility and preferences remain
fixed, evidence is retained, and every required exchange permission remains
valid through settlement. Such permission may come from a standing mandate
or from authorization of a specified conditional exchange; a preference
comparison alone does not supply it.

\begin{proposition}
\label{prop:la-ttc-extension}
Under these conditions, LA-TTC returns the reference TTC allocation on \(G\)
after at most \(|N|\) cycle-removal rounds, using no repeated successful
named comparison. The allocation is individually rational and in the strong
core of the restricted housing economy, hence Pareto efficient on \(G\).
\end{proposition}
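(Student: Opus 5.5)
The plan is to compare LA-TTC round by round with a reference run of TTC on \(G\) that uses the same cycle-removal order. The key invariant is that at the start of every round both procedures face the same set of remaining owners and objects. Induction on rounds then gives equality of the allocations, and the strong-core and efficiency claims follow from the classical theory of TTC \citep{shapley1974cores,roth1977domination} applied to the restricted economy.

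First, fix a round with remaining owner set \(R\). For each owner \(i\in R\), the eligible remaining objects \(\{j\in R:(i,j)\in G\}\) are nonempty because \(G\) contains every self-edge. The pairwise tournament keeps the preferred element of each comparison. By assumption every required comparison eventually receives a truthful answer while both alternatives are considered, and preferences and eligibility are fixed. Hence the surviving object is the \(\succ_i\)-maximum of that finite set, which is exactly the object TTC on \(G\) has \(i\) point to. Since the pointing graphs coincide, the cycles coincide as well. Every finite functional digraph has at least one cycle, so at least one owner leaves in each round, and there are at most \(|N|\) rounds.

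Because removals are only computational and the permissions remain valid through settlement, the recorded assignments can all be executed. Induction on rounds therefore yields the TTC allocation on \(G\).

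For the claim that no successful named comparison is repeated, I rely on the reuse rule. A recorded comparison between objects \(a\) and \(b\) for owner \(i\) stays valid because preferences are fixed, and recorded evidence is retained. Any later tournament that needs the same pair reads the record instead of querying again. The only new queries are comparisons whose answer has not yet been recorded, so no comparison is successfully asked twice. The delicate point, and the one I expect to be the main obstacle, is stating precisely what ``successful'' means. A comparison attempted while one alternative is off screen gets no response and must not count. I would therefore define a successful comparison as one that returns a truthful recorded answer, and check that the reuse rule applies exactly to such records. I also need to confirm that each round's tournament terminates, which holds because it involves finitely many pairs and each one is eventually answered.

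Finally, individual rationality holds because every owner's eligible set contains their own endowment, so the object they end up with is weakly preferred to it. For the strong core, I would reproduce the standard argument on \(G\). Suppose a coalition \(S\) could weakly improve all its members through a \(G\)-feasible reallocation of its own endowments, with at least one member strictly better off. Take the earliest TTC round in which a member of \(S\) is removed. Every member removed in that round received their favorite eligible object among those remaining, and all of \(S\)'s endowments are still present at that point. So none of them can strictly improve, and weak improvement forces them to receive their TTC object within \(S\). Repeating the argument round by round shows that no member of \(S\) strictly improves, a contradiction. Pareto efficiency on \(G\) is the special case where the coalition is all of \(N\).
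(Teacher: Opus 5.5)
Your exactness argument and your reuse argument for comparisons are essentially the paper's. The exactness argument runs as follows: tournament pointers coincide with the reference TTC pointers, so the cycles are identical, induction runs over rounds, and at least one owner leaves per round. The paper also records an explicit count of at most \(\sum_i\binom{d_i}{2}\) new successful comparisons, where \(d_i=|B_i|\).

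The real difference is the core step. The paper does not rerun the Shapley--Scarf argument on \(G\). Instead, it extends each owner's ranking by placing every object outside \(B_i\) below the endowment. Ordinary TTC on this unrestricted economy never selects an excluded object, because the endowment stays available until its owner leaves, so the two procedures coincide. The paper then imports individual rationality and strong-core membership from the classical theorem, since a \(G\)-feasible weakly blocking coalition would also block the extended allocation. That reduction is shorter and transfers every classical property at once.

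Your direct proof is self-contained, but it has to carry the whole induction. The phrase ``repeating the argument round by round'' hides its one nontrivial step. Your claim that all of \(S\)'s endowments are still present holds only at the first round in which a member of \(S\) leaves. Write \(x\) for the blocking reallocation, \(\mu\) for the TTC allocation, and \(T\subseteq S\) for the members already removed. You need the following chain. First, \(x(j)=\mu(j)\) for \(j\in T\). Each such object belongs to \(S\) and to an already removed cycle, so \(\mu(T)\subseteq T\). Hence \(x(T)=T\), and \(x\) sends \(S\setminus T\) to objects that are still present. Only then does the first-round argument apply again.

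Likewise, your individual-rationality claim needs more than the endowment being in the eligible set. It uses the fact that the endowment is still remaining when the owner points, which holds because objects leave only with their owners.
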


Remembered comparisons give exactly the reference pointers, so the same
cycles are removed. Appendix~\ref{app:la-ttc-extension} proves the result,
states its formal-verification boundary, and reports a school-choice query
replay. The extension does not assert completion under an insufficient total
attention budget: binary comparisons require only two alternatives at once,
but eventual responses do not bound elapsed time. Nor does it permit an early
irreversible exchange merely because its
participants currently point to one another. Later information can change the
reference allocation. The common lesson is narrower: preserve valid evidence
and authority between episodes, and settle only after the rule's required
decisions have been established.

\section{Conclusion}

This paper studies the allocation cost of inattention in matching markets.
Classical matching theory begins with a graph of known opportunities. A
platform must create opportunities and maintain assignments as eligibility
changes. Valid authorization and current executability are different objects.
Remembering a proposal does not permit the mechanism to assign an ineligible pair.

Combining existing random-market limits gives a same-graph benchmark for what
execution cannot do: under homogeneous
thin attention, every stable mechanism leaves approximately
\(e^{-\sqrt d}\) unmatched where maximum matching leaves \(e^{-d}\). The
authorization lower bound shows what execution must remember: forgetting any
independently pivotal valid proposal forces either an allocation error or
another user query. Online LA-DA retains dormant applications and reengages
them when eligible, returning proposer DA on final consideration. Clearing
the full ledger is an optional extension with additional off-screen execution
rights. Assignment-aware recommendation changes the graph itself; with
exact next-best information and persistent executability, the random-market
benchmark gives a logarithmic saving in reached proposals.

The speed-dating evidence exposes the same margins. On mutually consenting
pairs, proposal-path discovery matches 60.8 percent of the short side at a
budget where random stable exposure matches 51.0 percent and random maximum
matching reaches 56.2 percent. Chilean school choice shows that the second
margin is active in an operating market: 9,502 applicants accept an assignment
gain from rankings retained beyond the initial allocation.

The general design principle is not to replace stable matching with
prediction. It is to direct scarce attention toward assignment-relevant
opportunities, retain valid authorizations, and clear only eligible pairs.
Discovery loss remains when a valuable edge is never created. Reengagement
avoids asking again for a still-valid action when that opportunity returns.

\label{sec:main-text-end}

\clearpage
\appendix
\section{Proof of the Static-Attention Impossibility Theorem}
\label{app:attention-cost-proof}

This appendix proves Theorem~\ref{thm:attention-cost-stability}. The stable
half reproduces the balanced case of \citet[Theorem 11]{arnosti2015short}
using exact threshold messages on the locally limiting marked Poisson tree.
Writing his matched share as \(x\) and setting his imbalance parameter to one
gives \(d=\log^2(1-x)/x\), equivalent to
\eqref{eq:stable-fixed-point}. The maximum-cardinality half specializes the bipartite formula
of \citet{bordenave2013matchings} to Poisson degrees.

\subsection{Reduction from mechanisms to matchings}

Fix a realized opportunity graph and preference profile. If \(\Phi_n\) is ex
post stable, its output is a stable matching on that realization. The
rural-hospitals, or lone-wolf, property implies that every stable matching of a
one-to-one market matches the same agents. Hence
\[
 |\Phi_n|=S_n(d)
\]
realization by realization, including when \(\Phi_n\) randomizes over stable
matchings. If \(\Psi_n\) is merely graph feasible, its output is a matching on
the same graph, so \(|\Psi_n|\leq A_n(d)\); maximum-cardinality matching attains
equality. It remains to derive the two random-graph limits and invert the stable
loss formula. This reduction is what makes the first part of the theorem an
impossibility for all stable clearing rules rather than a performance statement
about deferred acceptance alone.

\subsection{Stable threshold messages}

Fix a finite realization and a stable matching \(\mu\). For each acceptable
edge \((m,w)\), define
\[
 P_{mw}=1\quad\Longleftrightarrow\quad
 m\text{ weakly prefers }w\text{ to }\mu(m),
\]
and
\[
 H_{wm}=1\quad\Longleftrightarrow\quad
 w\text{ weakly prefers }m\text{ to }\mu(w).
\]
Equality here means that the edge is matched. Stability gives the exact
identities
\begin{align}
 P_{mw}
 &=\prod_{w'\succ_m w}(1-H_{w'm}),
 &
 H_{wm}
 &=\prod_{m'\succ_w m}(1-P_{m'w}).
 \label{eq:stable-message-identities}
\end{align}
For example, if \(m\) weakly prefers \(w\) to his assignment, no better
receiver \(w'\) can weakly prefer \(m\) to her assignment, or \((m,w')\)
would block. Conversely, if \(m\)'s assignment is better than \(w\), the
factor associated with that assignment is zero. The receiver identity is
symmetric.

An agent is unmatched exactly when every incident message from the other side
is zero. Hence
\begin{align}
 \mathbf 1\{m\text{ unmatched}\}
 &=\prod_{w\sim m}(1-H_{wm}),
 &
 \mathbf 1\{w\text{ unmatched}\}
 &=\prod_{m\sim w}(1-P_{mw}).
 \label{eq:unmatched-message-product}
\end{align}

The local weak limit of \(G_n(d)\), rooted at a uniformly selected endpoint,
is a two-type Poisson Galton--Watson tree with mean offspring \(d\). First
apply the recursion at finite depth with a constant boundary: incoming messages
use disjoint branches and are independent of the rank marks at the focal
endpoint. Follow a uniform directed edge. Conditional on its uniform rank \(u\),
the number of better incoming holder messages is Poisson with mean \(dhu\),
where \(h=\Pr(H=1)\). Averaging \(e^{-dhu}\) over \(u\in[0,1]\) gives
\[
 p=\phi(dh),
 \qquad
 h=\phi(dp),
 \qquad
 \phi(z)=\frac{1-e^{-z}}{z}.
 \label{eq:poisson-message-system}
\]
Define \(\phi(0)=1\) by continuity. The boundary squeeze below justifies
passing the finite-depth calculation to its limiting messages.

Multiplying the two equations by their denominators yields
\[
 dph=1-e^{-dh}=1-e^{-dp}.
\]
Strict monotonicity of \(1-e^{-z}\) implies \(p=h\). With \(t=dp=dh\),
\eqref{eq:poisson-message-system} becomes
\[
 t^2=d(1-e^{-t}).
\]
The map \(t\mapsto t^2/(1-e^{-t})\) is continuous and strictly increasing
from zero to infinity, so there is one positive solution \(t_d\).

\subsection{Boundary independence and finite-graph transfer}

Root the computation tree at a directed proposer message and expose an even
number \(2r\) of generations. Set every boundary proposer message to zero to
obtain \(P_r^-\), and to one to obtain \(P_r^+\); propagate
\eqref{eq:stable-message-identities} toward the root. A one-generation update
reverses coordinatewise order, so two updates preserve it. Consequently,
for every finite boundary condition and every realization of the marked tree,
\[
 P_r^-\le P_r\le P_r^+.
\]
The lower sequence increases with \(r\), the upper sequence decreases, and the
averaged two-generation map is
\[
 \Psi(p)=\phi\bigl(d\phi(dp)\bigr).
\]
Their limiting marginal probabilities are fixed points of \(\Psi\). If
\(p=\Psi(p)\) and \(h=\phi(dp)\), then \((p,h)\) solves
\eqref{eq:poisson-message-system}; the uniqueness argument above therefore
makes the lower and upper marginal limits equal. Since the messages are binary
and coupled with \(P_r^-\le P_r^+\),
\begin{equation}
 \Pr(P_r^-\ne P_r^+)
 =\mathbb E[P_r^+-P_r^-]\longrightarrow0.
 \label{eq:boundary-disagreement}
\end{equation}
The same holds for receiver messages and, by a union bound over the finite-
mean root degree, for the root unmatched indicator.

For fixed \(r\), the marked radius-\(2r\) neighborhood of a uniformly sampled
finite-market agent converges to this tree and is acyclic with probability
tending to one. Whatever messages the rest of the finite graph induces at the
boundary, the true stable messages are squeezed between the two extremal
solutions. First taking \(n\to\infty\) at fixed \(r\), and then
\(r\to\infty\), Equation~\eqref{eq:boundary-disagreement} therefore implies
that the unmatched probability converges to its tree value.
At the Poisson root, the degree \(D\) is Poisson with mean \(d\), and its
incoming holder messages are conditionally independent Bernoulli \(h\). Hence
\[
 \Pr\{\text{root unmatched}\}
 =\mathbb E[(1-h)^D]
 =e^{-dh}=e^{-t_d}.
\]

For concentration, sample two distinct roots. Their fixed-radius
neighborhoods are asymptotically disjoint and converge jointly to independent
Poisson trees. The same boundary squeeze makes the covariance of their
unmatched indicators vanish. Exchangeability then gives convergence in
probability of the empirical unmatched share. The rural-hospitals property
and the reduction above extend the conclusion from proposer DA to every ex
post stable allocation rule
\citep{roth2008deferred}.

\subsection{Maximum-cardinality benchmark}

For Poisson degree generating function
\(\varphi(z)=e^{d(z-1)}\), Theorem 3 of
\citet{bordenave2013matchings} gives
\[
 \frac{A_n(d)}n\xrightarrow{p}1-\max_{0\le s\le1}F_d(s),
\]
where \(F_d\) is defined in \eqref{eq:max-fixed-point}; the normalization is
the share of either market side covered by a maximum matching. This is the
same graph sequence used for the stable calculation, so the comparison does
not change opportunity supply.

\subsection{Large-thickness asymptotics}

The stable fixed point implies \(t_d\to\infty\) and
\[
 \frac{t_d}{\sqrt d}=\sqrt{1-e^{-t_d}}\longrightarrow1.
\]
Therefore \(e^{-t_d}=e^{-\sqrt d(1+o(1))}\).

For maximum matching, put \(q=e^{-ds}\). Then
\[
 F_d(s)=f_d(q)=e^{-dq}-1+q(1-\log q),
 \qquad e^{-d}\le q\le1.
\]
Both endpoint candidates are \(e^{-d}(1+o(1))\). At an interior extremum,
\[
 -\log q=d e^{-dq}.
\]
Writing \(z=dq\), a local maximum also satisfies
\(z\log(d/z)\le1\). These two restrictions force either
\[
 z=d e^{-d}(1+o(1))
 \quad\text{or}\quad
 z=d-d^2e^{-d}(1+o(1)).
\]
Indeed, if \(z\le d/2\), the second-order condition gives
\(z\le C=1/\log 2\). Stationarity then implies
\(z\le d\exp(-d e^{-C})\), so \(dz\to0\). Substituting back gives
\(\log(d/z)=d+o(1)\), which yields the first branch with relative error
\(o(1)\). If \(z>d/2\), write \(z=d-r\). The second-order condition gives
\(r\le2\). Expanding the exact stationary equation yields
\(r=d^2e^{-d+r}(1+O(1/d))\). First \(r=O(d^2e^{-d})=o(1)\), and then
\(r=d^2e^{-d}(1+o(1))\), proving the second branch.
At a stationary point,
\(f_d(z/d)=e^{-z}(1+z)-1+z/d\).
On the small branch this is \(z/d-z^2/2+O(z^3)\); on the large branch,
expanding the original expression gives
\(f_d(1-r/d)=e^{-d+r}-r^2/(2d^2)+O(r^3/d^3)\).
Both equal \(e^{-d}(1+o(1))\). Any remaining stationary
points fail the local-maximum condition, while the two endpoints also have
value \(e^{-d}(1+o(1))\). Hence
\[
 \max_s F_d(s)=e^{-d}(1+o(1)).
\]

Finally set \(L=\log(1/\varepsilon)\). The stable identity
\(\varepsilon=e^{-t}\) gives
\[
 d=\frac{t^2}{1-e^{-t}}=\frac{L^2}{1-\varepsilon}.
\]
The maximum-matching asymptotic gives \(d=L(1+o(1))\). Since
\(\varepsilon_{\mathrm{stab}}(d)\) is strictly decreasing, any
\(d<L^2/(1-\varepsilon)\) has
\(1-\varepsilon_{\mathrm{stab}}(d)<1-\varepsilon\). Convergence in
probability implies
\(\Pr\{S_n(d)/n\geq1-\varepsilon\}\to0\). Whenever \(\Psi_n\)'s output is
stable, matched-set invariance gives \(|\Psi_n|=S_n(d)\). Hence the event that
\(\Psi_n\) returns a stable matching and matches at least a
\(1-\varepsilon\) share is contained in the preceding vanishing-probability
event. This completes the proof of Theorem~\ref{thm:attention-cost-stability}.

\section{LA-DA Specification and Verification}
\label{app:formal-verification}

This appendix gives the complete finite controller, proves Theorem
\ref{thm:main} in paper notation, and then records the machine-checked boundary.
It distinguishes that online result from Corollary~\ref{cor:offscreen-settlement}.
The large-market limit in Theorem~\ref{thm:attention-cost-stability} is proved
mathematically in Appendix~\ref{app:attention-cost-proof} rather than in Lean.

\subsection{Operational specification and allocation proof}

This subsection makes the mechanism in Section
\ref{sec:sequential-implementation} self-contained. Fix an admissible input
\(\theta=(\succ,k,C^0,\rho)\). For \(B\subseteq W\), let
\(U_m(B,w)\) add \(w\) when \(|B|<k_m\); when \(|B|=k_m\), it keeps the
\(k_m\) best elements of \(B\cup\{w\}\) according to \(m\)'s strict order.
At a holding boundary, the state is \((\widetilde\mu,C,K,U)\): the tentative
matching, current on-screen consideration, retained proposal correspondence
\(K=(K_w)_{w\in W}\), and receivers available in the current pass. We also
identify \(K\) with its edge graph \(\{(m,w):m\in K_w\}\). The tentative
matching guides discovery; \(C\) constrains execution and \(K\) preserves authorization.
LA-DA is the following finite controller.

\begin{enumerate}
  \item Initialize \(\widetilde\mu\) as unmatched, set \(C=C^0\), set every \(K_w\)
  empty, and set \(U=W\).

  \item Process the restriction of \(\rho\) to \(U\). When \(w\) is reached,
  replace each \(C_m\) by \(U_m(C_m,w)\).

  \item After that recommendation, each proposer records at most one new
  application: his best \(w'\in C_m\) such that
  \(w'\succ_m\widetilde\mu(m)\), \(w'\succ_m\emptyset\), and
  \(m\notin K_{w'}\).
  All proposers choose from the same pre-batch state, and each selected pair is
  added permanently to the proposal record while its authorization remains
  valid.

  \item Recompute \(\widetilde\mu\) by proposer DA on the active retained graph
  \(E(C,K)\). Remove from \(C_m\) every unmatched active proposal that \(m\)
  still prefers to his assignment and whose receiver prefers her current
  holder. Do not delete its record from \(K\). A later recommendation can
  therefore reactivate that proposal without another application.

  \item After the pass, let \(U'\) be all currently unmatched receivers. If a
  profitable considered pair remains unrecorded, the pass recorded a new pair,
  and \(U'\ne\varnothing\), set \(U=U'\) and start another pass. Otherwise
  freeze recommendation and enter terminal closure.

  \item Terminal closure recomputes holders, records simultaneously each
  proposer's best profitable considered unrecorded pair, and repeats the holder
  and rejection update until no such pair remains. Online LA-DA returns
  \(\mu^T=\widetilde\mu^T\), together with \(C^T,K^T\).
\end{enumerate}

The terminal state satisfies \(|C_m^T|\le k_m\), holders equal proposer DA on
\(E_T=E(C^T,K^T)\), and every profitable pair still considered has been
recorded. These are the certificates for online clearing, not a license to
execute the dormant records. Under Corollary~\ref{cor:offscreen-settlement},
an optional additional step instead returns
\(\widehat\mu^T=\operatorname{DA}_M(G_T)\). It uses all valid ledger pairs
and requires the additional comparisons and off-screen execution rights.

The baseline theorem fixes preferences and mutual acceptability during a run.
A revoked or expired pair must be removed from the executable market before a
settlement; arbitrary within-run revocation is not part of \(\theta\).
Temporary disappearance from current consideration is part of the controller
and does not erase a still-valid proposal record.

\begin{proof}[Proof of Theorem~\ref{thm:main}]
Each pair enters \(K\) at most once. Another serial pass begins only after the
preceding pass records a new pair, and each terminal-closure iteration also
records a new pair. There are at most \(|M||W|\) records, while every pass and
holder closure is finite. The controller therefore terminates.

Capacity updates keep at most \(k_m\) options and rejection only removes
options. At terminal closure, normalized holders give
\(\mu^T=\operatorname{DA}_M(E_T)\). Put \(F_T=A\cap C^T\).
Every edge in \(F_T\setminus E_T\) is unrecorded and, by exhaustion, is not
preferred by its proposer to \(\mu^T\). Thus \(\mu^T\) is stable on \(F_T\).

Let \(\nu=\operatorname{DA}_M(F_T)\). Proposer optimality implies that each
proposer weakly prefers \(\nu\) to \(\mu^T\). Every edge of \(\nu\) must
lie in \(E_T\): an unchanged partner is already there, and a strictly better
omitted partner would contradict exhaustion. Hence \(\nu\) is feasible and
stable on \(E_T\). Proposer optimality of \(\mu^T\) on \(E_T\) gives the
reverse comparison; strictness implies \(\nu=\mu^T\).
Finally, any feasible Pareto improvement would contain a changed pair whose
endpoints both strictly gain, contradicting stability. This proves
\eqref{eq:la-da-online-allocation} and its welfare conclusions.
\end{proof}

Corollary~\ref{cor:offscreen-settlement} follows by applying ordinary proposer
DA to \(G_T\) under its extra execution rights. It is a different output,
not an additional step needed to establish Theorem~\ref{thm:main}.

\subsection{Execution rights and the attention frontier}
\label{app:execution-rights-frontier}

Suppose off-screen settlement is permitted and recommendations accumulate a
retained graph distributed as \(G_n(d)\) in Section~\ref{sec:attention-cost}.
During discovery, allow profiles to disappear temporarily while retaining
valid records, and then settle the ledger. Corollary
\ref{cor:offscreen-settlement} and Theorem~\ref{thm:attention-cost-stability}
imply
\[
 1-\frac{|\widehat\mu^T|}{n}
 \xrightarrow{p}\varepsilon_{\mathrm{stab}}(d).
\]
Thus the accumulated-graph frontier applies under persistent execution rights.
With binding current eligibility, Theorem~\ref{thm:main} instead concerns
\(F_T\); identifying its distribution requires a separate argument.

The results distinguish loss from the generated opportunity graph from loss
due to forgetting or incorrect dynamic clearing. Recommendation changes the
first margin; online LA-DA addresses the second on the eligible graph. Retained
authorization avoids paying again for a prior action when eligibility returns.
It does not make dormant opportunities currently feasible.

\subsection{Minimal operational state}

Two holding-boundary histories are \emph{continuation equivalent} when every
common admissible future event sequence produces the same terminal allocation.
Their equivalence classes form the canonical minimal implementation state. For
the concrete controller, define
\[
 Z(h)=(C(h),K(h),U(h),r(h)),
\]
where \(U\) is the set of receiver columns still available in the current pass
and \(r\) is the round counter. The cached matching carries no independent
history at a holding-closed boundary because it is reconstructed as
\(\operatorname{DA}_M(E(C,K))\). Under the transition rules above, a common
future applied to equal \(Z\)-states produces equal subsequent states by
induction. Thus \(Z\) is sufficient for execution.

Theorem~\ref{thm:authorization-memory} identifies the coordinate that cannot in
general be compressed away. In its \(k\)-component construction, the
authorization vector belongs to \(\{0,1\}^k\), and exact query-neutral execution
recovers that vector from memory. The encoder is therefore injective and its
range has cardinality at least \(2^k\).

The Lean file
\path{MatchingLimitedConsiderationFormal/ImplementationState.lean} checks
each step of this argument. It constructs the continuation-equivalence
quotient, proves that every sufficient representation refines it, establishes
deterministic continuation sufficiency, and proves the \(2^k\)-state lower
bound. The replication documentation maps these statements to their exported
Lean declarations.

\subsection{Lean project and build}

The formal project is in \texttt{formal/}. It pins Lean 4.33.1 and imports
EconCSLib's one-to-one matching definitions. The complete verification command
is
\begin{verbatim}
cd formal
lake build
\end{verbatim}
The project files contain no \texttt{sorry}, \texttt{admit}, or local axiom
placeholders on the exported proof path.

\subsection{Online controller and optional settlement in Lean}

The state contains the tentative assignment, current consideration,
retained proposals, recommendation availability, and program location.
The formalization preserves its original internal identifiers for backward
compatibility. \texttt{Interleaved.lean} defines the total online controller
\begin{verbatim}
interleavedDA_MSRR
\end{verbatim}
using explicit well-founded recursion. The two recursive loops decrease the
finite retained-pair deficit. Lean checks terminal exhaustion, certificate
generation, capacity compliance, and the active-screen controller theorem:
\begin{verbatim}
interleavedDA_MSRR_terminalExhaustion
interleavedDA_MSRR_generatesCertificate
interleaved_da_msrr_consideration_correct
\end{verbatim}

\texttt{AllocationCharacterization.lean} identifies that same output with
proposer DA on final consideration:
\begin{verbatim}
interleavedDA_MSRR_assignment_eq_finalConsiderationDA
\end{verbatim}
Together these declarations support Theorem~\ref{thm:main}. The separate
persistent-authorization module defines the cumulative \texttt{candidateSet}
graph and the optional settlement in Corollary~\ref{cor:offscreen-settlement}:
\begin{verbatim}
pandaTerminalGraph
panda
panda_eq_proposerDA
panda_isProposerOptimalStable
panda_isParetoEfficient
panda_pathIndependent
\end{verbatim}
These historical \texttt{panda} identifiers denote the off-screen extension,
not online LA-DA's return value. Its terminal graph filters the retained ledger only by mutual acceptability;
it does not depend on final on-screen consideration. Thus
\texttt{panda\_pathIndependent} checks the conditional Corollary
\ref{cor:terminal-graph-separation} even for runs with different capacities,
initial consideration, and recommendation histories.

Python exposes the online output through
\texttt{run\_interleaved\_da\_msrr}; \texttt{run\_panda} retains the
two-stage extension for reproducibility. Neither public implementation is
silently changed by the distinction in the manuscript. Lean checks the
functional specifications; these real-valued definitions are
\texttt{noncomputable}, while the Python versions provide executable replays.

The Lean definitions use real-valued representatives of strict ordinal
preferences. The outside-option value is a normalization; no interpersonal
or cardinal comparison is used in the allocation theorem.

\subsection{Optimal discovery and query-neutral execution}

\texttt{ExpectedBellman.lean} verifies the finite-horizon expected-cost
Bellman recursion and the optimality of the policy it generates.
\texttt{PotentialCoreBellman.lean} verifies the structural reductions used in
Theorem~\ref{thm:potential-core-attention-optimality}: an everywhere-pivotal
query can be moved to the front without increasing realized cost, independent
target blocks satisfy an exact expected-cost direct sum, and a compatible
strict-mutual-top edge is pivotal for proposer DA. The principal declarations
are
\begin{verbatim}
frontload_pointwisePivotal_is_expectedCostOptimal
sequentialProduct_is_expectedCostOptimal
frontload_mutualTop_DA_query_is_expectedCostOptimal
\end{verbatim}

\texttt{AttentionExecutionSeparation.lean} verifies
Theorem~\ref{thm:attention-execution-separation}. It maps a deterministic
settlement rule over the leaves of an arbitrary binary discovery tree and
proves that the realized query sequence, every path cost, and expected cost
are unchanged. It then establishes both the information lower bound and
attainment by certified execution:
\begin{verbatim}
optimalTargetTree_le_composedController
certifiedExecution_of_optimalDiscovery
pandaSettlement_eq_terminalGraphDA
certifiedPANDA_of_optimalDiscovery
\end{verbatim}

For finite audit markets, the Python optimizer exports its complete query DAG
to a generated Lean certificate. Lean replays every hidden compatibility
realization, independently computes proposer DA, and compares the
certificate's expected cost with unrestricted Bellman. Regenerate and check
that certificate with
\begin{verbatim}
empirical/.venv/bin/python \
  formal/scripts/generate_potential_core_certificate.py --check
\end{verbatim}
The large finite replays use \texttt{native\_decide} and are supplemental
compiler-backed audits. They are not dependencies of the kernel-checked,
axiom-audited paper theorems.

\subsection{Endogenous recommendation}

\texttt{EndogenousRecommendation.lean} defines a structurally recursive
serial blocker-or-certificate controller. It clears the remembered graph,
selects a proposer with an omitted primitive block, adds that proposer's best
omitted willing challenger, and clears again. The missing-edge count strictly
decreases. Lean checks:
\begin{verbatim}
endogenousRecommendationStep_decreases
endogenousRecommendationTerminalGraph_stops
endogenousRecommendationTerminalGraph_covers_blocks
endogenousRecommendation_correct
endogenousRecommendation_full_market_correct
\end{verbatim}
The exported theorem proves that the terminal matching is stable on the full
primitive graph and Pareto efficient. This is the serial specialization of
Theorem~\ref{thm:endogenous-recommendation}; the paper statement allows the
same finite additions to be batched.

The formalization assumes that the selected favorite is correct and that an
added edge remains executable, not merely remembered. Preference certification and
consent are economic inputs to the controller, not consequences of Lean's
type system. Section~\ref{sec:top-choice-identification} supplies one
mathematical route for the first input.

For Theorem~\ref{thm:adaptive-attention-gap}, Lean checks the optional ledger-clearing
back end. The next-best trace lemma is paper-proved, and the random-market
proposal and admissibility rates are cited probabilistic results rather than
new machine-checked claims.

The budgeted planning result in
Theorem~\ref{thm:budgeted-recommendation-hardness} is a mathematical Knapsack
reduction rather than a Lean theorem. Proposition
\ref{prop:complete-order-correction} records the implementation boundary that
matters for the paper: a precedence reduction for the online
active-screen assignment does not establish hardness for optional
cumulative-ledger settlement. The
legacy fixed-order regression files remain in the repository as an audit of
the active-screen object; they are not evidence for the paper-facing
complexity theorem.

\subsection{Executable audits}

The large-market and finite-query calculations can be reproduced with
\begin{verbatim}
python3 theory/computation/delegated_inspection/audit_thickness_cost.py
python3 theory/computation/delegated_inspection/audit_query_complexity.py
python3 theory/computation/delegated_inspection/audit_hodge_preference_cycles.py
python3 theory/computation/delegated_inspection/audit_potential_core_bellman.py
\end{verbatim}
The first script compares finite random markets with the fixed-point formulas.
The second audits finite target-identification complexity, the third reports
the preference-cycle decomposition, and the fourth compares the potential-core
policy with unrestricted Bellman on enumerated finite markets. These
executable audits supplement, but do not replace, the mathematical proofs.

\section{Identifying the Next Proposal}
\label{sec:top-choice-identification}

Under the low-dimensional preference structure developed below, limited
consideration need not prevent a platform from discovering an agent's
most-preferred feasible candidates. The platform does not need to reconstruct
every profile the agent considered, recover a complete preference order, or
point identify every taste coefficient.  It needs a smaller object: whenever a
proposal is required, it must identify the proposer's best currently eligible
option that has not already rejected him.  We call this \emph{choice-functional
identification}.

\subsection{Low-dimensional tastes and approval reports}

Every receiver \(w\in\W\) has a profile representation
\(e_w\in\mathbb R^{d_x}\), where \(d_x\) is fixed and the representation is
constructed before the reports studied below are used. The
baseline proposer utility is the pure-characteristics index
\begin{equation}
  u_m(w)=\beta_m'e_w,
  \qquad
  \beta_m\overset{\mathrm{iid}}{\sim}N(\bar\beta,\Sigma).
  \label{eq:pure-characteristics-preference}
\end{equation}
Observed proposer characteristics can shift the conditional mean without
changing the argument.  Writing \(\beta_m=\bar\beta+\nu_m\) separates a
vertical component, \(\bar\beta'e_w\), from horizontal compatibility,
\(\nu_m'e_w\).  In the Gaussian baseline,
\begin{equation}
 \Pr\{u_m(w)>u_m(v)\}
 =\Phi\!\left(
   \frac{\bar\beta'(e_w-e_v)}
        {\sqrt{(e_w-e_v)'\Sigma(e_w-e_v)}}
 \right).
 \label{eq:vertical-horizontal-choice}
\end{equation}
when the denominator is positive.
We maintain distinct indices on each finite candidate set.  This holds almost
surely when every profile contrast has positive variance; fixed deterministic
tie breaking covers null events.
This is the Gaussian special case of the elliptical random-coefficient
geometry used by \citet{he2025random} to distinguish vertical and horizontal
differentiation from binary comparisons.  Equation
\eqref{eq:pure-characteristics-preference} is substantive: extrapolation fails
if unobserved pair-specific shocks dominate the characteristics index.
The deterministic identification result below uses the linear index and the
report-consistent taste region; normality supplies a convenient population and
Bayesian specialization but is not required for the geometric argument.

Suppose a processed profile produces one of three ordered reports.  For
thresholds \(\tau_m^L<\tau_m^S\), an explicit dislike, ordinary like, and
priority like respectively imply
\begin{equation}
 \beta_m'e_w<\tau_m^L,
 \qquad
 \tau_m^L\leq\beta_m'e_w<\tau_m^S,
 \qquad
 \beta_m'e_w\geq\tau_m^S.
 \label{eq:approval-inequalities}
\end{equation}
A direct comparison \(w\succ_m v\) adds
\(\beta_m'(e_w-e_v)>0\).  The distinction between approval and list traversal
in \citet{manzini2024approval} motivates treating these reports as coarser than
a submitted ranking.  If priority tokens are scarce, their clean interpretation
is relative---the selected profile is ranked above profiles that could have
received the same token---rather than a fixed cardinal threshold.

Let \(H_m\) collect only reports tied to verified processing, and let
\(\mathcal B_m(H_m)\) be the set of taste coefficients satisfying their
inequalities.  A missing response imposes no restriction unless processing is
known: silence may be nonattention rather than dislike.  Unknown thresholds
can be included in an augmented parameter vector.  A Bayesian implementation
instead uses the Gaussian prior in
\eqref{eq:pure-characteristics-preference} and replaces
\(\mathcal B_m(H_m)\) with a posterior uncertainty region.

\subsection{Identifying a choice without identifying preferences}

For a nonempty set \(R\subseteq\W\) of currently eligible and unrejected
receivers, define the potential-top set
\begin{equation}
 \mathcal T_m(H_m,R)
 =\bigcup_{\beta\in\mathcal B_m(H_m)}
   \arg\max_{w\in R}\ \beta'e_w.
 \label{eq:potential-top-set}
\end{equation}
The set contains every receiver who is optimal under some taste vector still
consistent with the attended reports.  The target is point identified when
\(\mathcal T_m(H_m,R)\) is a singleton, even if
\(\mathcal B_m(H_m)\) and the remaining preference comparisons are not.
For each \(w\in R\), define its normal cone in the profile geometry by
\begin{equation}
 N_R(w)=\left\{\beta:
   \beta'(e_w-e_v)\geq 0\ \text{for every }v\in R\right\}.
 \label{eq:profile-normal-cone}
\end{equation}
The cones are the regions of taste space that select each possible top
candidate.  Locating a taste region relative to these cones is a finite
classification problem; recovering the numerical coefficient is unnecessary.

\begin{proposition}
\label{prop:top-choice-identification}
For every report history and candidate set,
\begin{equation}
 \mathcal T_m(H_m,R)
 =\left\{w\in R:
   \mathcal B_m(H_m)\cap N_R(w)\neq\varnothing\right\}.
 \label{eq:potential-top-normal-cones}
\end{equation}
Suppose the true \(\beta_m\) belongs to \(\mathcal B_m(H_m)\).  Then the true
maximizer of \(u_m\) on \(R\) belongs to \(\mathcal T_m(H_m,R)\).  Hence a
singleton potential-top set identifies the true top option.

If \(H_m^r\subseteq H_m^{r+1}\) are cumulative reports from nested
recommendation histories, then
\[
 \mathcal B_m(H_m^{r+1})\subseteq\mathcal B_m(H_m^r),
 \qquad
 \mathcal T_m(H_m^{r+1},R)\subseteq
 \mathcal T_m(H_m^r,R).
\]
For nonempty compact \(\mathcal B_m(H_m)\), a candidate \(w^*\in R\) is the unique
certified top option whenever
\begin{equation}
 \inf_{\beta\in\mathcal B_m(H_m)}
   \beta'(e_{w^*}-e_w)>0
 \quad\text{for every }w\in R\setminus\{w^*\}.
 \label{eq:robust-top-margin}
\end{equation}
\end{proposition}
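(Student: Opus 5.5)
The proof unwinds definitions in four short steps.

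\emph{The normal-cone characterization.} By \eqref{eq:profile-normal-cone}, \(\beta\in N_R(w)\) holds exactly when \(\beta'e_w\ge\beta'e_v\) for every \(v\in R\), that is, when \(w\in\arg\max_{v\in R}\beta'e_v\). Hence \(w\in\mathcal T_m(H_m,R)\) if and only if some \(\beta\in\mathcal B_m(H_m)\) lies in \(N_R(w)\). This is \eqref{eq:potential-top-normal-cones}. The second claim follows by taking \(\beta=\beta_m\): the true maximizer \(w^\dagger\) satisfies \(\beta_m\in\mathcal B_m(H_m)\cap N_R(w^\dagger)\). If the potential-top set is a singleton, it must therefore equal \(\{w^\dagger\}\). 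The maintained distinct-index (or fixed tie-breaking) convention makes ``the'' true maximizer well defined. Ties at boundary \(\beta\)'s only enlarge \(\mathcal T_m\), so they cannot invalidate this inclusion.

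\emph{Monotonicity.} Each report in \(H_m\) contributes one inequality of the form \eqref{eq:approval-inequalities} or a comparison inequality, and \(\mathcal B_m(H_m)\) is the intersection of the corresponding half-spaces or slabs. With thresholds fixed, or with augmented parameters treated uniformly, adding reports adds constraints. So \(H_m^r\subseteq H_m^{r+1}\) gives \(\mathcal B_m(H_m^{r+1})\subseteq\mathcal B_m(H_m^r)\). The union in \eqref{eq:potential-top-set} is monotone in its index set, which yields the nesting of the \(\mathcal T_m\)'s. The paper stresses one caveat here: only reports tied to verified processing enter \(H_m\), so silence never deletes constraints.

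\emph{The margin condition.} Suppose \eqref{eq:robust-top-margin} holds. Then for every \(\beta\in\mathcal B_m(H_m)\) and every \(w\ne w^*\) we have \(\beta'e_{w^*}>\beta'e_w\). Thus \(\beta\notin N_R(w)\), and the argmax at \(\beta\) is exactly \(\{w^*\}\). Since \(\mathcal B_m(H_m)\) is nonempty, \(\mathcal T_m=\{w^*\}\). By the second claim this is the true top whenever \(\beta_m\in\mathcal B_m(H_m)\).

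\emph{Where the hypotheses matter.} Compactness is not needed for the implication itself; a strictly positive pointwise difference would do. Compactness makes the infimum a minimum of the continuous linear map \(\beta\mapsto\beta'(e_{w^*}-e_w)\), so the condition can be checked by finitely many linear programs. Nonemptiness is what rules out the vacuous case \(\mathcal T_m=\varnothing\). The only real care needed is at boundaries: ties when \(\beta\) lies on a cone face, and strict versus weak inequalities in \(\mathcal B_m\). The argument above handles both by working with weak cones and the strict margin.
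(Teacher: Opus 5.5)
Your proof is correct and follows the same route as the paper's: the normal-cone equivalence read off the definition of $N_R(w)$, coverage by substituting the true $\beta_m$, contraction of $\mathcal B_m$ as inequalities are added together with monotonicity of the union, and the margin condition forcing $\arg\max=\{w^*\}$ at every admissible $\beta$. Your side remarks are also sound and slightly sharper than the stated hypotheses: pointwise strict positivity already suffices without compactness, and nonemptiness of $\mathcal B_m(H_m)$ is what actually keeps $\mathcal T_m$ from being empty.
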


\begin{proof}
By \eqref{eq:profile-normal-cone}, \(w\) maximizes the linear index at
\(\beta\) if and only if \(\beta\in N_R(w)\), which proves
\eqref{eq:potential-top-normal-cones}.  The true coefficient is one of the
coefficients over which the union in \eqref{eq:potential-top-set} is taken,
proving coverage of the true maximizer.  Adding reports adds inequalities, so
the feasible coefficient region contracts; taking the union of maximizers over
a smaller region cannot add an option.  Finally,
\eqref{eq:robust-top-margin} makes \(w^*\) strictly better than every other
candidate for every coefficient in the region.
\end{proof}

This target is minimal in a literal sense.  On a fixed remaining set and under
the maintained strictness or fixed tie breaking, two taste vectors are
equivalent for the next proposal exactly when they select the same candidate.
Any exact recommendation statistic must distinguish two vectors that select
different candidates, but it need not distinguish vectors inside the same
normal-cone cell.  The choice functional is therefore the coarsest information
about tastes sufficient for that proposal.

The result identifies neither the realized historical consideration set nor a
complete ranking.  Nesting is useful because remembered reports monotonically
remove possible maximizers.  Nesting alone, however, does not separate
attention from taste.  Valid estimation still requires recorded processing,
exogenous menu or order variation, known exploration propensities, or another
maintained attention model
\citep{manzini2014stochastic,abaluck2021consumers,aguiar2023random}.

\subsection{Sequential elicitation is sufficient for deferred acceptance}

Full preference lists are also unnecessary for clearing.  Fix a finite
eligible graph and receiver rankings.  A \emph{next-best oracle} takes a
proposer and his set of acceptable receivers who have not rejected him and
returns his favorite member of that set.  Lazy deferred acceptance calls this
oracle only when an unmatched proposer needs to act.

\begin{proposition}
\label{prop:next-best-sufficiency}
Fix a finite marriage market with strict preferences and a proposer activation
schedule.  If every next-best oracle response is correct, lazy proposer
deferred acceptance generates the same proposal trace and terminal matching as
full-list proposer deferred acceptance under that schedule.  Consequently,
preferences among alternatives never reached by the proposal process need not
be recovered.
\end{proposition}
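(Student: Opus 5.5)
The plan is to couple the two procedures on a common run and argue by induction over the activation schedule. First I will fix what ``full-list proposer deferred acceptance under that schedule'' means. It is the asynchronous procedure in which the scheduled unmatched proposer proposes to his most-preferred acceptable receiver on the eligible graph who has not yet rejected him. The receiver then keeps her preferred proposer among her holder and the newcomer, according to her known ranking and the outside option, and rejects the other. Lazy DA runs the same loop, but obtains the proposal target from the next-best oracle applied to the set of acceptable receivers who have not rejected that proposer. Both procedures are deterministic functions of the state, given the schedule: the tentative matching and the rejection record.

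The key step is an induction on the step index \(s\). The claim is that after \(s\) steps both procedures have identical tentative matchings, identical rejection sets \(J_m^s\) for every proposer, and identical proposal traces up to \(s\).
\begin{itemize}
\item The base case is trivial, since both start with everyone unmatched and no rejections.
\item For the inductive step, the schedule picks the same proposer \(m\), and he is unmatched in both runs because the matchings agree. In full-list DA he proposes to the \(\succ_m\)-maximal element of \(A_m\setminus J_m^s\). The oracle is queried on exactly that set and, being correct, returns the same receiver. If the set is empty, both runs have \(m\) exit unmatched.
\item The receiver's holding decision uses only her ranking, which both procedures share. So the new holder, the new rejection, and hence \(J^{s+1}\) coincide.
\end{itemize}
Termination of full-list DA, after finitely many proposals because no pair is proposed twice, then transfers to the lazy run. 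The terminal matchings agree.

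For the consequence clause, I will observe that lazy DA accesses proposer preferences only through oracle answers, that is, through the top element of sets of the form \(A_m\setminus J_m\) actually reached. Any two proposer profiles that agree on these reached top choices therefore produce the same trace, and the second profile's trace is again the same by the same induction. So comparisons among alternatives never reached are irrelevant, and the coupling shows the output does not depend on them.

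The main obstacle is definitional rather than technical. I must make sure the oracle's input set is exactly the set full-list DA optimizes over, with eligibility and acceptability handled identically and the schedule's behaviour when a proposer's set is empty specified the same way in both runs. I will handle this by stating the two procedures with a shared state and a shared transition, differing only in the single line that selects the proposal target.
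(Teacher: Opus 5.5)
Your proposal is correct and follows essentially the same route as the paper. Both arguments induct over proposal events: from identical states the same proposer is activated, the correct oracle returns the top of the same unrejected set, and the receiver's holding decision and rejection update coincide, so the traces agree through termination. Your explicit handling of the empty-set case and your justification of the consequence clause (the lazy run touches proposer preferences only through top choices of sets actually reached) are somewhat more careful than the paper's brief version of the same argument.
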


\begin{proof}
Initially both procedures give every proposer the same unrejected set.  Suppose
their traces agree before a proposal event.  They then call on the same
proposer, whose oracle returns the first remaining receiver on his true list;
the proposal and the receiver's holding decision therefore agree.  The two
procedures remove the same rejected edge and reach the same next state.
Induction proves equality through termination.
\end{proof}

One identified top option is sufficient until it rejects the proposer.  The
platform then removes it and identifies the top of the remaining set.  Thus the
necessary information is an adaptively queried preference prefix, not a full
ranking.

For statistical certification, let
\(\widehat{\mathcal B}_m(H_m)\) be an anytime-valid confidence region satisfying
\begin{equation}
 \Pr\!\left\{
   \beta_m\in\widehat{\mathcal B}_m(H_m)
   \text{ for every proposer and every history reached by the policy}
 \right\}\geq1-\alpha.
 \label{eq:anytime-taste-coverage}
\end{equation}
On this event, every singleton certification from
\eqref{eq:potential-top-set} is correct.  Proposition
\ref{prop:next-best-sufficiency} therefore transfers the same probability to
equality with full-information proposer DA whenever the policy obtains a
singleton before each required proposal.  If the set is not a singleton, the
platform must collect another informative report, request a direct comparison,
or return a set-valued recommendation.  A point prediction alone does not
retain the guarantee.

\subsection{Learn, search, certify, clear}

The platform can now separate four tasks.  It \emph{learns} a taste region from
attended approvals; \emph{searches} the unobserved market using profile
representations; \emph{certifies} only comparisons shared by all admissible
tastes; and \emph{clears} mutually acceptable, authorized opportunities.
Exploration and execution use different graphs.  A profile may be shown to
learn; an edge enters the executable graph only after eligibility and bilateral
authorization are established.

LA-DA is the rolling implementation of the final task. At each proposal
event it needs only the best active, profitable, unrecorded proposal.  If the
rankings used for these events are directly reported or certified by
\eqref{eq:robust-top-margin}, Theorem~\ref{thm:main} applies to the generated
executable graph.  Application memory preserves an authorized edge while it is
dormant; preference memory avoids repeated elicitation.

\section{Budgeted Discovery and Global Plans}
\label{app:budgeted-discovery}

The recommendation problem becomes combinatorial when the platform must stop
before processing the full market. This appendix gives the narrow complexity
result used in the text and clarifies why complete-order complexity is not a
property of cumulative-ledger settlement. Throughout this appendix, settlement
has the additional execution rights of Corollary~\ref{cor:offscreen-settlement}.

Let a global discovery plan \(\rho\) be a no-duplicate list of receiver
columns, not necessarily containing every receiver. Processing column \(w\)
costs the platform a nonnegative integer \(c_w\). Discovery stops after
\(\rho\), LA-DA retains every valid proposal generated along that path, and
the platform settles the resulting cumulative graph by proposer DA. Given
nonnegative integer match values \(z_{mw}\), define
\[
 V(\rho)=\sum_{m:\mu^\rho(m)\ne\emptyset}z_{m,\mu^\rho(m)},
 \qquad
 C(\rho)=\sum_{w\in\rho}c_w,
\]
where \(\mu^\rho\) is the retained-ledger settlement. The decision problem asks
whether some plan satisfies \(C(\rho)\le B\) and \(V(\rho)\ge R\).

\begin{theorem}
\label{thm:budgeted-recommendation-hardness}
The budgeted recommendation decision problem is NP-complete. Its restriction
to empty initial consideration, unit display capacity, and isolated bilateral
markets is weakly NP-complete: in this restriction, proposer \(m_j\) and
receiver \(w_j\) are one another's only acceptable partners.
\end{theorem}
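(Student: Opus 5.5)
The proof has three parts: membership in NP, a Knapsack reduction for the restricted problem, and a pseudo-polynomial algorithm for that restriction.

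\emph{Membership in NP.} A certificate is the plan \(\rho\) itself: a no-duplicate list of at most \(|W|\) receiver columns. Given \(\rho\), the verifier simulates the LA-DA controller along \(\rho\). By the termination argument in the proof of Theorem~\ref{thm:main}, this simulation records at most \(|M||W|\) pairs, and each pass and holder closure runs in polynomial time. The verifier then settles the retained ledger by proposer DA (Corollary~\ref{cor:offscreen-settlement}), computes \(V(\rho)\) and \(C(\rho)\) in binary arithmetic, and checks \(C(\rho)\le B\) and \(V(\rho)\ge R\). All of this is polynomial in the input length.

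\emph{NP-hardness of the restriction.} I will reduce from 0--1 Knapsack. The instance has items \(j=1,\dots,J\) with weights \(a_j\), values \(v_j\), capacity \(B\) and target \(R\). From it I build the restricted market:
\begin{itemize}
  \item isolated pairs \((m_j,w_j)\), each the other's only acceptable partner;
  \item empty \(C^0\) and \(k_{m}=1\) for every proposer;
  \item costs \(c_{w_j}=a_j\), values \(z_{m_jw_j}=v_j\), with \(z=0\) on every other pair;
  \item the same budget \(B\) and target \(R\).
\end{itemize}

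The key lemma concerns any plan \(\rho\). It states that \(m_j\) ends matched to \(w_j\) in the settlement if and only if \(w_j\in\rho\).

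For the ``if'' direction, suppose \(w_j\) appears in \(\rho\). When \(w_j\) is processed, the update \(U_{m_j}\) places it in \(C_{m_j}\). It is the only acceptable option of \(m_j\), so it is profitable and unrecorded. Hence \(m_j\) records the application and \((m_j,w_j)\in G_T\). No other proposer ever applies to \(w_j\), so DA on \(G_T\) matches the pair.

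For the ``only if'' direction, suppose \(w_j\notin\rho\). Then \(w_j\) never enters consideration, so no application to it is ever recorded. Cross pairs are unacceptable, so \(m_j\) stays unmatched.

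I must also check one detail: unit capacity might let a later recommendation evict \(w_j\) from \(C_{m_j}\). This does not matter. Eviction only affects current consideration, while the record in \(K\) persists, and settlement uses \(G_T\), which ignores current consideration.

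With the lemma in hand, \(V(\rho)=\sum_{j:w_j\in\rho}v_j\) and \(C(\rho)=\sum_{j:w_j\in\rho}a_j\). The order of \(\rho\) is irrelevant, so plans correspond exactly to subsets of items. The Knapsack instance is therefore a yes-instance if and only if the constructed instance is. The construction is polynomial, so the general problem is NP-complete as well.

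\emph{Weak NP-completeness.} It remains to show the restriction is not strongly NP-hard, so I need a pseudo-polynomial algorithm. By the lemma, the restricted instance is itself a Knapsack instance over the pairs. The standard dynamic program over budget levels \(0,\dots,B\), taking values \(z_{m_jw_j}\) and weights \(c_{w_j}\), decides it in \(O(JB)\) time. One detail needs care: receivers outside the isolated pairs. In the restricted form every receiver \(w_j\) has a partner \(m_j\), so any extra columns have zero value, and the dynamic program can drop them.

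The main obstacle is the lemma's dependence on the controller's details. I must confirm that step~3 of the specification records \(m_j\)'s application in the same batch in which \(w_j\) arrives. I must also confirm that the pass-restart rule in step~5 never re-offers or skips columns outside \(\rho\). Here I will rely on the discovery stage stopping after \(\rho\), and on terminal closure only recording pairs that are already considered.
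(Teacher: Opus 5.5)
Your proposal is correct and follows the paper's proof. NP membership uses the plan as a polynomial certificate with polynomial simulation and ledger settlement. Hardness comes from a Knapsack reduction through isolated pairs, in which a plan settles exactly the pairs whose columns it processes. Weak NP-completeness follows from the immediate converse translation into Knapsack and its pseudo-polynomial dynamic program.

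Your extra checks make explicit what the paper states in one line. These are: the application being recorded in the same batch; under unit capacity, \(w_j\) displacing any unacceptable profile (and eviction being irrelevant anyway because settlement clears \(G_T\)); and no column outside \(\rho\) ever being processed.
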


\begin{proof}
A no-duplicate plan is a polynomial-size certificate. The finite controller
records each proposal at most once, and proposer DA on its retained graph is
polynomial, so the problem belongs to NP.

Reduce from zero-one Knapsack. For item \(j\) with integer weight \(d_j\) and
value \(q_j\), create an isolated mutually acceptable pair
\((m_j,w_j)\), set \(c_{w_j}=d_j\), and set
\(z_{m_jw_j}=q_j\). Recommending \(w_j\) records its unique valid application;
not recommending it leaves that component empty. Hence a plan processing
columns \(S\) costs \(\sum_{j\in S}d_j\) and settles with value
\(\sum_{j\in S}q_j\). A feasible plan exists exactly when the Knapsack
instance has a feasible subset. This proves NP-hardness of the general
problem; membership in NP gives NP-completeness. In the isolated-pair
restriction, the converse translation into Knapsack is immediate. Its
pseudo-polynomial algorithm establishes the stated weak classification for
that restriction, not for the general problem.
\end{proof}

The reduction deliberately contains no rejection chain or preference cycle.
It establishes the computational cost of allocating a heterogeneous discovery
budget, not hardness created by LA-DA. It also gives the tractable boundary.

\begin{corollary}
\label{cor:isolated-budget-planning}
In the isolated-pair market, the optimal plan is exactly a zero-one Knapsack
solution. It therefore admits the standard pseudo-polynomial dynamic program
and fully polynomial approximation scheme. If all column costs are one and at
most \(B\) columns may be processed, an optimum selects the \(B\) largest
positive match values.
\end{corollary}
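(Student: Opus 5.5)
The corollary follows from Theorem~\ref{thm:budgeted-recommendation-hardness}'s isolated-pair reduction, read in reverse. First I establish the settlement map. In the isolated-pair market with empty initial consideration and unit capacity, each receiver $w_j$'s only acceptable partner is $m_j$. Processing column $w_j$ displays $w_j$ to $m_j$. The application from $m_j$ to $w_j$ is then profitable, since $w_j\succ_{m_j}\emptyset$, so the controller records it in $K_{w_j}$. Columns $w_{j'}$ with $j'\ne j$ never create an acceptable pair involving $m_j$ or $w_j$. Hence for any plan $\rho$ processing set $S$, the retained graph $G_T$ is exactly $\{(m_j,w_j):j\in S\}$. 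Because these edges are disjoint, proposer DA on $G_T$ matches every one of them. I will check the edge case where a later column displaces $w_j$ from $m_j$'s unit-capacity screen. The displayed receiver $w_{j'}$ is unacceptable to $m_j$, so the update $U_{m_j}$ keeps the better element, $w_j$. In any case, the record in $K$ persists, and settlement uses $G_T$ under Corollary~\ref{cor:offscreen-settlement}.

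This gives $V(\rho)=\sum_{j\in S} z_{m_jw_j}$ and $C(\rho)=\sum_{j\in S}c_{w_j}$. Neither quantity depends on the order of $\rho$, only on the set $S$. So maximizing $V$ subject to $C\le B$ over plans is exactly the zero-one Knapsack problem with weights $c_{w_j}$ and values $z_{m_jw_j}$. Conversely, every Knapsack subset is realized by listing its columns in any order. The standard Knapsack results then transfer directly: the $O(|W|B)$ dynamic program over budget levels, and the FPTAS obtained by scaling values. The map from plans to subsets preserves both cost and value exactly, so approximation ratios are preserved as well.

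For unit costs, the constraint becomes $|S|\le B$. The problem is then to choose at most $B$ values maximizing their sum. Since every $z\ge0$, including only positive values is optimal, and an exchange argument shows an optimum takes the $B$ largest positive values, or all of them if fewer than $B$ are positive. The one step that needs care is the settlement-map claim. It requires confirming that the controller's pass structure and early stopping never prevent a processed column from recording its pair. Within a pass, recording happens at the moment of recommendation, so this should hold.
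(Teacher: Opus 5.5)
Your proposal is correct and follows the paper's route: you read the Knapsack reduction in Theorem~\ref{thm:budgeted-recommendation-hardness} in reverse, noting that each plan's cost and settled value depend only on the set $S$ of processed columns, so the standard Knapsack dynamic program, FPTAS, and unit-cost greedy rule transfer directly. You also check that each processed column records its home pair under unit capacity: an unacceptable profile never displaces $w_j$, and the ledger record persists for settlement in any case. That check supplies detail the paper treats as ``immediate.''
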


The result does not imply that choosing a permutation of a mandatory complete
receiver list is hard. In fact, the sparse construction previously considered
for that purpose gives the opposite conclusion once settlement uses the
cumulative ledger.

\begin{proposition}
\label{prop:complete-order-correction}
Create one home pair \((m_j,w_j)\) for each \(j\), let \(w_j\) accept only
\(m_j\), and allow proposer \(m_j\) to rank any number of other receivers above
\(w_j\). Start with empty consideration and an empty ledger, give every
proposer display capacity one, and process every receiver column. Under
cumulative-ledger settlement, every home proposal is retained and every home
pair is matched, regardless of the receiver order.
\end{proposition}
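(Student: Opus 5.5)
The plan is to track a single home pair \((m_j,w_j)\) through the controller of Appendix~\ref{app:formal-verification}, and to show that its application enters the ledger \(K\) at the moment \(w_j\)'s column is processed, whatever the order. Settlement then clears \(G_T\), and receiver \(w_j\)'s restricted list contains only \(m_j\), which should force the home match. Throughout, the graph is the primitive mutually acceptable one. Because \(w_j\) accepts only \(m_j\), every other pair \((m_j,w)\) with \(w\neq w_j\) is acceptable only if \(w\) is some \(w_i\) that accepts \(m_j\), and that is impossible for \(i\neq j\). So each proposer has exactly one mutually acceptable partner, namely \(w_j\). Other receivers ranked above \(w_j\) only occupy his display slot; they can never appear in \(E_t\) or \(G_T\).

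The first step is to show that \(m_j\) applies to \(w_j\) at the moment \(w_j\) is recommended. Display capacity is one. The update \(U_{m_j}(C_{m_j},w_j)\) keeps the better of the current displayed receiver and \(w_j\) under \(m_j\)'s order. This is the delicate point. If \(m_j\) ranks some currently displayed \(w\) above \(w_j\), a literal reading of \(U_m\) would keep \(w\) and discard \(w_j\). I would handle this by reading ``process every receiver column'' as the recommendation of \(w\) to every proposer before the proposal step, as the Section~\ref{sec:sequential-implementation} description states. In addition, I would use step 4 of the controller: an unmatched active proposal whose receiver prefers her holder is removed from \(C_m\).

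I would argue that a displayed receiver \(w\neq w_j\) never yields a profitable recordable application. The reason is that \((m_j,w)\) is not mutually acceptable, so either it is never recorded, or, if \(m_j\) applies, \(w\) finds him unacceptable and the slot is cleared. Either way the display slot is empty or filled by \(w_j\) itself when \(w_j\) arrives. If that still fails under the literal \(U_m\), I would fall back on the application condition \(w'\succ_m\emptyset\) together with mutual acceptability, interpreting \(C_m\) as ranked only among acceptable options. Under that reading \(w_j\) is his only considerable option and is never displaced.

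The second step uses the retention rule. Once \(m_j\in K_{w_j}\), the record persists to \(K^T\), since step 4 never deletes from \(K\) and authorization stays valid. So \((m_j,w_j)\in G_T\) for every \(j\), regardless of \(\rho\). The third step applies Corollary~\ref{cor:offscreen-settlement}: settlement returns \(\operatorname{DA}_M(G_T)\). Since \(G_T\) is the perfect matching \(\{(m_j,w_j)\}\), DA matches every home pair. The result is order independent because \(G_T\) is, which is exactly Corollary~\ref{cor:terminal-graph-separation}.

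The main obstacle is the first step: I need to confirm that capacity-one displacement cannot prevent the home application from being recorded. I expect the resolution to rest on mutual acceptability restricting which displayed options can be profitable, together with the step 4 rejection cleanup.
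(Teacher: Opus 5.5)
Your route is the paper's: each nonhome receiver rejects \(m_j\) because she accepts only her own proposer, and that rejection clears his single display slot. The home application is therefore recorded when \(w_j\)'s column arrives, step 4 never deletes it from \(K\), and \(G_T\) is the perfect home matching. What is missing is the closing of your first step, which you leave open. Your branch ``either it is never recorded'' does not by itself give an empty slot, because the rejection step removes only receivers \(m_j\) has actually applied to.

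The fix is to work in the first pass, which processes every column, including \(w_j\)'s. Before \(w_j\)'s column, \(\widetilde\mu(m_j)=\emptyset\): the home edge is his only mutually acceptable edge, and \(w_j\notin C_{m_j}\) yet. Also \(m_j\notin K_w\) for each newly displayed \(w\), because consideration starts empty and he applies only to displayed receivers. Induction over the preceding columns then shows that \(C_{m_j}\) is either empty or holds one receiver he ranks below \(\emptyset\). Any displayed receiver he finds acceptable meets all three recording conditions, so he applies to her, she rejects him, and step 4 removes her from his display. Hence \(U_{m_j}\) puts \(w_j\) on screen at its column, and the home pair meets the recording conditions.

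Two statements should also be corrected. A displayed \(w\ne w_j\) that \(m_j\) finds acceptable does yield a recorded application. Only the mutual-acceptability filter keeps it out of \(E_t\) and \(G_T\), and it is the rejection, not a lack of profitability, that frees the slot. Second, drop the fallback that ranks \(C_m\) only over mutually acceptable options. It changes \(U_m\), and its conclusion that \(w_j\) is never displaced is false for the actual controller. A later column that \(m_j\) ranks above \(w_j\) can replace \(w_j\) on screen after the home application is recorded, and \(m_j\) can then end the online run unmatched. This is harmless for settlement only because of retention, which your second step already uses. It is exactly the case the paper's proof anticipates when it says a later profile may ``temporarily replace it on screen.''
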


\begin{proof}
Every nonhome receiver rejects \(m_j\), because she accepts only her own
proposer. Such a rejection removes that receiver from \(m_j\)'s active display.
When \(w_j\) is processed, proposer \(m_j\) therefore either already has his
home authorization or is unmatched and records it. Retention keeps
\((m_j,w_j)\) in the terminal ledger even if a later unacceptable profile
temporarily replaces it on screen. The valid terminal graph consequently
contains every home edge and no valid nonhome edge, so proposer DA matches all
home pairs.
\end{proof}

An active-screen stopping rule can exclude dormant home edges from execution
and make the assignment order dependent, without deleting their records.
The optional settlement in Corollary~\ref{cor:offscreen-settlement} instead
clears the full ledger. This construction cannot establish hardness of
complete ordering for that extension. The paper makes no complexity claim for the general
mandatory-complete-order problem or for ordering with expiration, revocation,
order-dependent responses, or dropout.

\section{Proofs and Extensions for Assignment-Aware Discovery}
\label{app:endogenous-recommendation-proofs}

\subsection{Adaptive attention gap}

\begin{proof}[Proof of Theorem~\ref{thm:adaptive-attention-gap}]
Proposition~\ref{prop:next-best-sufficiency} shows pathwise that correct
next-best responses reproduce the proposal trace of full-list proposer DA.
Because every pair is acceptable and the market is balanced, the terminal
matching is complete. The proposal bound is Wilson's coupon-collector bound,
and its asymptotic sharpness is due to Knuth
\citep{wilson1972analysis,knuth1976stable}. Equation
\eqref{eq:static-stable-deficit} and the nonadaptive upper bound are Theorems
5.3 and 5.2 of \citet{pittel2025constrained}.
\end{proof}

\subsection{Optimal adaptive discovery and implementation}

This benchmark takes rankings as known and treats compatibility as hidden;
the next-best benchmark instead assumes an exact preference oracle. They price
different information tasks. Let \(A\) be a public primitive graph with strict
endpoint rankings. Each edge
\(e\in A\) has an independently hidden compatibility bit
\(Z_e\sim\operatorname{Bernoulli}(p_e)\), where \(p_e\in(0,1)\). Querying that
bit costs \(c_e>0\). Let \(A_Z=\{e\in A:Z_e=1\}\) and define the target
\(\mu^\star(Z)=\operatorname{DA}_M(A_Z)\). The platform must return this target
for every realization. Let \(V(s)\) be the minimum expected remaining query
cost from transcript state \(s\) among zero-error policies. Verified compatible
pairs carry persistent execution authority; mere retention of a dormant
authorization would not meet this benchmark's premise.

At any transcript, let \(P\) be the verified-compatible edges and \(U\) the
unresolved edges. An edge in \(P\cup U\) is \emph{mutual first} when each
endpoint ranks the other above every other remaining potential partner. The
\emph{potential-core policy} repeatedly fixes a verified-compatible mutual-
first pair, queries an unresolved mutual-first pair, factors agent-disjoint
components, and stops when every completion of the unresolved bits has the same
DA matching. If none of those reductions applies, it queries an unresolved
edge minimizing
\begin{equation}
 c_e+(1-p_e)V(s_e^0)+p_eV(s_e^1),
 \label{eq:potential-core-bellman}
\end{equation}
and repeats the reductions in each answer branch. This Bellman step is needed
only inside a connected, target-varying residual component with no mutual-first
edge; such a component contains a directed preference cycle.

\begin{theorem}
\label{thm:potential-core-attention-optimality}
In the finite hidden-compatibility market above, the potential-core policy
terminates, returns \(\mu^\star(Z)\) for every compatibility realization, and
minimizes expected total query cost among all zero-error adaptive one-edge
query policies. If \(P_T\) is its terminal graph of verified compatible edges,
then
\[
  \operatorname{DA}_M(P_T)=\mu^\star(Z).
\]
Consequently, potential-core discovery followed by the optional query-neutral
settlement in Corollary~\ref{cor:offscreen-settlement} is
attention optimal among all zero-error joint discovery--execution controllers
using the same query language.
\end{theorem}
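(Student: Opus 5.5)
The proof splits into three parts: termination and correctness of each reduction, optimality of the reductions, and optimality of the residual Bellman step. The final claim about joint controllers then follows from Theorem~\ref{thm:attention-execution-separation}.

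\emph{Termination and correctness.} The state space is finite (transcripts over the finite edge set $A$). Every step either resolves an unresolved bit or permanently fixes a verified mutual-first pair, which removes its two agents. Hence the policy halts after at most $|A|$ queries. The stopping rule is that every completion of the unresolved bits yields the same DA matching, so at termination the output equals $\mu^\star(Z)$ for every consistent realization.

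For the identity $\operatorname{DA}_M(P_T)=\mu^\star(Z)$, I would argue component by component:
\begin{enumerate}
\item A fixed verified mutual-first pair is matched in DA on every completion, since it is a mutual top among remaining agents. Deleting it leaves the residual DA unchanged.
\item On termination the unresolved bits are irrelevant. In particular, the completion setting all of them to $0$ gives the target, and that completion's graph is exactly $P_T$ restricted to the relevant residual.
\end{enumerate}
So clearing $P_T$ reproduces the target.

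\emph{Optimality of the reductions.} I would show that each structural reduction is without loss for the exact value $V$, using the facts already verified in \texttt{PotentialCoreBellman.lean}:
\begin{enumerate}
\item An unresolved mutual-first edge is pivotal in every completion. If $Z_e=1$, proposer DA matches the pair. If $Z_e=0$, the pair is not matched and the rest of the market changes. Every zero-error tree must therefore query $e$ on every path, and front-loading an everywhere-pivotal query does not raise realized cost path by path.
\item Agent-disjoint components have independent bits and a product target, so the expected cost is additive and the optimal tree is a sequential product of optimal component trees.
\item Stopping when the target is constant is optimal, because further queries have positive cost.
\end{enumerate}
When no reduction applies, the policy takes the argmin in \eqref{eq:potential-core-bellman}, which is the standard finite-horizon Bellman recursion. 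Its optimality follows by backward induction over the number of unresolved edges, using the verified Bellman lemma. Repeated queries are dropped, as in the argument after \eqref{eq:target-attention-value}.

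\emph{Main obstacle.} The hardest step is pivotality of an unresolved mutual-first edge in the target-varying case. One must show that the DA outcomes differ when $Z_e$ flips, given that the residual component has not already become target-constant. My plan is to restrict attention to the current component, where the target varies by hypothesis. With $Z_e=1$ the pair is matched. With $Z_e=0$, if the outcomes agreed, the pair would be matched under a completion lacking the edge, which is impossible. So the pair's matched status itself distinguishes the two branches, and the query is necessary on every path reaching this state.

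\emph{Joint controllers.} For the final claim, let the target be $\mu^\star$, take $X$ to be the settlement map $\operatorname{DA}_M$, and note that the terminal states are target certified by the identity above. Theorem~\ref{thm:attention-execution-separation}(ii)--(iii), under the execution rights of Corollary~\ref{cor:offscreen-settlement}, then gives attention optimality among all zero-error joint discovery--execution controllers.
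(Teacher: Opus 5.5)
Your proposal is correct and follows the paper's proof essentially step for step. You front-load the everywhere-pivotal mutual-first query, use the component direct sum, stop at target constancy, take the Bellman argmin in the residual core, get $\operatorname{DA}_M(P_T)=\mu^\star(Z)$ from the all-zero completion, and conclude with Theorem~\ref{thm:attention-execution-separation}. The only step you assert rather than argue is the lower-bound half of the direct sum. The paper obtains it by fixing the bits outside one component, so that any joint zero-error tree induces a zero-error tree for that component's target whose conditional expected cost is at least the component's value. Also, your pivotality argument does not need the target-varying hypothesis, because flipping $Z_e$ alone already changes whether $e$ is matched.
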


The optimum is over this finite query language with known rankings, not over
all preference-learning or recommendation technologies. Exact Bellman search
is a benchmark, not a claim of computational scalability. Query-neutral
settlement preserves an attained discovery cost; it does not remove the cost
of obtaining information outside the benchmark's premises.

\begin{proof}[Proof of
Theorem~\ref{thm:potential-core-attention-optimality}]
We use induction on the number of unresolved bits. First consider an unresolved
mutual-first edge \(e=(m,w)\). If \(Z_e=1\), then every stable matching of the
residual market contains \(e\): otherwise both endpoints are unmatched or
matched to lower-ranked partners and \(e\) blocks. If \(Z_e=0\), the pair is
infeasible. Holding all other bits fixed while switching \(Z_e\) therefore
changes \(\mu^\star(Z)\), so every zero-error decision tree must query \(e\) on
every root-to-leaf path. Move that unavoidable query to the root and, in each
answer branch, follow the original tree with its later query of \(e\) deleted.
This preserves every other query and weakly lowers pathwise cost. A positive
answer fixes the pair and removes its endpoints; a negative answer deletes the
edge. These are operations 1 and 2 of the policy.

Next suppose the residual market separates into agent-disjoint components.
Proposer DA and its target then factor component by component, and the hidden
bits are independent across components. For a lower bound, fix all states
outside one component in any joint zero-error tree. The induced local tree must
identify that component's target, so its conditional expected local cost is at
least the component value. Summing over components gives the sum of their
values. Running optimal component trees sequentially attains that sum, proving
the direct-sum reduction in operation 3.

If the target is constant over all completions, operation 4 is correct and no
query can improve on its zero continuation cost. Otherwise, after the preceding
reductions, every legal first action is a query of some \(e\in U\). Independence
leaves conditional answer probabilities \(1-p_e\) and \(p_e\); hence its value
is exactly
\[
 c_e+(1-p_e)V(s_e^0)+p_eV(s_e^1).
\]
Minimizing over \(e\) is therefore the Bellman equation in
\eqref{eq:potential-core-bellman}. Each answer removes one unresolved bit, so
the induction proves finite termination, zero-error correctness, and expected-
cost optimality. For interpretation, orient each endpoint toward its favorite
remaining incident edge. In a connected residual component with no mutual-first
edge, finiteness forces a directed preference cycle; this is precisely the core
left to the Bellman step.

At a terminal transcript, setting every unresolved bit to zero is one admissible
completion. Because all completions have the same target, DA on the verified-
compatible graph \(P_T\) equals \(\mu^\star(Z)\). The final implementation claim
then follows from Theorems~\ref{thm:main} and
\ref{thm:attention-execution-separation}.
\end{proof}

\subsection{Finite computational audit}

The exact solver is audited independently of the proof. The script
\texttt{audit\_potential\_core\_bellman.py} checks every hidden-compatibility
realization against ordinary proposer DA and compares the reduced recursion
with unrestricted Bellman search. Across all 16 two-by-two strict-preference
profiles, 14 are resolved by forced-pair and component reductions without cycle
search; the two crossed profiles
reach the residual cycle and require three expected queries when every bit has
probability one half. In 24 seeded three-by-three profiles, 14 reach a
cycle-Bellman state. The reduced solver visits 3,313 states on average and at
most 17,937, compared with 19,683 nodes in the full transcript DAG. For two
independent crossed two-by-two blocks, component factorization visits 175
states rather than the 6,561-node complete transcript DAG. A single run of the
full audit took 12.5 seconds on an Apple M1 laptop.

These counts verify that the structural reductions are real and expose where
the exponential work occurs. They are not a scaling theorem: three-by-three
markets are small, wall-clock time is machine specific, and the paper does not
assume that the residual cycle core remains bounded in large applications.

\subsection{Blocker-directed recommendation}

\begin{proof}[Proof of Theorem~\ref{thm:endogenous-recommendation}]
If \(x_{mt}\neq\mu_t(m)\), certification implies
\(x_{mt}\succ_m\mu_t(m)\), while membership in the challenger menu implies
\(m\succ_{x_{mt}}\mu_t(x_{mt})\). The new edge is therefore a primitive
blocking pair. Since it was omitted from \(G_t\), graph memory grows strictly
at every nonterminal iteration and finiteness gives the iteration bound.

At termination there is no block inside \(G_T\), because
\(\mu^T=\operatorname{DA}_M(G_T)\). If an omitted pair blocked, its receiver
would belong to
the corresponding willing-challenger menu and its proposer would prefer that
receiver to his assignment. The certified favorite could not equal the
current assignment, contradicting termination. Full stability follows. A
Pareto improvement would contain a new edge strictly preferred by both
endpoints and hence a blocking pair, proving efficiency.
\end{proof}

\section{Application Identification and Supplementary Results}
\label{app:application-supplement}

\subsection{Proof of the replay-identification result}

\begin{proof}[Proof of Proposition~\ref{prop:complete-profile-replay}]
Fix an event and a policy seed. At each history, the policy selects a recorded
pair; Assumption~\ref{ass:empirical-policy-invariance} supplies that pair's
invariant ranking and acceptability response. Recursion therefore identifies
the terminal state and every stated outcome. Averaging over the known seed
distribution gives the stochastic-policy result.
\end{proof}

\subsection{Sample and opportunity-budget controls}

The raw file has 8,378 directed reports, of which 8,368 have usable partner
identifiers and 8,128 also have like scores. Reciprocal rating completeness
leaves 8,022 directed reports in 4,011 dyads. Of these, 686 have mutual
affirmative decisions; four additional mutual-yes dyads lack complete ratings
and are excluded. The 277 proposers and 274 receivers supply 268 short-side
places across events. Eight events are unbalanced. Seventeen observed graphs
omit at least one pair, so full observed support is not the complete market.
Eight zero like scores are retained as recorded, below positive scores,
despite the questionnaire's nominal 1--10 range. Excluding those observations
changes the pooled position slope from \(-0.154\) to \(-0.145\).

The codebook reports a cap of affirmative decisions for half the partners in
event 12. Excluding that event leaves mutual-consent match rates of 60.58
percent on the full DA path, 51.22 percent under random DA, and 56.69 percent
under same-graph maximum matching. These are equal-weight means over the
remaining 20 events using the original event-level simulations. Thus the
main comparison is not driven by the capped event; its exclusion does not
identify how consent would respond to a different exposure policy.

The main experiment fixes the total number of opportunities in each event.
To separate their allocation across proposers from partner selection, we also
hold each proposer's original DA-path degree fixed and sample that many
executable partners uniformly without replacement. Each of 1,000 seeded draws
is cleared by both DA and maximum matching. The comparison still conditions
on an ex post degree vector and is not a deployable equal-delivery-cost policy.

\begin{table}[!htbp]
\centering\small
\caption{Controlling each proposer's opportunity budget}
\label{tab:proposer-degree-control}
\begin{tabular}{@{}llrrr@{}}
\toprule
Graph & Fixed budget & Random DA & Maximum & Path advantage \\
 & & (\%) & (\%) & (pp) \\
\midrule
Mutual consent & Total edges & 50.96 & 56.23 & 9.80 \\
 & Each proposer & 57.36 & 59.90 & 3.39 \\
All rated partners & Total edges & 85.88 & 95.30 & 14.12 \\
 & Each proposer & 84.84 & 94.06 & 15.16 \\
\bottomrule
\end{tabular}
\begin{minipage}{0.96\textwidth}\footnotesize
\emph{Notes:} Rankings and primitive consent graphs are fixed. Each row averages
over draws within event and then equally across 21 events. Maximum matching
and DA use identical edges within each draw. Path advantage compares
full-graph DA with the row's random DA. The per-proposer control fixes both
total opportunities and their allocation on the proposing side, not receiver
degrees. Its mutual-consent path advantage has event-level standard error
1.27 points; uncertainty is conditional on the replay model.
\end{minipage}
\end{table}

\subsection{Ordinal sensitivity to encounter-position drift}
The position regression concerns cardinal report levels. A common position
component in a report need not change the underlying ordering of partners;
nor does a small average coefficient exclude heterogeneous sequence effects.
We therefore retain replay invariance as an assumption and use detrending as
a conditional sensitivity exercise, not as an identified correction.

For each directed report with a usable like score, define
\[
  \widetilde R_{ije}=R_{ije}-\widehat\rho_{g(i),-e}(P_{ije}-1/2).
\]
The pooled specification replaces the subscripted slope by the pooled
evaluator-and-partner-fixed-effect estimate, $-0.153690$. The
leave-one-event-out specification estimates separate proposer and receiver
slopes after excluding the entire evaluation event. Only the position
component is removed: evaluator and partner fixed effects are not subtracted
from the scores. The original attractiveness and identifier tie breakers are
then applied. Decisions and primitive edges are unchanged.

For each event and graph definition, $Q_e^0$ is the full-graph DA proposal
count under the original rankings. Each of 1,000 uniformly sampled graphs
contains exactly $Q_e^0$ edges, and the very same sampled graph is supplied
to every ranking specification. Both DA and maximum-cardinality matching are
evaluated on that graph. The event budget is not recalibrated after
detrending, so changes in full-graph proposal counts cannot change the
comparison's cost. The table gives equal-weight means across the 21 events
after averaging over draws within event. Maximum-cardinality rates are
unchanged by construction because only rankings change.

Among the 2,446 within-person comparisons between executable mutual-consent
partners, 702 compare equal like scores. Pooled detrending changes 323 of
these tied-score comparisons and none of the 1,744 unequal-score comparisons.
The side-specific leave-one-event-out correction gives the same counts.
In the all-partners graph, pooled detrending changes 6,737 of the 13,604
tied-score comparisons and none of the 47,267 unequal-score comparisons.
The replication also reports the original pooled slope interval's endpoints
and a small $-0.01$ first-to-last drift. These scenarios are sensitivity
calibrations, not a confidence set for a policy effect. The exercise holds
consent fixed and does not test its invariance.

The replication provides the slopes, event-level outcomes, reversal
denominators, Monte Carlo diagnostics, seed construction, and input hashes.
It reproduces the original equal-budget DA and maximum-match rates before
applying any correction. No alternative oracle protocol is used in this
table.

\begin{table}[!htbp]
\centering
\small
\caption{Matching sensitivity to encounter-position detrending}
\label{tab:rank-detrending-sensitivity}
\begin{tabular}{@{}lrrr@{}}
\hline\hline
Ranking specification & Random DA & Same-graph maximum & Stability gap \\
 & (\%) & (\%) & (pp) \\
\hline
\multicolumn{4}{@{}l}{\emph{A. Mutual consent}} \\
Original rankings & 50.96 & 56.23 & 5.27 \\
Pooled position detrending & 51.61 & 56.23 & 4.62 \\
Leave-one-event-out, by side & 51.61 & 56.23 & 4.62 \\
Small drift: $\rho=-0.01$ & 51.58 & 56.23 & 4.65 \\
\hline
\multicolumn{4}{@{}l}{\emph{B. All rated partners}} \\
Original rankings & 85.88 & 95.30 & 9.42 \\
Pooled position detrending & 85.75 & 95.30 & 9.55 \\
Leave-one-event-out, by side & 85.75 & 95.30 & 9.55 \\
Small drift: $\rho=-0.01$ & 85.74 & 95.30 & 9.56 \\
\hline
\end{tabular}
\begin{minipage}{0.96\textwidth}
\footnotesize\emph{Notes:} For each event and opportunity definition, the edge budget is the full-graph DA proposal count under the original rankings. The budget, the 1,000 sampled graphs, and the bilateral consent decisions are held fixed across ranking specifications. DA and maximum matching use exactly the same graph. Entries average first over draws and then equally over 21 events. Detrending subtracts the estimated normalized-position component from like scores before applying the original attractiveness and identifier tie breakers. The leave-one-event-out specification estimates separate proposer and receiver slopes outside the evaluation event. The small-drift row uses a first-to-last score change of $-0.01$. This table isolates ranking sensitivity; it does not estimate a change in consent or a deployment effect.
\end{minipage}
\end{table}

\section{Related Literature and the Scope of the Contribution}
\label{app:literature-design}

The positioning claim in the introduction concerns the interaction of scarce
attention, persistent applications, and stable clearing. Earlier work supplies
the allocation rules, random-market limits, search models, and elicitation
techniques. Table
\ref{tab:joint-design-literature} groups the closest published work by the
objects it makes endogenous.  It does not claim that recommendation and
matching have never been combined.  It shows instead that the established
papers solve different information and implementation problems. Several
already combine adaptive learning with stable clearing; the comparison turns
on the state and constraints, not the presence of multiple stages.

\clearpage
\begin{table}[H]
\centering
\caption{Discovery, persistent state, and clearing in the closest literatures}
\label{tab:joint-design-literature}
\small
\setlength{\tabcolsep}{4pt}
\renewcommand{\arraystretch}{1.15}
\begin{tabular}{>{\raggedright\arraybackslash}p{0.18\textwidth}
                >{\raggedright\arraybackslash}p{0.22\textwidth}
                >{\raggedright\arraybackslash}p{0.22\textwidth}
                >{\raggedright\arraybackslash}p{0.27\textwidth}}
\toprule
Literature & Discovery or recommendation & Persistent state & Allocation object \\
\midrule
Attention in product and platform markets
\citep{eliaz2011consideration,declippeleliazrozen2014inattention,
cusumano2024competing,pratvalletti2022attention,teh2024accessibility,
chen2026attention}
& Consumers allocate inspection; firms and platforms influence visibility,
marketing, prices, or market access.
& Information and awareness may persist, but there is no off-screen bilateral
authorization ledger.
& Consumer choice, trade, prices, or directed-search efficiency; no exclusive
stable allocation. \\

Limited consideration
\citep{masatlioglu2012revealed,manzini2014stochastic,cattaneo2020random}
& Attention is modeled or identified from menu choice.
& No bilateral opportunity state.
& No rival-partner allocation or clearing rule. \\

Directed search and assortment
\citep{kanoria2021facilitating,immorlica2023designing,ashlagi2022assortment,
rios2023assortment,rios2026dating,su2022rankings}
& The platform chooses visibility, meeting rates, menus, or rankings.
& Some dynamic models retain likes or behavioral history, but not an authorized
graph for stable reclearing.
& Decentralized applications or reciprocal likes; objective is match volume or
welfare rather than stability. \\

Information acquisition and stable clearing
\citep{ashlagi2020clearing,arteaga2022smart,allman2023interviewing,
stephenson2022assignment,agarwal2023stable,grenet2022preference}
& Signals, search guidance, interviews, or assignment feedback determine the
information supplied to a fixed matching rule.
& Information or the interview graph carries into final clearing.
& Stable matching is reached through information acquisition, which can be
interleaved with clearing. Capacity-driven dormancy of valid applications is
the distinct implementation restriction studied here. \\

Prediction and repeated matching
\citep{ionescu2023strategic}
& Predicted outcomes help form preferences.
& Match outcomes update prediction across cohorts.
& A generic matching mechanism assigns each cohort; scarce within-market
attention and terminal stability are not the object. \\

This paper
& Assignment-aware discovery creates opportunities; a same-graph benchmark
separates their scarcity from the additional cost of stability.
& Valid applications survive off screen and reengage when eligible again.
& Online LA-DA implements proposer DA on final consideration; clearing the
entire retained ledger requires extra execution rights. A short TTC
extension considers the corresponding exchange problem. \\
\bottomrule
\end{tabular}
\end{table}

The first row supplies the broader economic framing. In
\citet{declippeleliazrozen2014inattention}, a consumer can inspect at most a
fixed number of product markets, and firms adjust prices because attention to
one market displaces attention from another. In
\citet{teh2024accessibility}, a platform chooses a meeting technology that
limits which buyers observe each seller, and imperfect accessibility can be
efficient in directed-search equilibrium. Platform attention is also traded or
allocated in \citet{pratvalletti2022attention} and \citet{chen2026attention}.
These results establish that limited capacity and endogenous visibility are
important market primitives. The present paper adds the no-blocking
requirement of exclusive matching and separates opportunity creation from
the execution of valid accumulated applications. The TTC extension concerns
a different allocation criterion and does not establish a common resource
frontier across mechanisms.

The comparison also separates two objectives that are sometimes conflated.
The market-wide ranking problem of \citet{su2022rankings} internalizes
competition for receiver capacity, but its welfare-maximizing ranking need not
be stable.  The dynamic dating policies of \citet{rios2023assortment} and
\citet{rios2026dating} integrate exposure, sequencing, and interaction history,
but optimize expected matches.  Conversely, the recommendation protocol of
\citet{ashlagi2020clearing} obtains a stable matching with low communication,
and updates qualification thresholds as applications arrive during DA.
Assignment-responsive learning is therefore an established design margin.
\citet{shi2022optimal} compares matchmaking architectures through their
communication requirements. Our distinction is the separation of a bounded
current display from valid accumulated authority, not the first integration
of recommendation and matching.

Persistence also has direct predecessors. The cumulative-offer process of
\citet{hatfield2005contracts} lets receivers choose from accumulated offers;
remembering applications is not itself new. The difference here is the
separate capacity-constrained consideration set, which can make an edge
ineligible while its authorization remains valid. Online LA-DA respects that
eligibility constraint and reengages the record when the pair becomes eligible
again. A cumulative-offer implementation augmented with the same eligibility
updates and reengagement is a legitimate comparator. Clearing every recorded
offer regardless of eligibility instead uses the additional rights in
Corollary~\ref{cor:offscreen-settlement}.

There is a second close comparison beyond matching protocols.
\citet{mackenzie2022menu} study dynamic menu mechanisms and strategic
implementation across several allocation environments. Their generality
is a foundation for expressing the original mechanism as sequential choices,
not a property first established here. \citet{peters2022elicitation} studies
partial-preference elicitation of necessarily optimal matchings, with explicit
query models and competitive guarantees. Identifying an allocation without
recovering every rank is therefore also established. Our construction uses
small attended comparisons in the TTC extension and keeps valid execution
authority separate from that evidence.
Neither binary tournaments nor caching alone is the claimed economic object.
The theorem's truthful-response premise is not a substitute for the incentive
conditions studied in strategic implementation.

Housing TTC and school-choice TTC also have different economic domains.
The endowed-object core theorem is due to
\citet{shapley1974cores,roth1977domination}; the capacity-and-priority
school mechanism is due to \citet{abdulkadiroglu2003school}.
LA-TTC preserves the selected reference rule rather than changing its welfare
criterion. The appendix school-choice replay uses the public experimental setting
of \citet{chen2006school} to measure elicitation at an unchanged allocation,
not to claim a new experimental treatment.

\citet{grenet2022preference} are the closest institutional bridge. Germany's
DoSV implements early Gale--Shapley stages in real time, lets students retain
multiple nonexpiring offers, and then completes centralized allocation. They
show quasi-experimentally that offer timing changes costly preference discovery
and propose batching offers. Our fixed-preference theorem asks a complementary
question: when current attention capacity changes after an application has
already been authorized, what state must survive and how can that accumulated
state be cleared exactly? Their event-level application, offer, revision, and
final-assignment records would permit the two margins---preference learning and
persistent authorization---to be studied jointly.

For the opportunity frontier, \citet[Theorem 11]{arnosti2015short} already gives
the stable-side fixed point used here. Theorem
\ref{thm:attention-cost-stability} combines that result with the Poisson
specialization of \citet{bordenave2013matchings}. Two further papers extend the
random-market analysis in economically important directions.
\citet{kanoria2025competition} vary connectivity and small market imbalance to
characterize competition, welfare, and unmatched agents under stability.
\citet{potukuchi2025unbalanced} give a sharp partial-list threshold for perfect
stable matching in unbalanced markets. Theorem
\ref{thm:attention-cost-stability} instead holds the realized graph fixed and
uses maximum matching on that same graph as the feasibility counterfactual.
This isolates the stability tax but does not extend the two papers' imbalance
results.

\citet{agarwal2023stable} ask which proposals should be made and construct
short lists that retain stable partners under correlated utilities. Their
lists are selected through an initial communication phase. Our budgeted
planning result makes a narrower point: heterogeneous delivery costs already
embed Knapsack before matching interactions arise. It does not claim that a
mandatory complete receiver order is hard; cumulative authorization can make
such an order allocation irrelevant in the sparse baseline. By contrast,
\citet{kanoria2021facilitating} endogenize costly strategic search and platform
restrictions in a decentralized dynamic equilibrium. Our adaptive controller
directs queries from the current tentative assignment and uses LA-DA to
implement a certified stable outcome. These differences identify the paper's
contribution without claiming that earlier work separates search from matching
altogether.

\section{External Evidence on the Design Margins}
\label{app:external-evidence}

This appendix reports construction and robustness details for the Chilean
exercise and benchmarks the stages of the attention funnel in other settings.
These sources document sparse consideration, bilateral conversion, feedback,
and retained reengagement as operational margins, not a LA-DA treatment effect.

\subsection{School choice and retained reengagement}

Section~\ref{sec:platform-evidence} uses Chile's public 2024 regular-stage SAE
files \citep{mineduc2025sae}. We exclude schools automatically added to preserve
enrollment continuity when measuring submitted list length. The local menu
consists of schools reporting an initial vacancy that are within 5 km of the
applicant's noisy geocode and compatible with the applicant's grade and the
school's gender restriction. Menu medians condition on 428,488 applicants
with usable coordinates and a positive local menu; 11,419 geocoded applicants
with empty menus are excluded. It is an institutional opportunity benchmark,
not a set of acceptable or preferred omitted schools.

The retained sample contains applicants who accepted an initial placement and
elected to wait for a better submitted choice, plus initially unassigned
applicants who remained in the process. An accepted gain is either a move to a
different school at a strictly higher-ranked course option, a move from an
automatic continuity course to a voluntarily ranked course at another school,
or a newly accepted ranked placement for an initially unassigned applicant.
We join each assignment to its actual course, not the applicant's minimum
rank at that school. Retained options count distinct other schools with a
voluntary course above that assignment, or all other voluntarily ranked
schools when the initial course is automatic continuity. Within-school course
improvements are excluded. Acceptance codes establish recorded acceptance, not
whether a fresh user action was needed: every applicant who initially accepted
and waited has code 1 at the later stage, including unchanged placements.
The public family-link file
identifies 12,173 retained-stage applicants in administrative family blocks.
Excluding them leaves 8,644 accepted gains among 105,236 applicants, an 8.2
percent rate compared with 8.1 percent in the full retained sample.

The separate wait-list capacity file changes reported vacancies for only 34 of
79,168 course records: 22 expansions add 22 vacancies and 12 contractions
remove 150. These revisions are neither large enough nor plausibly exogenous
enough to support a natural-experiment design. The exercise therefore documents
the use and consequences of persistent rankings; it does not identify the
assignment a system without retained rankings would have produced.

Germany's DoSV provides a sharper prospective setting for separating state
persistence from preference formation. \citet{grenet2022preference} observe 34
days of changing rank-order lists, program offers, student retention and
acceptance decisions, and final assignment. Offers do not expire during the
dynamic phase. Their quasi-experimental evidence shows that early offers alter
students' information acquisition and acceptance, so a fixed-preference replay
would be inappropriate. Event histories could instead measure which retained
offers become pivotal while modeling preference discovery jointly. We do not
analyze their restricted data here.

A randomized laboratory result isolates the execution margin.
\citet{stephenson2022assignment} finds that real-time provisional-assignment
feedback in DA sessions raises equilibrium assignments from 69.1 to 98.8
percent and the share without justified envy from 52.1 to 95.4 percent, while
truthful reporting is essentially unchanged. With three sessions per arm,
the two-sided exact randomization \(p\)-value is 0.10 for each allocation
outcome and 1.00 for truthful reporting. The randomized contrast concerns
feedback, but six sessions provide limited precision; unchanged reporting
is not an equivalence finding.

\subsection{Dating and physician funnels}

Physician matching displays the attention funnel at the application stage. In
the Canadian residency market, applications per applicant rise from 21.5 in
2020 to 27.8 in 2025, while the application-level interview-offer rate falls
from 43.6 to 37.5 percent and the share obtaining any interview rises from 81.5
to 90.5 percent \citep{carms2025interviews}. These aggregates do not identify
congestion or reveal complete rankings. They motivate linked data separating
application, interview, rank, and match stages.

Two dating datasets distinguish profile delivery from executable attention.
The curated-dating records of \citet{eastwick2025young} contain 4,542 arranged
dates. Among the 4,532 dates with two reported decisions, 35.3 percent produce mutual
interest in another date and 44.5 percent produce interest from only one side.
The eHarmony archive of \citet{dinh2022computational} contains 127,699 dated
algorithm-delivered dyads. Within user and delivery date, one additional
recommendation in the preceding six days is associated with a 0.375
percentage-point reduction in per-profile communication initiation. This
regression equally weights 191,834 user-date observations from 50,983 users
observed on at least two dates, with user and calendar-date fixed effects.
A negative
future-load placebo indicates selected timing, so this gradient is descriptive
rather than causal.

Together, the sources map delivery to processing, processing to bilateral
consent, and executable opportunity to stable assignment. Only the speed-dating
data reveal bilateral rankings throughout the primitive dating market needed
for the paper's structural matching counterfactual. Chile records submitted
rankings but not preferences over omitted schools; the remaining sources
identify earlier conversion or reengagement margins.

\section{Limited-Attention Exchange}
\label{app:la-ttc-extension}

\subsection{Exactness and allocation properties}

For Proposition~\ref{prop:la-ttc-extension}, write
\(B_i=\{j:(i,j)\in G\}\) and let \(R\) be the set of remaining owners.
Owner \(i\)'s choice menu is \(B_i\cap R\), which contains \(i\) until
that owner leaves. Enumerate this finite menu independently of unobserved
tastes. A binary tournament retains an incumbent and compares it with the
next alternative. After each comparison, the incumbent is best among the
alternatives processed so far. Strict transitivity establishes the true best
element of the menu. Reusing a sound recorded comparison leaves this
invariant unchanged. A singleton requires no query.

Thus every LA-TTC pointer agrees with the corresponding reference TTC
pointer. A finite total directed graph contains a cycle. Removing its owners
and endowed objects leaves precisely the reference residual market, and
induction gives the same allocation. At least one owner leaves each round,
so there are at most \(|N|\) rounds. If \(d_i=|B_i|\), there are at most
\(\sum_i\binom{d_i}{2}\) new successful comparisons. This bound excludes
failed contacts, elapsed time, and permission acquisition. It is not a
minimum-attention claim.

To apply the classical housing result, extend each owner's ranking by
placing every object outside \(B_i\) below the endowment, while preserving
the order on \(B_i\). Because the endowment is always available until its
owner leaves, ordinary TTC never selects an excluded object. The extended
and restricted procedures coincide. Classical TTC is individually rational
and in the strong core \citep{shapley1974cores,roth1977domination}. A
coalition that could weakly improve every member and strictly improve one
using its own endowed objects and \(G\)-eligible assignments would also
block the extended allocation. No such coalition exists. Taking the grand
coalition yields Pareto efficiency on \(G\).

For settlement, a trading owner's permission must authorize both receipt of
the assigned object and release of the endowment within the complete
exchange. A standing mandate can supply this permission earlier. Refusal,
expiry, or withdrawal prevents the corresponding transfer; it is not
overridden by the stored ranking. The institution may instead update the
reference economy and compute a new allocation.

\subsection{Why transfers must wait}

Three owners have fixed preferences
\[
 0:1\succ0\succ2,\qquad
 1:2\succ0\succ1,\qquad
 2:0\succ2\succ1.
\]
Self-edges and \((0,1),(1,0)\) support TTC allocation \((1,0,2)\).
Adding \((1,2),(2,0)\) changes it to \((1,2,0)\), leaving owner 0's
assignment unchanged and strictly improving owners 1 and 2. If the initial
trades are irreversible and the participants exit, retaining their earlier
decisions cannot implement the later reference allocation. This is not a
failure of TTC on either fixed graph.

A sufficient early-execution condition is available. Suppose
\(L\subseteq G\subseteq U\) bounds every admissible final graph. If a
cycle uses guaranteed edges in \(L\), its permissions remain valid, and
each pointed-to object is best among all remaining possibilities in \(U\),
the same cycle occurs in every completion. Standard TTC cycle-removal order
invariance permits removing it first. A record of past consent is not a
guarantee against actual withdrawal.

\subsection{Formal-verification scope}

The comparison construction and concrete housing TTC program use these
files and declarations:
\begin{verbatim}
ChoiceMechanismCompiler.lean
ChoiceDrivenTTC.lean:
  program_run
  tree_exact
  tree_correct_and_bounded
\end{verbatim}
They verify the connection to the cycle kernel, allocation exactness, and a
finite comparison bound. The sharper
graph-dependent bound and strong-core transfer above are mathematical
proofs. Valid exchange authority is an economic input, not inferred by Lean
from a comparison. The school-capacity implementation below has separate
mathematical arguments and executable tests; the housing core theorem is
not a school-priority stability theorem.

\subsection{A school-choice query comparison}
\label{sec:school-elicitation}

The public experimental data of \citet{chen2006school} permit a same-allocation
comparison of preference-query procedures. The recorded capacities,
priorities, lotteries, submitted rankings, and allocations reproduce all
432 participant-session assignments without a mismatch. The counterfactual
uses the experiment's induced strict preferences to simulate truthful
binary responses. The learner is not supplied with those complete rankings.

Both procedures retain comparisons and transitive implications. One requests
the best choices reached by the reference school DA or TTC rule. The other
interleaves merge sort across students but stops as soon as the same
sufficient allocation certificate succeeds. Neither procedure is forced to
learn a full ranking. Both know initially that every school is preferable
to the outside option. The comparison changes elicitation order, not the
allocation rule or permission regime.

\begin{table}[H]
\centering
\caption{Comparisons needed to recover the same school allocation}
\label{tab:school-elicitation}
\begin{tabular}{llrrr}
\toprule
Rule & Preference design & Rule-directed & Early-stop sorting & Full sorting \\
\midrule
TTC & Designed & 375.0 & 396.7 & 433.0 \\
TTC & Random & 241.7 & 377.3 & 462.0 \\
TTC & Pooled & 308.3 & 387.0 & 447.5 \\
DA & Designed & 437.3 & 407.7 & 433.0 \\
DA & Random & 279.5 & 415.7 & 462.0 \\
DA & Pooled & 358.4 & 411.7 & 447.5 \\
\bottomrule
\end{tabular}
\begin{minipage}{0.96\linewidth}
\footnotesize Notes: Mean distinct binary comparisons per 36-student,
seven-school market. Each preference design is evaluated at six recorded
lotteries. Every completed elicitation returns its reference allocation.
These are induced-response counterfactuals, not measured attention treatments.
\end{minipage}
\end{table}

Rule-directed TTC uses 20.3 percent fewer comparisons in the pooled sample
than early-stopping sorting, measured as the ratio of mean counts. The
ranking reverses for DA in the designed preference profile. The exercise
therefore illustrates attainable savings under a specified response model,
not uniform query dominance or an estimated effect of persistent authority.

\subsection{Replication and query accounting}
\label{app:school-elicitation-audit}

The source for Section~\ref{sec:school-elicitation} is the author-hosted
ASCII release for \citet{chen2006school}. It contains 432 participant-session
records, not 432 independently identified people: the original study notes
two repeat participants. The replication checks the source hash, within-session
identifiers, complete ranking permutations, induced values, school capacities,
lotteries, and allocations before computing the counterfactuals.

Every school places district students first and then uses the recorded lottery
position. The ordered lottery list is inverted before constructing priorities;
priorities are not fitted to observed assignments. Running the observed
Boston, DA, or TTC rule on submitted rankings reproduces all 432 recorded
assignments. The counterfactual then replaces submitted rankings with the
experiment's induced preferences and holds the institutional inputs fixed.
Both allocation rules are replayed on all twelve recorded lotteries, regardless
of the rule used in the original session.

The learner sees only a binary response interface and common prior knowledge
that every school is preferable to outside. The evaluator's access to full
induced rankings is used to simulate responses and check exactness, not to
choose the learner's questions. Both arms remember comparisons and their
transitive implications. The mechanism-directed arm requests only the choices
reached by its school DA or TTC computation. The sorting arm uses merge sort,
interleaved round-robin across students, and stops when the same sufficient
certificate succeeds. Its first certified prefix is checked at the boundary;
a fixture checks every prefix. Certificate failure is not an identification
lower bound against arbitrary alternative elicitation algorithms.

The twelve session means in Table~\ref{tab:school-elicitation} are six lottery
realizations in each of two fixed preference designs. The pooled TTC reduction
is the ratio of average query counts, not an average causal effect or an
estimate based on twelve independently sampled preference profiles. No
standard errors are assigned to these deterministic counterfactuals.
All complete runs reproduce the corresponding induced-preference allocation.
School TTC can have justified envy; the audit records it without imposing
DA's priority-stability criterion.

The aggregate output is stored in
\path{theory/computation/limited_attention_school_choice/chen_sonmez_2006_audit.json}.
From the repository root, the local source and query audit is reproduced by
\begin{verbatim}
python3 -B empirical/src/audit_school_choice.py --self-test --check
python3 -B -m unittest empirical.tests.test_limited_attention_school_choice
\end{verbatim}
The adjacent replication README supplies the public download URL and hash.
The raw release is not redistributed with the paper package. The audit uses
no fitted attention process and does not observe responses to LA-DA or
LA-TTC. Learning burden, actual processing, and authority to settle remain
distinct objects.

\end{document}